\newcommand{\bebecomes}{\mathrel{::=}}
\newcommand{\alternative}{~|~}
\newcommand{\ivr}{\psi}
\newcommand{\durvar}{d}
\newcommand{\eren}[3]{\urename[{#1}]{#2}{#3}}
\newcommand{\earen}[2]{\urename[{#1}]{\boundvars{#2}}{\vec{y}}}
\newcommand{\rref}[2][]{\prettyref{#2}}
\newcommand{\allgame}{\textsf{Game}}
\newcommand{\allstate}{\sty}
\renewcommand*{\freevars}[1]{\mathop{\text{FV}}(#1)}
\renewcommand*{\boundvars}[1]{\mathop{\text{BV}}(#1)}
\newcommand{\rank}[1]{\mathfrak{R}(#1)}
\newcommand{\logname}{\text{\upshape\textsf{dH{\kern-0.05em}L}}\xspace}
\newcommand{\lequiv}{\leftrightarrow}
\newcommand{\GL}{\textsf{GL}\xspace}
\newcommand{\CGL}{\textsf{CGL}\xspace}
\newcommand{\CdGL}{\textsf{CdGL}\xspace}
\newcommand{\ModelPlex}{ModelPlex\xspace}
\newcommand{\veps}{\varepsilon}
\newcommand{\liso}{\mathrel{\cong}\xspace}
\newcommand{\diso}{\mathrel{\cong_{\dbox{\cdot}{}}}}
\newcommand{\seq}[2]{#1 \vdash #2}
\newcommand{\proves}[3]{#1\allowbreak\vdash #2 \allowbreak \mathop{:} #3}
\newcommand{\mproves}[3]{#1\allowbreak\vdash #2 \allowbreak \mathop{:} #3}
\newcommand{\eProjL}[1]{\lstrike\pi_1{#1}\rstrike}
\newcommand{\eProjR}[1]{\lstrike\pi_2{#1}\rstrike}
\newcommand{\edprojL}[1]{\langle\pi_1{#1}\rangle}
\newcommand{\edprojR}[1]{\langle\pi_2{#1}\rangle}
\newcommand{\ebprojL}[1]{[\pi_1{#1}]}
\newcommand{\ebprojR}[1]{[\pi_2{#1}]}
\newcommand{\edOpencons}{\langle}
\newcommand{\edClosecons}{\rangle}
\newcommand{\edSepcons}{,}
\newcommand{\edcons}[2]{\edOpencons{#1}\edSepcons{#2}\edClosecons}
\newcommand{\ebOpencons}{[}
\newcommand{\ebSepcons}{,}
\newcommand{\ebClosecons}{]}
\newcommand{\ebcons}[2]{\ebOpencons#1\ebSepcons#2\ebClosecons}
\newcommand{\eCons}[2]{\lstrike{#1,#2}\rstrike}
\newcommand{\pmodality}[2]{\llensehat{#1}\rlensehat{#2}}
\newcommand{\einjL}[1]{\ell \cdot #1}
\newcommand{\einjR}[1]{r \cdot #1}
\newcommand{\kwcase}{\textsf{case}}
\newcommand{\ecaseHead}[1]{\langle\kwcase\textrm{ }#1{\text{\textrm{ of }}}}
\newcommand{\ecaseLeft}[2]{#1\Rightarrow~#2}
\newcommand{\ecaseRight}[2]{~|~#1\Rightarrow~#2}
\newcommand{\ecaseEnd}{\rangle}
\newcommand{\ecasegen}[5]{\ecaseHead{#1}\allowbreak\ecaseLeft{#2}{#3}\allowbreak\ecaseRight{#4}{#5}\ecaseEnd}
\newcommand{\ecase}[3]{\ecasegen{#1}{\ell}{#2}{r}{#3}}
\newcommand{\edcase}[3]{\ecase{#1}{#2}{#3}}
\newcommand{\ercase}[3]{\langle\textsf{case}_*\ #1\text{ of }\pvs\Rightarrow~#2~|~\pvg\Rightarrow#3\rangle}
\newcommand{\edinjL}[1]{\langle\ell \cdot #1\rangle}
\newcommand{\edinjR}[1]{\langle r \cdot #1\rangle}
\newcommand{\kwrep}{\textsf{rep}}
\newcommand{\erep}[3]{{#1}\text{ }\kwrep\text{ }#3.~{#2}}
\newcommand{\eapp}[2]{#1\ #2}
\newcommand{\elamu}[2]{\lambda #1. \:\, #2}
\newcommand{\elam}[3]{\lambda #1:#2. \:\, #3}
\newcommand{\eplam}[2]{\elam{\pvx}{#1}{#2}}
\newcommand{\etlam}[2]{\elam{x}{#1}{#2}}
\newcommand{\ebseq}[1]{[\iota~#1]}
\newcommand{\edseq}[1]{\langle\iota~#1\rangle}
\newcommand{\eSeq}[1]{\lstrike\iota~#1\rstrike}
\newcommand{\eSwap}[1]{\lstrike\textsf{yield }#1\rstrike}
\newcommand{\emonInfix}[1]{{\circ_{#1}}}
\newcommand{\emon}[3]{{#1} \emonInfix{#3} {#2}}
\newcommand{\eQE}[2]{\textsf{FO}[#1](#2)}
\newcommand{\esplit}[3]{\textsf{split }{[#1,#2]}~#3}
\newcommand{\kwstop}{\textsf{stop}}
\newcommand{\kwgo}{\textsf{go}}
\newcommand{\estop}[1]{\langle\kwstop\ #1\rangle}
\newcommand{\ego}[1]{\langle\kwgo\ #1\rangle}
\newcommand{\eghost}[4]{\textsf{Ghost}[#1=#2](#3.~#4)}
\newcommand{\eAsgn}[4]{\lstrike\humod{#2}{\eren{f}{#2}{#1}}\text{ in }#3.~#4\rstrike}
\newcommand{\eAsgneq}[4]{\eAsgn{#1}{#2}{#3}{#4}}
\newcommand{\etconsgen}[5]{\langle{\eren{#4}{#1}{#2}}~{{:}{*}}~#3.~#5\rangle}
\newcommand{\etcons}[2]{\etconsgen{x}{y}{\pvx}{#1}{#2}}
\newcommand{\ebasgneq}[4]{[\humod{#2}{\eren{f}{#2}{#1}}\text{ in }#3.~#4]}
\newcommand{\edasgneq}[4]{\langle\humod{#2}{\eren{f}{#2}{#1}}\text{ in }#3.~#4\rangle}
\newcommand{\eunpack}[2]{\textsf{unpack}(#1,\pvx y.~#2)}
\newcommand{\efpgen}[5]{\textit{FP}(#1, #2.~#3, #4.~#5)}
\newcommand{\efp}[3]{\efpgen{#1}{\pvs}{#2}{\pvg}{#3}}
\newcommand{\eas}[4]{AS(#1,#2,#3,#4)}
\newcommand{\eds}[2]{DS(#1,#2)}
\newcommand{\edw}[1]{DW(#1)}
\newcommand{\edc}[2]{DC(#1,#2)}
\newcommand{\edg}[4]{DG(#1,#2,#3,#4)}
\newcommand{\edi}[2]{DI(#1,#2)}
\newcommand{\met}{\ensuremath{\mathcal{M}}}
\newcommand{\lsysa}{{\boldsymbol{\alpha}}}
\newcommand{\metz}{{\boldsymbol{0}}}
\newcommand{\metgr}{\boldsymbol{\succ}}
\newcommand{\conv}{\varphi\xspace}
\newcommand{\G}{\Gamma}
\newcommand{\Gemp}{\cdot}
\newcommand{\eforHead}[4]{\textsf{for}(#1:\conv(\met)={#2};#3;{#4})} 
\newcommand{\eforBody}[1]{\{#1\}}
\newcommand{\eforgen}[5]{\eforHead{#1}{#2}{#3}{#4}\eforBody{#5}}
\newcommand{\efor}[2]{\eforgen{\pvx}{#1}{\pvy}{#2}{\alpha}}
\newcommand{\esub}[3]{[{#3}/{#2}]{#1}}
\newcommand{\tsub}[3]{\subst[#1]{#2}{#3}}
\newcommand{\pinline}[2][L]{{#1}\:\textsf{mod}\:{#2}}
\newcommand{\mycase}{\textbf{Case}\xspace}
\newcommand{\elem}[2]{\textsf{Dec}[#1](#2)}
\newcommand{\pvx}{p}
\newcommand{\pvy}{q}
\newcommand{\pvl}{\ell}
\newcommand{\pvr}{r}
\newcommand{\pvs}{s}
\newcommand{\pvg}{g}
\newcommand{\btt}{\text{\rm\tt{tt}}}
\newcommand{\bff}{\text{\rm\tt{ff}}}
\newcommand{\squarequotes}[2][0]{{%
  {\vphantom{#2}}^{\ulcorner}\kern-\scriptspace
  \mspace{-#1mu}%
  {{}#2}^{\urcorner}%
}}
\newcommand{\ftrans}[1]{\squarequotes[-0.5]{#1}}
\newcommand{\atrans}[1]{\langle\!\langle{#1}\rangle\!\rangle}
\newcommand{\dtrans}[1]{[[{#1}]]}
\newcommand{\alltype}{\mathbb{T}}
\newcommand{\typei}[1]{\alltype_{#1}}
\newcommand{\lindty}[2]{\mu{#1}.\,{#2}}
\newcommand{\lcoty}[2]{\rho{#1}.\,{#2}}
\newcommand{\pity}[3]{\Pi #1\mathrel{:}#2.\,#3}
\newcommand{\sity}[3]{\Sigma #1\mathrel{:}#2.\,#3}
\newcommand{\xty}{\reals\xspace}
\newcommand{\sty}{\m{\mathfrak{S}}\xspace}
\newcommand{\lget}[2]{{#1}\ #2} 
\newcommand{\lset}[3]{\textsf{set}\ {#1}\ #2\ #3}
\newcommand{\kwprod}{\texttt{*}}
\newcommand{\kwsum}{\texttt{+}}
\newcommand{\sprod}[2]{#1\,\kwprod\,#2}
\newcommand{\sfun}[2]{#1 \to #2}
\newcommand{\vdot}{{\boldsymbol{\cdot}}}
\newcommand{\solves}[4]{#1,#2,#3 \vDash #4}
\newcommand{\sprojL}[1]{\pi_{L}#1}
\newcommand{\sprojR}[1]{\pi_{R}#1}
\newcommand{\scase}[1]{\textsf{case}\ #1\ \textsf{of}}
\newcommand{\slbranch}[2]{~ #1 \Rightarrow #2}
\newcommand{\srbranch}[2]{|~ #1 \Rightarrow #2}
\newcommand{\eskip}{\textsf{skip}}
\newcommand{\kwaleq}{{\leq}_{\ddiamond{}{}}}
\newcommand{\kwdleq}{{\leq}_{\dbox{\,}{}}}
\newcommand{\aleq}[3][{~}]{#2 \mathrel{\kwaleq^{\,#1}} #3}
\newcommand{\dleq}[3][{~}]{#2 \mathrel{\kwdleq^{\,#1}} #3}
\let\save@mathaccent\mathaccent
\newcommand*\if@single[3]{%
  \setbox0\hbox{${\mathaccent"0362{#1}}^H$}%
  \setbox2\hbox{${\mathaccent"0362{\kern0pt#1}}^H$}%
  \ifdim\ht0=\ht2 #3\else #2\fi
  }
\newcommand*\rel@kern[1]{\kern#1\dimexpr\macc@kerna}
\newcommand*\widebar[1]{\@ifnextchar^{{\wide@bar{#1}{0}}}{\wide@bar{#1}{1}}}
\newcommand*\wide@bar[2]{\if@single{#1}{\wide@bar@{#1}{#2}{1}}{\wide@bar@{#1}{#2}{2}}}
\newcommand*\wide@bar@[3]{%
  \begingroup
  \def\mathaccent##1##2{%
    \let\mathaccent\save@mathaccent
    \if#32 \let\macc@nucleus\first@char \fi
    \setbox\z@\hbox{$\macc@style{\macc@nucleus}_{}$}%
    \setbox\tw@\hbox{$\macc@style{\macc@nucleus}{}_{}$}%
    \dimen@\wd\tw@
    \advance\dimen@-\wd\z@
    \divide\dimen@ 3
    \@tempdima\wd\tw@
    \advance\@tempdima-\scriptspace
    \divide\@tempdima 10
    \advance\dimen@-\@tempdima
    \ifdim\dimen@>\z@ \dimen@0pt\fi
    \rel@kern{0.6}\kern-\dimen@
    \if#31
      \overline{\rel@kern{-0.6}\kern\dimen@\macc@nucleus\rel@kern{0.4}\kern\dimen@}%
      \advance\dimen@0.4\dimexpr\macc@kerna
      \let\final@kern#2%
      \ifdim\dimen@<\z@ \let\final@kern1\fi
      \if\final@kern1 \kern-\dimen@\fi
    \else
      \overline{\rel@kern{-0.6}\kern\dimen@#1}%
    \fi
  }%
  \macc@depth\@ne
  \let\math@bgroup\@empty \let\math@egroup\macc@set@skewchar
  \mathsurround\z@ \frozen@everymath{\mathgroup\macc@group\relax}%
  \macc@set@skewchar\relax
  \let\mathaccentV\macc@nested@a
  \if#31
    \macc@nested@a\relax111{#1}%
  \else
    \def\gobble@till@marker##1\endmarker{}%
    \futurelet\first@char\gobble@till@marker#1\endmarker
    \ifcat\noexpand\first@char A\else
      \def\first@char{}%
    \fi
    \macc@nested@a\relax111{\first@char}%
  \fi
  \endgroup
}
\newcommand{\dRL}{{\text{\upshape\textsf{d{\kern-0.1em}R}{\kern-0.1em}$\mathcal{L}$}}\xspace}
\let\citet\cite
\title{Refining Constructive Hybrid Games}
\author{Brandon Bohrer}
  {Carnegie Mellon University, USA}
  {bbohrer@cs.cmu.edu}
  {https://orcid.org/0000-0001-5201-9895}
  {The first author was also funded by the NDSEG Fellowship.}
\author{Andr\'{e} Platzer}
  {Carnegie Mellon University, USA \and Technische Universit\"at M\"unchen, Germany}
  {aplatzer@cs.cmu.edu}
  {https://orcid.org/0000-0001-7238-5710}
  {This research was sponsored by the AFOSR under grant number FA9550-16-1-0288.}
\authorrunning{B.\ Bohrer and A.\ Platzer}
\keywords{Hybrid Games, Constructive Logic,  Refinement, Game Logic}
\definecolor{vgray}{rgb}{.35,.35,.35}
\begin{document}
\maketitle
\begin{abstract}
We extend the constructive differential game logic (\CdGL) of hybrid games with a refinement connective that relates two hybrid games.
In addition to \CdGL's ability to prove the existence of winning strategies for specific postconditions of hybrid games, game refinements relate two games themselves.
That makes it possible to prove that \emph{any} winning strategy for \emph{any} postcondition of one game carries over to a winning strategy for the other.
Since \CdGL is constructive, an effective winning strategy can be extracted from a proof that a player wins a game.
A folk theorem says that such a winning strategy for a hybrid game has a corresponding hybrid system satisfying the same property.
We make this precise using \CdGL's game refinements and prove correct the construction of hybrid systems from winning strategies of hybrid games.
\end{abstract}




\renewcommand{\proves}[3]{#1\allowbreak\vdash #3}
\renewcommand{\eforHead}[5]{\textrm{for}(#1:\conv(#2);#3.#4;#5)} 
\renewcommand{\eforBody}[1]{\{#1\}}
\renewcommand{\eforgen}[6]{\eforHead{#1}{#2}{#3}{#4}{#5}\eforBody{#6}}
\renewcommand{\efor}[3]{\eforgen{#1}{\met}{\pvy}{#2}{#3}{\alpha}}
\renewcommand{\erep}[4]{{#1}\textrm{ }\kwrep\text{ }#3.~{#2}\textsf{ in }#4}

\section{Introduction}
Hybrid games combine discrete computation, continuous differential equations, and adversarial dynamics, which makes them useful to study robust operation of cyber-physical systems (CPS). CPSs such as transportation systems, medical devices, and power systems must remain reliable even in adversarial environments.
\emph{Differential Game Logic} (\dGL)~\cite{DBLP:journals/tocl/Platzer15} enables formal proofs of properties such as safety and liveness for hybrid games.
Theorems of \dGL answer: does a winning strategy exist for a given player to achieve a given postcondition in a given game?
A constructive version of \dGL (\CdGL)~\cite{ijcar20a} ensures winning strategies of games $\alpha$ are \emph{effective} and thus implementable on a computer.
This is a prerequisite for proof-based synthesis of control and monitoring code, which can ensure implementation-level correctness in a broader range of cases than synthesis approaches which do not use proofs.
Yet the state of the art of implementation for \emph{games} synthesis remains far behind that for hybrid \emph{systems} (one-player games).
For example, monitor synthesis with end-to-end safety via verified compilation is supported~\cite{DBLP:conf/pldi/BohrerTMMP18} for systems only.
The question arises: is there a way to apply existing systems tools to games, or does games synthesis demand wholly new implementation?

This paper shows the affirmative answer (in principle) by giving an \emph{inlining} operation which generates a hybrid \emph{system} given a hybrid \emph{game} and its winning strategy.
The affirmative answer may be surprising because games are known~\cite{DBLP:journals/tocl/Platzer15} to be more expressive than systems; only once a winning strategy is known can we bridge this expressiveness gap.
To prove the correctness of inlining, we must be able to compare games (and their inlined systems) to one another, to which end we develop a \emph{refinement calculus} for \CdGL.
We take this general approach because refinements are also of broader interest.
Comparison of games is a fundamental operation with several motivations:
\begin{enumerate}[i)]
\item To our knowledge, the idea of ``inlining'' strategies into systems exists in folklore but has not been shown formally.
Refinement aids us in making this folklore concept rigorous.
\item Differential refinement logic \dRL ~\cite{DBLP:conf/lics/LoosP16} has reduced the human labor required for verification of classical hybrid systems; we hope to do so for constructive hybrid games.
\item Refinement may enable comparing the efficacy of two controllers: does one controller always achieve its goal faster?
\item Equivalences of programs play a fundamental role in KAT \cite{DBLP:journals/toplas/Kozen97}. Equivalences directly correspond to mutual refinements, which \CdGL generalizes to games.
\item Inlining and refinement give an intensional view of strategy equality: two strategies are ``the same'' if their inlining produces systems which refine each other equally.
\item Differential game refinements in \dGL~\cite{DBLP:journals/tocl/Platzer17} provide a relational reasoning technique for differential games; we apply refinement ideas to hybrid games in \CdGL.
\end{enumerate}
The immediate aim of this paper is to enable translation from game proofs into systems, but we believe a refinement calculus for \CdGL will aid in all of the above.

\paragraph*{Contributions}
This paper extends two lines of work.
The first line consists of constructive game logic~\cite{esop20,ijcar20a} for discrete games (\CGL) and hybrid games (\CdGL), the second consists of classical refinement reasoning for hybrid systems in \dRL~\cite{DBLP:conf/lics/LoosP16} and equalities for propositional discrete games~\cite{DBLP:journals/sLogica/Goranko03}.
Because \GL's are subnormal, adapting \dRL rules to games is particularly subtle.
We also generalize propositional game equivalences~\cite{DBLP:journals/sLogica/Goranko03} to be contextual, and generalize the semantic foundations of \CdGL~\cite{ijcar20a} to support refinement.
Even for rules which look the same as prior work, our constructive semantics demand novel soundness proofs.
Our contributions culminate in an ``inlining'' operation which provably captures the winning strategy of a game in a system.

In \rref{sec:relwork}, we discuss additional related work.
In \rref{sec:syntax}, we recall the syntax of \CdGL,  demonstrate the syntax with a toy example, and add a refinement connective.
In \rref{sec:semantics}, we recall the semantics of \CdGL, generalizing them to support refinement.
In \rref{sec:proof-calculus}, we give a calculus for \CdGL refinements.
In \rref{sec:theory}, we discuss theoretical results about soundness and inlining.
The paper concludes with \rref{sec:conclusion}.


\section{Related Work}
\label{sec:relwork}
Related works include constructive modal logics, synthesis, and games in logic.
\paragraph*{Games in Logic}
Propositional \GL was introduced by Parikh~\citet{DBLP:conf/focs/Parikh83}.
The first-order \GL of hybrid games is \dGL~\cite{DBLP:journals/tocl/Platzer15}.
\GL formulas have been reduced to $\mu$-calculus~\citet{DBLP:conf/focs/Parikh83,DBLP:journals/tcs/Kozen83} and game equivalences have been reduced~\cite{DBLP:journals/sLogica/Goranko03} to propositional modal logic.
In contrast, we translate game logic \emph{proofs}, which lets us translate \CdGL into a \emph{less expressive} logical fragment.

{\GL}s are unique in their clear delegation of strategy to the \emph{proof} language rather than the \emph{model} language, allowing succinct, trustworthy game specifications with sophisticated winning strategies.
Relatives without this separation of concerns include SL~\cite{DBLP:conf/concur/ChatterjeeHP07}, ATL~\cite{DBLP:journals/jacm/AlurHK02}, CATL~\cite{DBLP:conf/atal/HoekJW05}, SDGL~\cite{ghosh2008strategies}, structured strategies~\cite{DBLP:conf/kr/RamanujamS08},
DEL~\cite{DBLP:series/lncs/Benthem15,DBLP:journals/games/BenthemPR11,van2001games}, evidence logic~\cite{DBLP:journals/sLogica/BenthemP11}, and Angelic Hoare Logic~\cite{DBLP:journals/corr/Mamouras16}.

\paragraph*{Constructive Modal Logics}
The task of assigning a semantics to games should not be confused with game semantics~\cite{DBLP:journals/iandc/AbramskyJM00}, which give a semantics to programs \emph{in terms of} games.
The main semantic approaches for constructive modal logics are intuitionistic Kripke semantics~\citet{DBLP:journals/apal/Wijesekera90} and realizability semantics~\cite{DBLP:journals/mscs/Oosten02,lipton1992constructive}.
We follow the type theoretic semantics which were introduced for \CdGL~\cite{ijcar20a}.

Constructive (modal) program logics are less studied than classical ones.
A few authors~\cite{esop20,kamide2010strong} develop a Curry-Howard correspondence with proof terms, the latter for a simple fragment of dynamic logic.
Other works~\cite{DBLP:journals/apal/WijesekeraN05,degen2006towards,DBLP:journals/fuin/Celani01} address only fragments and do not explore Curry-Howard in the same depth.
In contrast to these, we support constructive refinement, which is also of interest for constructive program logics generally.
The discussion of proof terms here is brief for the sake of space.
Our treatment of constructive real arithmetic follows \CdGL, which follows Bishop~\cite{bishop1967foundations,bridges2007techniques} using constructive formalizations~\cite{DBLP:conf/mkm/Cruz-FilipeGW04,DBLP:conf/itp/MakarovS13}.

\paragraph*{Hybrid Systems Synthesis}
Synthesis for hybrid systems is an active research area.
Fully automated synthesis relies on restrictions such as simple fragments~\cite{DBLP:journals/tac/KloetzerB08,DBLP:conf/emsoft/TalyT10} or discrete abstractions~\cite{DBLP:conf/iros/FinucaneJK10,DBLP:conf/IEEEcca/FilippidisDLOM16}.
\ModelPlex~\cite{DBLP:journals/fmsd/MitschP16} exploits interactive safety proofs in \dL~\cite{Platzer18}, the systems fragment of \dGL, for monitor synthesis.
Not only can proof-based synthesis synthesize \emph{every} provable model, but it gives the user more control: to generate a less restrictive monitor, simply revise the proof to use less restrictive assumptions.
\ModelPlex supports an especially rigorous end-to-end verification approach~\cite{DBLP:conf/pldi/BohrerTMMP18}.
We aim to provide a reduction through which \ModelPlex could support games.
Synthesis of high-level plans is also studied~\cite{DBLP:conf/cdc/BhatiaKV10,DBLP:journals/automatica/FainekosGKP09}.

\section{Syntax}
\label{sec:syntax}
We recall the language of \CdGL \cite{ijcar20a}, consisting of terms, games, and formulas, and introduce new connectives for game refinement formulas.
Games are perfect-information, zero-sum, and two-player.
Take note of our terminology for players, which is particularly subtle for constructive games.
We use the name \emph{Angel} for the player whose choices are quantified existentially (``us'') and \emph{Demon} for the player whose choices are quantified universally (``them'').
They alternate turns, and at any moment one player is \emph{active} (making decisions) while the opponent is \emph{dormant} (waiting for their turn).
In an unfortunate subtlety, a formula, proof, refinement, etc.\ is called Angelic whenever it is existential and Demonic whenever it is universal, regardless of which player is active.
The simplest terms are (game) \emph{variables} $x, y \in \allvars$ where $\allvars$ is the set of variable identifiers.
All variables are mutable and globally scoped. Their values correspond to the state of the game.
For every base game variable $x$ there is a primed counterpart $\D{x}$ whose purpose is track the time-derivative of $x$ within an ODE.
Variables range over uncountable real numbers, but all real-valued terms must be computable functions.
That is, real-valued terms $f,g$ are (Type-2~\cite{DBLP:series/txtcs/Weihrauch00}) effective functions, usually from states to reals.
Type-2 effectivity means $f$ must be computable when the values of variables are represented as streams of bits.
It is occasionally useful for $f$ to return a tuple of reals, which are computable when every component is computable.

\begin{definition}[Terms]
A \emph{term} $f, g$ is any computable function over the game state.
The following constructs appear in our example:
\[f,g ~\bebecomes~  \cdots \alternative c \alternative x \alternative f + g \alternative f \cdot g  \alternative \der{f}\]
where $c \in \mathbb{R}$ is a real literal, $x$ a game variable, $f + g$ a sum, and $f \cdot g$ a product.
For differentiable terms $f,$ the total spatial differential term is written $\der{f}$ and agrees with the time derivative of $f$ during an ODE.
\label{def:terms}
 \end{definition}
Because \CdGL is constructive, strategies must represent Angel's choices computably.
While Demon is playing, Angel simply monitors whether Demon's choices obey the rules of the game, and does not care whether they were computable.
We informally discuss how a game is played here, then give full winning conditions in \rref{sec:semantics}.
\begin{definition}[Games]
The language of \emph{games} $\alpha,\beta \in \allgame$ is defined recursively as such:
\[\alpha,\beta ~\bebecomes~ \ptest{\phi} \alternative \humod{x}{f} \alternative \prandom{x} \alternative \pevolvein{\D{x}=f}{\ivr} \alternative  \pchoice{\alpha}{\beta} \alternative \alpha;\beta \alternative \prepeat{\alpha} \alternative \pdual{\alpha}\]
\end{definition}
The \emph{test game} $\ptest{\phi},$ is a no-op if the active player can present a proof of $\phi,$ else the dormant player wins by default since the active player ``broke the rules''.
A deterministic assignment \m{\humod{x}{f}} updates variable $x$ to the value of term $f$.
Nondeterministic assignments \m{\prandom{x}} ask the active player to compute the new value of $x : \reals$.
The ODE game $\pevolvein{\D{x}=f}{\ivr}$ evolves the differential equation $\D{x}=f$ for some duration $\durvar \geq 0$ chosen by the active player such that the active player proves $\ivr$ throughout.
We assume for simplicity that all terms $f$ appearing in differential equations are locally Lipschitz-continuous throughout the domain constraint.
ODEs are explicit-form, meaning that $f$ and $\ivr$ do not mention any primed variables $\D{y}$.
Except when otherwise stated, we present ODEs with a single equation $\D{x} = f$ for the sake of readability.
In the choice game $\alpha \cup \beta,$ the active player chooses whether to play game $\alpha$ or game $\beta$.
In the sequential game $\alpha;\beta$, game $\alpha$ is played first, then $\beta$ from the resulting state (unless a player broke the rules during $\alpha$).
In the repetition game $\prepeat{\alpha},$ the active player chooses after each repetition of $\alpha$ whether to continue playing, but must not repeat $\alpha$ infinitely.
The exact number of iterations does not need to be computed in advance but can depend on the opponent's moves.
In the dual game $\pdual{\alpha}$ the active player takes the dormant role and vice-versa, then $\alpha$ is played.
We parenthesize games with braces $\{ \alpha \}$ when necessary.

\newcommand{\newcon}[1]{{\textcolor{red}{#1}}}

\begin{definition}[\CdGL Formulas]
The set of \CdGL \emph{formulas} $\phi$ (also $\psi, \rho$) is given recursively by the grammar:
\[ \phi ~\bebecomes~ \ddiamond{\alpha}{\phi} \alternative \dbox{\alpha}{\phi} \alternative f \sim g \alternative \newcon{\dleq[i]{\alpha}{\beta}}\]
\label{def:cgl-formula}
for comparison predicates $\sim\mathop{\in}\{\leq, <, =, \neq, >, \geq\}$.
\end{definition}
Modalities $\ddiamond{\alpha}{\phi}$ and $\dbox{\alpha}{\phi}$ say Angel wins $\alpha$ with postcondition $\phi,$ starting as the active or dormant player respectively.
Modality $\ddiamond{\alpha}{\phi}$ is Angelic in the sense that decisions are resolved Angelically: Angel is the one currently making choices.
Modality $\dbox{\alpha}{\phi}$ is Demonic in the sense that decisions are resolved Demonically: Angel has no control until a dual operator is encountered.
We will deal mainly in box modalities $\dbox{\alpha}{\phi},$ with \emph{Angel}'s moves appearing inside dualities $\pdual{\alpha}$ and \emph{Demon}'s moves outside dualities.

Game refinements come in two standard~\cite{DBLP:journals/sLogica/Goranko03} kinds: Angelic and Demonic.
\emph{Demonic refinement} $\dleq[i]{\alpha}{\beta}$ of \emph{rank} $i$ holds if for every rank-$i$ postcondition $\phi,$ dormant winning strategies of $\dbox{\alpha}{\phi}$ can be mapped constructively into strategies of $\dbox{\beta}{\phi}$.
\emph{Angelic refinement} $\aleq[i]{\alpha}{\beta}$ maps active winning strategies of $\ddiamond{\alpha}{\phi}$ constructively into strategies of $\ddiamond{\beta}{\phi}$.
Note this difference carefully: Angelic refinement may be more familiar to the reader, but we take the Demonic presentation as primary, in large part because the theorems we wish to prove are Demonic.
Angelic and Demonic refinement are interdefinable: $\aleq{\alpha}{\beta} \lequiv \dleq{\pdual{\alpha}}{\pdual{\beta}}$ and vice versa.
Rank is a technical device to ensure predicative quantification, see \rref{sec:semantics}.
You may wish to ignore rank on the first reading: it can be inferred automatically, and we write $\dleq{\alpha}{\beta}$ when rank is unimportant.

The standard connectives of first-order constructive logic are definable from games and comparisons.
Verum ($\btt$) is defined $1 > 0$ and falsum ($\bff$) is $0 > 1$.
Conjunction $\phi \land \psi$ is defined $\ddiamond{\ptest{\phi}}{\psi},$
disjunction $\phi \lor  \psi$ is $\ddiamond{\ptest{\phi} \cup \ptest{\psi}}{\btt}, $
implication $\phi \limply \psi$ is $\dbox{\ptest{\phi}}{\psi}$,
universal quantification $\lforall{x}{\phi}$ is defined $\dbox{\prandom{x}}{\phi},$ and
existential quantification $\lexists{x}{\phi}$ is $\ddiamond{\prandom{x}}{\phi}$.
Equivalence $\phi \lequiv \psi$ is $(\phi \limply \psi) \land (\psi \limply \phi)$.
As usual in constructive logics, negation $\neg \phi$ is defined $\phi \limply \bff$, and inequality is defined by $f \neq g \equiv \neg(f = g)$.
The defined game $\eskip$ is the trivial test $\ptest{\btt}$.
We will use the derived constructs freely but need only present semantics and proof rules for the core constructs to minimize duplication.
Indeed, it will aid in understanding of the proof term language to keep the definitions above in mind, because the semantics for many first-order programs mirror those from their counterpart in first-order constructive logic.



\subsection{Example Game}
\label{sec:example-game}
\newcommand{\pre}{\textsf{pre}}
\newcommand{\pushpull}{\textsf{PP}}
As a simple example, consider a \emph{push-pull cart} \cite{Platzer18} on a 1 dimensional playing field with boundaries $x_l \leq x \leq x_r$ where $x$ is the position of the cart and $x_l < x_r$ strictly.
The initial position is written $x_0$.
Thes preconditions are in formula $\pre$.
Demon is at the left of the cart and Angel at its right.
Each player chooses to pull or push the cart, then the (oversimplified) physics say velocity is proportional to the sum of forces.
Physics can evolve so long as the boundary $x_l \leq x \leq x_r$ is respected, with duration chosen by Demon.
\begin{align*}
\pre &= x_l < x_r \land x_l \leq x_0 = x \leq x_r \\
\pushpull  &=  \{
 \{\humod{L}{-1} \cup \humod{L}{1}\}
;\{\humod{R}{-1} \cup \humod{R}{1}\}\pdual{}
;\{\pevolvein{\D{x}=L+R}{x_l \leq x \leq x_r}\}
\}^*
\end{align*}
A simple safety theorem for the push-pull game says that Angel has a strategy to ensure position $x$ remains constant ($x = x_0$) no matter how Demon plays:
\begin{align}
\label{eq:pp-safe} \pre \limply \dbox{\pushpull}{x = x_0}
\end{align}
The winning strategy that proves \rref{eq:pp-safe} is a simple mirroring strategy: Angel observes Demon's choice of $L$ and plays the opposite value of $R$ so that $L + R = 0$.
Because $L + R = 0,$ the ODE simplifies to $\pevolvein{\D{x}=0}{x_l \leq x \leq x_r},$ which has the trivial solution $x(t) = x(0)$ for all times $t \in \reals_{\geq0}$.
Angel shows the safety theorem by replacing the ODE with its solution and observing that $x = x_0$ holds for all possible durations.

In addition to solution reasoning, \CdGL supports \emph{differential invariant}~\cite{Platzer18} reasoning which appeals to the derivative of a term and \emph{differential ghost}~\cite{Platzer18} reasoning which augments an ODE with a new continuous variable. Solution reasoning suffices for this toy example, but invariant reasoning is an essential part of \CdGL because more complex games have ODEs with non-polynomial, even non-elementary solutions. Ghost reasoning is also essential in the case of differential invariants which are not inductive~\cite{DBLP:conf/lics/PlatzerT18}.
For these reasons, our proof calculus (\rref{sec:proof-calculus}) will include solution, invariant, and ghost rules.

In contrast to a safety theorem, a liveness theorem would be shown by a progress argument.
Suppose that Angel could set $L=2$ but Demon can only choose $R \in \{-1,1\}$.
Then Angel's liveness theorem might say she can achieve $x = x_r$ because the choice $L=2$ ensures at least 1 unit of progress in $x$ for each unit of time.

\section{Type-theoretic Semantics}
\label{sec:semantics}
We recall the type-theoretic semantics of \CdGL~\cite{ijcar20a} here in order to be self-contained.
At the same time, we define the semantics of the new refinement formulas $\dleq[i]{\alpha}{\beta}$ and also generalize the semantics of \CdGL to operate over an infinite tower of type universes, in support of refinements.
We first describe the assumptions made of the underlying type theory.

\subsection{Type Theory Assumptions}
We assume a Calculus of Inductive and Coinductive Constructions (CIC)-like type theory~\cite{DBLP:journals/iandc/CoquandH88,DBLP:conf/colog/CoquandP88,COQ} with dependency and an infinite tower of cumulative predicative universes.
Predicativity is essential because our semantics are a large elimination, which would interact dangerously with impredicative quantification.
We assume first-class anonymous constructors for (indexed~\cite{DBLP:journals/fac/Dybjer94}) inductive and coinductive types.
We write $\tau$ for type families and $\kappa$ for kinds (those type families inhabited by other type families).
Inductive type families are written $\lindty{t:\kappa}{\tau},$ which denotes the \emph{smallest} solution \texttt{ty} of kind $\kappa$ to the fixed-point equation $\texttt{ty} = \esub{\tau}{t}{\texttt{ty}}.$
Coinductive type families are written $\lcoty{t:\kappa}{\tau},$ which denotes the \emph{largest} solution \texttt{ty} of kind $\kappa$ to the fixed-point equation $\texttt{ty} = \esub{\tau}{t}{\texttt{ty}}.$
Per Knaster-Tarski~\cite[Thm.\ 1.12]{DBLP:harel2000}, the type-expression $\tau$ must be monotone in $t$ to ensure that smallest and largest solutions exist.

We write $\typei{i}$ for the $i$'th predicative universe.
We write $\pity{x}{\tau_1}{\tau_2}$ for a dependent function type with argument named $x$ of type $\tau_1$ where return type $\tau_2$ may mention $x$.
We write $\sity{x}{\tau_1}{\tau_2}$ for a dependent pair type with left component named $x$ of type $\tau_1$ and  right component of type $\tau_2,$ possibly mentioning $x$.
These specialize to the simple types $\sfun{\tau_1}{\tau_2}$ and $\sprod{\tau_1}{\tau_2}$ respectively when $x$ is not mentioned in $\tau_2$.
Lambdas $(\elam{x}{\tau}{M})$ inhabit function types.
Pairs $(M,N)$ inhabit dependent pair types.
Let-binding unpacks pairs and $\sprojL{M}$ and $\sprojR{M}$ are left and right projection.
We write $\tau_1 + \tau_2$ for disjoint unions inhabited by $\ell \cdot M$ and $r \cdot M,$ and write $\scase{A}\slbranch{\pvl}{B}~\srbranch{\pvr}{C}$ for case analysis, where $\pvl$ and $\pvr$ are variables over proofs.

We assume a type \xty for real numbers and type \sty for Euclidean state vectors supporting scalar and vector sums, products, scalar inverses, and units.
A state $s:\sty$ assigns values to every variable $x \in \allvars$ and supports the operations $\lget{s}{x}$ for \emph{retrieving} the value of $x$ and $\lset{s}{x}{v}$ for \emph{updating} the value of $x$ to $v$.
Likewise, $\lset{s}{(x,y)}{(v,w)}$ sets both $x$ and $y$ to $v$ and $w$, respectively.
The usual axioms of setters and getters~\cite{foster2010bidirectional} are satisfied.

\subsection{Semantics of \CdGL}
\label{sec:cdgl-semantics}
Terms $f,g$ are interpreted into type theory as functions of type $\sfun{\sty}{\xty}$.
Games $\alpha$ (and formulas $\phi$) require a notion of \emph{rank} $\rank{\alpha}$ (and $\rank{\phi}$) indicating the smallest universe which contains the semantics of $\alpha$.
By cumulativity of universes, the semantics will also belong to all universes $\typei{i}$ such that $i \geq \rank{\alpha}$.
Refinement quantifies over types of a lower universe, which is predicative.
A refinement formula's rank is given by its annotation: $\rank{\dleq[i]{\alpha}{\beta}} = 1 + i,$ requiring  $\rank{\alpha}, \rank{\beta} \leq i$.
In all other cases, the rank is the maximum of subexpressions' ranks.

Formulas $\phi$ are interpreted as a predicate over states, i.e., a type family \(\ftrans{\phi} : \allstate \to \typei{\rank{\phi}}\).
We say the formula $\phi$ is \emph{valid} if there exists a term $M : (\pity{s}{\sty}{\ftrans{\phi}\ s})$.
Function $M$ is allowed to inspect state $s,$ but only using computable operations.
The formula semantics are defined in terms of the active and dormant semantics of games, which determine how Angel wins a game $\alpha$ whose postcondition is a formula $\phi$ whose semantics are $\ftrans{\phi}$ (variable $P$ in the game semantics).
We write \(\atrans{\alpha} : (\allstate \to \typei{\rank{\alpha}}) \to (\allstate \to \typei{\rank{\alpha}})\) for the active semantics of $\alpha$ and
\(\dtrans{\alpha} : (\allstate \to \typei{\rank{\alpha}}) \to (\allstate \to \typei{\rank{\alpha}})\) for its dormant semantics,
which capture Angel's winning strategies when Angel is active or dormant, respectively.
In contrast to classical game logics, the diamond and box modalities are not interdefinable constructively.
The rank of an expression is only relevant in the refinement cases.

\begin{definition}[Formula semantics]
\(\ftrans{\phi} : \allstate \to \typei{\rank{\phi}}\)
is defined by
\begin{align*}
\ftrans{\dbox{\alpha}{\phi}}\ s     &= \dtrans{\alpha}\ \ftrans{\phi}\ s &
\ftrans{\ddiamond{\alpha}{\phi}}\ s &= \atrans{\alpha}\ \ftrans{\phi}\ s &
\ftrans{f \sim g}\ s  &= ((f\ s) \sim (g\ s))
\end{align*}%
\[
\ftrans{\dleq[i]{\alpha}{\beta}}\ s    = \big(\pity{t}{\typei{i}}{\big(\sfun{\dtrans{\alpha}\ t\ s}{\dtrans{\beta}\ t\ s}\big)}\big)
\]
\end{definition}
The modality $\ddiamond{\alpha}{\phi}$ is true in state $s$ when active Angel has a strategy $\atrans{\alpha}\ \ftrans{\phi}\ s$ for game $\alpha$ from state $s$ to reach the region $\ftrans{\phi}$ on which $\phi$ has a proof.
The modality $\dbox{\alpha}{\phi}$ is true in state $s$ when dormant Angel has a strategy $\dtrans{\alpha}$ for game $\alpha$ from state $s$ to reach the region $\ftrans{\phi}$ on which $\phi$ has a proof.
For comparison operators ${\sim}\in\{\leq,<,=,>,\geq,\neq\},$ the values of $f$ and $g$ are compared at state $s$.
Game $\alpha$ demonically refines $\beta$  ($\dleq{\alpha}{\beta}$) from a state $s$ if for all postconditions $t$ there exists a mapping from dormant strategies $\dtrans{\alpha}\ t\ s$ to dormant strategies $\dtrans{\beta}\ t\ s$.
That is, refinements may depend on the state (they are \emph{local} or \emph{contextual}), but must hold for all postconditions $t$, as  refinements consider the general game form itself, not a game fixed to a particular postcondition.
Because refinement formulas are first-class, quantifiers may appear nested and in arbitrary positions, not necessarily prenex form.
We ensure predicativity by requiring that refinements quantify only over postconditions of lower rank.
Rank can be inferred in practice by inspecting a proof: each rank annotation need only be as large as the rank of every postcondition in every application of rules \irref{diamondref} and \irref{boxref} from \rref{sec:proof-calculus}.

The semantics of games are simultaneously inductive with those for formulas and with one another.
In each case, the connectives which define $\dtrans{\alpha}$ and $\atrans{\alpha}$ are duals, because $\dbox{\alpha}{\phi}$ and $\ddiamond{\alpha}{\phi}$ are dual.
Below, $P$ is the postcondition  and $s$ is the initial state.
\begin{definition}[Active semantics]
\(\atrans{\alpha} : (\allstate \to \typei{\rank{\alpha}}) \to (\allstate \to \typei{\rank{\alpha}})\)

{{\begin{minipage}{0.3\textwidth}
  \begin{align*}
\atrans{\ptest{\psi}}\ P\ s     &= \ftrans{\psi}\ s \,\mathop{\kwprod}\, P\ s\\
\atrans{\humod{x}{f}}\ P\ s     &= P\ (\lset{s}{x}{(f\ s)})\\
\atrans{\prandom{x}}\ P\ s      &= \sity{v}{\reals}{(P\ (\lset{s}{x}{v}))}\\
\atrans{\alpha\cup\beta}\ P\ s  &= \atrans{\alpha}\ P\ s ~\mathop{\kwsum}~ \atrans{\beta}\ P\ s\\
\atrans{\alpha;\beta}\ P\ s     &= \atrans{\alpha}\ (\atrans{\beta}\ P)\ s
\end{align*}
\end{minipage}}
{\begin{minipage}{0.5\textwidth}
  \begin{align*}
\atrans{\pdual{\alpha}}\ P\ s   &= \dtrans{\alpha}\ P\ s\\
\atrans{\pevolvein{\D{x}=f}{\ivr}}&\ P\ s =  \sity{d}{{\reals_{\geq0}}}{\sity{sol}{([0,d]\to\xty)}{}} \\
\hskip-1in &(\solves{sol}{s}{d}{\D{x}=f})\\
\hspace{-1in} &\,\mathop{\kwprod}\, (\pity{t}{{[0,d]}}{\ftrans{\ivr}\ {(\lset{s}{x}{(sol\ t)})}})\\
\hspace{-1in} &\,\mathop{\kwprod}\, P\ \big(\lset{s}{(x,\D{x})}{}\\
              &\ \ \ \ \ \ \ (sol\ d, f\ (\lset{s}{x}{(sol\ d)}))\big)
  \end{align*}
\end{minipage}}\\
\[\atrans{\prepeat{\alpha}}\ P\ s = \big(\lindty{\tau'\mathrel{:}(\allstate \to \typei{\rank{\alpha}})}{\elam{s'}{\allstate}{}\,({
(P\ s' \to \tau'\ s')
\,\mathop{\kwsum}\,
(\atrans{\alpha}\tau'\ s' \to \tau'\ s')
})}\big)\ s\]}
\end{definition}
Angel wins $\ptest{\psi}$ by proving both $\psi$ and $P$ at $s$.
Angel wins the deterministic assignment $\humod{x}{f}$ by executing it, then proving $P$.
Angel wins nondeterministic assignment $\prandom{x}$ by choosing a value $v$ to assign, then proving $P$.
Angel wins $\alpha \cup \beta$ by choosing to play game $\alpha$ or $\beta$, then winning it.
Angel wins $\alpha;\beta$ by winning $\alpha$ with the postcondition of winning $\beta$.
Angel wins $\pdual{\alpha}$ if she wins $\alpha$ in the dormant role.
Angel wins ODE game $\pevolvein{\D{x}=f}{\ivr}$ by choosing some solution $y$ of some duration $d$ for which she proves domain constraint $\ivr$ throughout and the postcondition $P$ at time $d$.
While top-level postconditions rarely mention $\D{x},$ intermediate proof steps do,
thus $x$ and $\D{x}$ are both updated in the postcondition.
The construct $(\solves{sol}{s}{d}{\D{x}=f})$ says $sol$ solves $\D{x}=f$ from state $s$ for time $d$ (\irref{app:sem-full}).
Active Angel strategies for $\prepeat{\alpha}$ are inductively defined: either choose to stop the loop and prove $P$ now, else play a round of $\alpha$ before repeating inductively.
By Knaster-Tarski~\cite[Thm.\ 1.12]{DBLP:harel2000}, this least fixed point exists because a game's semantics is monotone in its postcondition~\cite[Lem.\ 7]{ijcar20a}.

\begin{definition}[Dormant semantics]
\(\dtrans{\alpha} : (\allstate \to \typei{\rank{\alpha}}) \to (\allstate \to \typei{\rank{\alpha}})\)

{\begin{minipage}{0.3\textwidth}
\begin{align*}
\dtrans{\ptest{\psi}}\ P\ s     &= \ftrans{\psi}\ s \limply P\ s\\
\dtrans{\humod{x}{f}}\ P\ s     &= P\ (\lset{s}{x}{(f\ s)})\\
\dtrans{\prandom{x}}\ P\ s      &= \pity{v}{\reals}{(P\ (\lset{s}{x}{v}))}\\
\dtrans{\alpha\cup\beta}\ P\ s  &= \dtrans{\alpha}\ P\ s ~\mathop{\kwprod}~ \dtrans{\beta}\ P\ s\\
\dtrans{\alpha;\beta}\ P\ s     &= \dtrans{\alpha}\ (\dtrans{\beta}\ P)\ s
\end{align*}
\end{minipage}
 {\begin{minipage}{0.4\textwidth}
\begin{align*}
\dtrans{\pdual{\alpha}}\ P\ s   &= \atrans{\alpha}\ P\ s\\
\dtrans{\pevolvein{\D{x}=f}{\ivr}}&\ P\ s =
  \pity{d}{\reals_{\geq0}}{\pity{sol}{([0,d]\to\xty)}{}} \\
 &(\solves{sol}{s}{d}{\D{x}=f})\\
 &\to \big(\pity{t}{[0,d]}{\ftrans{\ivr}\ {(\lset{s}{x}{(sol\ t)})}}\big)\\
 &\to P\ \big(\lset{s}{(x,\D{x})}{}\\
 &\ \ \ \ \ \ \ \ (sol\ d, f\ (\lset{s}{x}{(sol\ d)}))\big)
\end{align*}
 \end{minipage}}
\\
\[\dtrans{\prepeat{\alpha}}\ P\ s = \big(\lcoty{\tau'\mathrel{:}(\allstate \to \typei{\rank{\alpha}})}{\elam{s'}{\allstate}{}\,({
(\tau'\ s' \to \dtrans{\alpha}\ \tau'\ s')
\,\mathop{\kwprod}\,
(\tau'\ s' \to P\ s')
})}\big)\ s\]
}
\end{definition}
Angel wins $\ptest{\psi}$ by proving $P$ under assumption $\psi$, which Demon must provide.
Deterministic assignment is unchanged.
Angel wins $\prandom{x}$ by proving $P$ for \emph{every} choice of $x$.
Angel wins $\alpha \cup \beta$ with a pair of winning strategies, since Demon chooses whether to play $\alpha$ or $\beta$.
Angel wins $\alpha;\beta$ by winning $\alpha$ with a postcondition of winning $\beta$.
Angel wins $\pdual{\alpha}$ if she can win $\alpha$ actively.
Angel wins $\pevolvein{\D{x}=f}{\ivr}$ if for an arbitrary duration and arbitrary solution which satisfy the domain constraint, Angel can prove the postcondition.
Dormant repetition strategies are coinductive using some invariant $\tau'$.
When Demon decides to stop the loop, Angel responds by proving $P$ from $\tau'$.
Whenever Demon chooses to continue, Angel proves that $\tau'$ is preserved.
Greatest fixed points exist by Knaster-Tarski~\cite[Thm.\ 1.12]{DBLP:harel2000} using monotonicity~\cite[Lem.\ 7]{ijcar20a}.

In general, strategies are constructive but permit the opponent to play classically.
In the cyber-physical setting, their opponent is indeed rarely a computer.

\subsection{Proof Terms}
\label{sec:proofterms}
Proof terms $M,N,O$ (sometimes $A,BC$) for \CdGL~\cite{ijcar20a,esop20} are syntactic analogs of the semantics and will be exploited in \rref{sec:systemify}.
See \rref{app:pc-full} for corresponding \CdGL proof rules.
We elide proof terms for the new refinement rules for space reasons.
Proof terms are inductively defined:
\begin{align*}
M,N,O &\bebecomes~ \pvx \alternative \einjL{M} \alternative \einjR{M} \alternative \ecase{M}{N}{O}  \alternative \etlam{\reals}{M} \alternative \eplam{\phi}{M} \\
      &  \alternative \eCons{M}{N} \alternative \eAsgneq{y}{x}{\pvx}{M} \alternative  \etcons{f}{M} \alternative \eSeq{M} \alternative \erep{M}{N}{\pvx:J}{O} \\
      &\alternative \efor{M}{N}{O}  \alternative \eas{d}{sln}{dom}{M} \alternative \eds{sln}{M} \\
      & \alternative \edc{M}{N} \alternative \edg{y_0}{a}{b}{M} 
\end{align*}
where $\pvx, \pvy,\pvl,\pvr$ are \emph{proof variables}, which range over proof terms of a given proposition.
Whenever the same proof term construct proves some Angelic property and some Demonic property, we notate $\ddiamond{\cdot}{}$ for Angelic proofs, $\dbox{\cdot}{}$ for Demonic proofs, and $\dmodality{\cdot}{}$ to refer to both.

Proof variable references $\pvx$ are proof terms.
Injections $\edinjL{M}$ and $\edinjR{M}$ are case-analyzed by $\ecase{M}{N}{O}$.
Lambdas $(\etlam{\reals}{M})$ and $(\eplam{\phi}{M})$ prove Demonic nondeterministic assignment and test, respectively.
Pairs $\eCons{M}{N}$ are used both for Angelic tests and Demonic choices.
Deterministic assignment  $\eAsgneq{y}{x}{\pvx}{M}$ and Angelic nondeterministic assignment $\etcons{f}{M}$ remembers the old value of $x$ in variable $y$ with proof variable $\pvx$ describing the new value of $x$ given by $f$.
Sequential composition is $\eSeq{M}$.
For Demonic repetition games $\prepeat{\alpha},$ coinduction $(\erep{M}{N}{\pvx:J}{O})$ has $M$ for a base case, $N$ for a coinductive step mentioning coinductive hypothesis $\pvx : J,$ and $O$ showing the postcondition follows from the invariant $J$.
For Angelic repetition, induction $\efor{M}{N}{O}$ is witnessed by variant formula $\conv$ and termination metric $\met$ (not to be confused with proof term  $M$ which  shows that $\conv$ holds initially). $N$ shows that $\alpha$ maintains $\conv$ while decreasing $\met$ and $O$ shows that the postcondition follows.
The remaining terms prove ODEs and are closely related to the refinement rules of \rref{sec:proof-calculus}.
The ``solve'' rules parallel the semantics.
We use variable name $sln$ for a solution \emph{term}, which is a function of the \emph{state}, interdefinable with the semantics' use of name $sol$ for a solution as a function of \emph{time}.
Angelic solve $\eas{d}{sln}{dom}{M}$ specifies duration $d$ for solution $sln$ whose domain is proven by $dom$ and postcondition by $M$.
Demonic solve $\eds{sln}{M}$ uses solution $sln$, while postcondition $M$ expects $d$ and $dom$ as arguments.
Differential cut $\edc{M:\rho}{N}$ proves postcondition $\rho$ via $M,$ which is cut into the domain constraint in $N$.
Differential ghost~\cite{DBLP:conf/lics/PlatzerT18} $\edg{y_0}{a}{b}{M}$ introduces a new continuous variable.
It specifies initial value $y_0$ for ghost $y$ with $\D{y}=a(y) + b$ used to prove postcondition $M$.
Cuts and ghosts are useful for ODEs whose solutions are complicated or even non-elementary.

\section{Refinement Proof Calculus}
\label{sec:proof-calculus}
We give a natural deduction calculus for hybrid game refinements.
Refinement is relative to a context $\Gamma$ of \CdGL formulas, which may include refinements.
All rules are expressed as Demonic refinements $\dleq[i]{\alpha}{\beta},$ but an Angelic  refinement $\aleq[i]{\alpha}{\beta}$ is supported by refining the duals $\dleq[i]{\pdual{\alpha}}{\pdual{\beta}}$.
Remember that in a Demonic refinement, the Angelic (existential) connectives appear under dualities $\pdual{\alpha}$.
We write  $\alpha \liso \beta$ for $\dleq{\alpha}{\beta} \land \dleq{\beta}{\alpha}$.
In rules \irref{refSeq} and  \irref{refUnloop} we write bold $\lsysa$ or $\lsysa_1$ for metavariables over \emph{systems} not containing a dual operator, while unbolded metavariables are arbitrary games.


The elimination rules for refinements are \irref{diamondref} and \irref{boxref}, which say every true postcondition $\phi$ of a game $\alpha$ is a true postcondition of every $\beta$ which $\alpha$ refines.
The side condition for \irref{diamondref} and \irref{boxref} is that $\rank{\phi} \leq i$ where $i$ is the rank annotation of the refinement.
These are the only rules which care about rank, so ranks can be inferred from proofs by inspecting the uses of these rules.
While rank is of little practical import, it ensures a predicative formal foundation.

\begin{figure}
  \centering
  \begin{calculuscollections}{\columnwidth}
    \begin{calculus}
  \cinferenceRule[diamondref|R{$\langle{\cdot}\rangle$}]{}
  {\linferenceRule[formula]
    {\proves{\G}{}{\ddiamond{\alpha}{\phi}} & \proves{\G}{}{\aleq[i]{\alpha}{\beta}}}
    {\proves{\G}{}{\ddiamond{\beta}{\phi}}}
  }{\text{assuming $\rank{\phi} \leq i$}}
\cinferenceRule[arefTest|{$\langle?\rangle$}]{}
{\linferenceRule[formula]
  {\proves{\G}{}{\phi \limply \psi}}
  {\proves{\G}{}{\dleq{\pdual{\ptest{\phi}}}{\pdual{\ptest{\psi}}}}}
}{}
\cinferenceRule[drefTest|{$[?]$}]{}
{\linferenceRule[formula]
  {\proves{\G}{}{\psi \limply \phi}}
  {\proves{\G}{}{\dleq{\ptest{\phi}}{\ptest{\psi}}}}
}{}
\cinferenceRule[arefRand|{$\langle{{:}*}\rangle$}]{}
{ {\proves{\G}{}{\dleq{\pdual{\humod{x}{f}}}{\pdual{\prandom{x}}}}}
}{}
\cinferenceRule[drefRand|{$[{:}*]$}]{}
{ {\proves{\G}{}{\dleq{\prandom{x}}{\humod{x}{f}}}}
}{}
\cinferenceRule[drefChoiceL1|{$[\cup]$}L1]{}
{\proves{\G}{}{\dleq{\alpha\cup\beta}{\alpha}}}
{}
\cinferenceRule[drefChoiceL2|{$[\cup]$}L2]{}
{\proves{\G}{}{\dleq{\alpha\cup\beta}{\beta}}}
{}
\cinferenceRule[drefChoiceR|{$[\cup]$}R]{}
{\linferenceRule[formula]
  {\proves{\G}{}{\dleq{\alpha}{\beta}} & \proves{\G}{}{\dleq{\alpha}{\gamma}}}
  {\proves{\G}{}{\dleq{\alpha}{\beta\cup\gamma}}}
}
{}
\end{calculus}\hfill
    \begin{calculus}
 \cinferenceRule[boxref|R{$[\cdot]$}]{}
  {\linferenceRule[formula]
    {\proves{\G}{}{\dbox{\alpha}{\phi}} & \proves{\G}{}{\dleq[i]{\alpha}{\beta}}}
    {\proves{\G}{}{\dbox{\beta}{\phi}}}
  }{\ldito}
\cinferenceRule[refSeq|{$;$}S]{}
{\linferenceRule[formula]
  {\proves{\G}{}{\dleq{\lsysa_1}{\alpha_2}} & \proves{\G}{}{\dbox{\lsysa_1}{\dleq{\beta_1}{\beta_2}}}}
  {\proves{\G}{}{\dleq{\lsysa_1;\beta_1}{\alpha_2;\beta_2}}}
}{\text{$\lsysa_1$ respectively $\lsysa$ is a hybrid system}}
\cinferenceRule[refSeqG|{$;$}G]{}
{\linferenceRule[formula]
  {\proves{\G}{}{\dleq{\alpha_1}{\alpha_2}} & \proves{\Gemp}{}{\dleq{\beta_1}{\beta_2}}}
  {\proves{\G}{}{\dleq{\alpha_1;\beta_1}{\alpha_2;\beta_2}}}
}{}
\cinferenceRule[refUnloop|un{$*$}]{}
{\linferenceRule[formula]
  {\proves{\G}{}{\dbox{\prepeat{\lsysa}}{(\dleq{\lsysa}{\beta})}}}
  {\proves{\G}{}{\dleq{\prepeat{\lsysa}}{\prepeat{\beta}}}}
}{\ldito}
\cinferenceRule[unrollLref|roll{$_l$}]{}
{\proves{\G}{}{\ptest{\btt} \cup \{\alpha;\prepeat{\alpha}\} \liso \prepeat{\alpha}}}
{}
\cinferenceRule[arefChoiceR1|{$\langle\cup\rangle$}R1]{}
{\proves{\G}{}{\dleq{\pdual{\alpha}}{\pdual{\{\alpha\cup\beta\}}}}}
{}
\cinferenceRule[arefChoiceR2|{$\langle\cup\rangle$}R2]{}
{\proves{\G}{}{\dleq{\pdual{\beta}}{\pdual{\{\alpha\cup\beta\}}}}}
{}
\cinferenceRule[arefChoiceL|{$\langle\cup\rangle$L}]{}
{\linferenceRule[formula]
  {\proves{\G}{}{\dleq{\pdual{\alpha}}{\gamma}} & \proves{\G}{}{\dleq{\pdual{\beta}}{\gamma}}}
  {\proves{\G}{}{\dleq{\pdual{\{\alpha \cup \beta\}}}{\gamma}}}
}{}
\end{calculus}
\\
 \begin{calculus}
 \cinferenceRule[dualSkip|{skip${}^d$}]{}
   {{\pdual{\eskip} \liso \eskip}}
   {}
 \end{calculus}\hfill
 \begin{calculus}
 \cinferenceRule[dualSeq|{$;^d$}]{}
   {{\pdual{\{\alpha;\beta\}} \liso \pdual{\alpha};\pdual{\beta}}
 }{}
 \end{calculus}\hfill
 \begin{calculus}
  \cinferenceRule[dualAssign|{${{:}{=}}^d$}]{}
  {{\pdual{\humod{x}{f}} \liso \humod{x}{f}}
  }{}
 \end{calculus}\hfill
 \begin{calculus}
\cinferenceRule[dualDNE|DDE]{}
{{{(\pdual{\alpha})}\pdual{} \liso \alpha}
}{}
 \end{calculus}
\end{calculuscollections}
  \caption{Refinement of discrete connectives}
  \label{fig:connective-rules}
\end{figure}
\rref{fig:connective-rules} gives the refinement rules for discrete connectives.
Unlike \dRL~\cite{DBLP:conf/lics/LoosP16}, we face the subtlety that game logics are subnormal~\cite{hughes1996new}: For a game $\alpha,$ formula \(\dbox{\alpha}{(\phi\land\psi)}\) need not hold when both \(\dbox{\alpha}{\phi}\) and \(\dbox{\alpha}{\psi}\) do.
Rules which required normality in \dRL can be adapted to \CdGL in two ways: restrict some argument to be a system (\irref{refSeq}) or require some assumptions to hold globally, in the empty context (\irref{refSeqG}).
Each approach is useful in different cases.
Rules \irref{arefTest} and \irref{drefTest} refine tests by weakening or strengthening test conditions.
Rules \irref{arefChoiceR1} and \irref{arefChoiceR2} say each branch refines an Angelic choice, while \irref{drefChoiceR} says a Demonic choice is refined by refining both branches.
One sequence refines another piecewise in the \irref{refSeq} rule, which is \emph{contextual}: refinement of the second component exploits the fact that the first component has been executed.
Rule \irref{refUnloop} compares loops by comparing their bodies and \irref{unrollLref} allows unrolling a loop before refining.
Rule \irref{refSeqG} is a variant of \irref{refSeq} which says $\alpha_1$ can be an arbitrary game, but only if $\dleq{\beta_1}{\beta_2}$ holds in the empty context.
System $\lsysa_1$ n the second premiss of \irref{refSeq}  could soundly be $\alpha_2,$ but in practical proofs it is often more convenient to work with $\dbox{\lsysa_1}{}$ because it is a \emph{system} modality, which is normal.
Rules \irref{arefRand} and \irref{drefRand} say that deterministic assignments refine nondeterministic ones.
Rules \irref{dualSkip}, \irref{dualAssign}, and \irref{dualSeq} says $\eskip$ and $\humod{x}{f}$ are self-dual and the dual of a sequence is a sequence of duals.
Double duals cancel by \irref{dualDNE}.

\begin{figure}
\begin{calculuscollections}{\columnwidth}
\begin{calculus}
\cinferenceRule[refTrans|trans]{}
{\linferenceRule[formula]
  {\proves{\G}{}{\dleq{\alpha}{\beta}} & \proves{\G}{}{\dleq{\beta}{\gamma}}}
  {\proves{\G}{}{\dleq{\alpha}{\gamma}}}
}{}
\cinferenceRule[refRefl|refl]{}
{
{\proves{\G}{}{\dleq{\alpha}{\alpha}}}
}{}
\cinferenceRule[seqidl|{$;$}id{$_l$}]{}
{ {\proves{\G}{}{\{\ptest{\btt};\alpha\} \liso \alpha}}
}{}
\cinferenceRule[seqidr|{$;$}id{$_r$}]{}
{ {\proves{\G}{}{\{\alpha;\ptest{\btt}\} \liso \alpha}}
}{}
\cinferenceRule[annihl|annih{$_l$}]{}
{\proves{\G}{}{\ptest{\bff};\alpha \liso \, \ptest{\bff}}
}{}
\cinferenceRule[nopAssign|{${:}{=}$}nop]{}
{\proves{\G}{}{\{\humod{x}{x}\} \liso \, \ptest{\btt}}}
{}
\end{calculus}%
\hfill%
\begin{calculus}
\cinferenceRule[seqdistr|{$;$}d{$_r$}]{}
{\proves{\G}{}{\{\alpha\cup\beta\};\gamma \liso \{\alpha;\gamma\} \cup \{\beta;\gamma\}}
}{}
\cinferenceRule[seqassoc|{$;$}A]{}
{\proves{\G}{}{\{\alpha;\beta\};\gamma \liso \alpha;\{\beta;\gamma\}}}
{}
\cinferenceRule[assignCancel|{{:}={:}=}]{}
{\proves{\G}{}{\humod{x}{f};\humod{x}{g} \liso \humod{x}{g}}}
{for $x \notin \freevars{g}$}

\cinferenceRule[choiceassoc|{$\cup$}A]{}
{\proves{\G}{}{\{\alpha\cup\beta\}\cup\gamma \liso \alpha \cup \{\beta \cup \gamma\}}}
{}
\cinferenceRule[choicecomm|{$\cup$}c]{}
{ {\proves{\G}{}{\alpha\cup\beta \liso \beta\cup\alpha}}
}{}
\cinferenceRule[choiceidem|{$\cup$}idem]{}
{ {\proves{\G}{}{\alpha\cup\alpha \liso \alpha}}
}{}
\end{calculus}
\end{calculuscollections}
\caption{Algebraic rules (selected)}
\label{fig:cdgl-discrete}
\end{figure}
The rules in \rref{fig:cdgl-discrete} are selected algebraic properties which will be used in the proof of \rref{thm:inlining}.
These rules generalize known game equalities~\cite{DBLP:journals/sLogica/Goranko03} to refinement.
Some rules of \dRL~\cite{DBLP:conf/lics/LoosP16} are reused here, but others, such as those for repetitions $\prepeat{\alpha}$ are not sound for arbitrary games.
Rules \irref{refRefl} and \irref{refTrans} say refinement is a partial order.
Sequence has identities (\irref{seqidl} and \irref{seqidr}).
Rule \irref{assignCancel} deduplicates a double assignment if the first assignment does not influence the second: $\freevars{f}$ are the \emph{free variables} mentioned in $f$.
Choice (\irref{choiceassoc}) and sequence (\irref{seqassoc})  are associative, and choice is commutative (\irref{choicecomm}) and idempotent (\irref{choiceidem}), while sequence is right-distributive (\irref{seqdistr}).
Impossible tests can annihilate any following program \irref{annihl}.
Assigning a variable to itself is a no-op (\irref{nopAssign}).


\begin{figure}
\begin{calculuscollections}{\textwidth}
\begin{calculus}
\cinferenceRule[refDC|DC]{}
{\linferenceRule[formula]
  {\proves{\G}{}{\dbox{\pevolvein{\D{x}=f}{\phi}}{\psi}}}
  {\proves{\G}{}{\{\pevolvein{\D{x}=f}{\phi}\} \liso \{\pevolvein{\D{x}=f}{\phi \land \psi}\}}}
}{}
\end{calculus}\hfill
\begin{calculus}
\cinferenceRule[refDW|DW]{}
{
 \proves{\G}{}{\dleq{\{\prandom{x};\humod{\D{x}}{f};\ptest{\ivr}\}}{\{\pevolvein{\D{x}=f}{\ivr}\}}}
}{}
\end{calculus}
\end{calculuscollections}

\begin{calculuscollections}{\textwidth}
\begin{calculus}
\cinferenceRule[refSolve|SOL]{}
{\linferenceRule[formula]
   {\proves{\G}{}{\dbox{\prandom{t};\ptest{0 \leq t \leq d};\humod{x}{sln}}{\ivr}}}
   {\proves{\G,t = 0,d \geq 0}{}{
     \dleq{\{\humod{t}{d};\humod{x}{sln};\humod{\D{x}}{f};\humod{\D{t}}{1}\}}
          {\{\pdual{\{\pevolvein{\D{t}=1,\D{x}=f}{\ivr}\}}\}}
   }}
}{$sln$ solves ODE, $\{t,t',x,x'\} \cap FV(d) = \emptyset$}
\cinferenceRule[refDG|DG]{}
{\proves{\G}{}{
\dleq
  {\{\humod{y}{f_0};\pevolvein{\D{x}=f,\D{y}=a(x)y+b(x)}{\ivr}\}}
  {\{\pevolvein{\D{x}=f}{\ivr};\pdual{\{\prandom{y};\prandom{\D{y}}\}}\}}
}}{}
\end{calculus}
\end{calculuscollections}
\caption{Differential equation refinements}
\label{fig:ode-rules}
\end{figure}
\rref{fig:ode-rules} gives the ODE refinement rules.
\emph{Differential cut} \irref{refDC} says the domain constraints $\phi$ and $\phi \land \psi$ are equivalent if $\psi$ holds as a postcondition under domain constraint $\phi$.
\emph{Differential weakening} \irref{refDW} says an ODE is overapproximated by the program which assumes only the domain constraint.
\emph{Differential solution} \irref{refSolve} says that a solvable Angelic ODE $\pevolvein{\D{x}=f}{\ivr}$ with syntactic solution term $sln$ refined by a deterministic program which assigns the solution to $x$ after some duration through which the domain constraint holds.
Here $sln = (\elam{s}{\allstate}{(sol\ (\lget{s}{t}))})$ is the term corresponding to semantic solution $sol$.
\emph{Differential ghosts} \irref{refDG} soundly augments an ODE with a fresh dimension $y$ so long as the solution exists as long as that of $x$, and is known~\cite{DBLP:conf/lics/PlatzerT18} to enable proofs of otherwise unprovable properties.
The right-hand side for $y$ is required to be linear in $y$ because this suffices to ensure sufficient duration.
Axiom \irref{refDG} is not a strict equality because linear ODEs do not necessarily suffice to reach every of the nondeterministically assigned final values for $y$ and $\D{y}$.

\section{Theory}
\label{sec:theory}

We develop theoretical results about \CdGL refinements: soundness and the relationship between games and systems.
Proofs are in~\rref{app:proofs}.

\subsection{Soundness}
\label{sec:soundness}
The \emph{sine qua non} condition of any logic is soundness.
We show that every formula provable in the \CdGL refinement calculus is true in the type-theoretic semantics.
\begin{theorem}[Soundness]
  If $\G \vdash \phi$ is provable then the sequent $\G \vdash \phi$ is valid.
\end{theorem}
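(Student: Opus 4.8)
The plan is to prove soundness by structural induction on the derivation of $\G \vdash \phi$, reading validity of a sequent as the existence of a constructive term of type $\pity{s}{\sty}{(\sfun{\ftrans{\bigwedge \G}\ s}{\ftrans{\phi}\ s})}$. For each inference rule the induction hypothesis supplies such witnessing terms for the premises, and I would exhibit a witness for the conclusion by manipulating those terms inside the type theory of \rref{sec:semantics}. Since the game and formula semantics are a large elimination into the cumulative universes, each case amounts to a program transformation: unfold the semantics of the principal connective on both sides, then convert the hypothesis witnesses into a conclusion witness. Only the two elimination rules consult $\rank{\cdot}$, so the predicativity bookkeeping stays localized.

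The heart of the argument is the refinement-elimination rules \irref{boxref} and \irref{diamondref}. For \irref{boxref}, at each state $s$ the hypotheses give a proof of $\ftrans{\dbox{\alpha}{\phi}}\ s = \dtrans{\alpha}\ \ftrans{\phi}\ s$ together with a proof of $\ftrans{\dleq[i]{\alpha}{\beta}}\ s = \pity{t}{\typei{i}}{(\sfun{\dtrans{\alpha}\ t\ s}{\dtrans{\beta}\ t\ s})}$. The side condition $\rank{\phi} \leq i$ guarantees that $\ftrans{\phi}$ inhabits $\typei{i}$ by cumulativity, so I may instantiate the quantified postcondition $t$ at $\ftrans{\phi}$ and apply the resulting function to the strategy for $\alpha$, obtaining $\dtrans{\beta}\ \ftrans{\phi}\ s = \ftrans{\dbox{\beta}{\phi}}\ s$; the case \irref{diamondref} is identical after passing through the duality $\aleq{\alpha}{\beta}\lequiv\dleq{\pdual{\alpha}}{\pdual{\beta}}$, which collapses $\dtrans{\pdual{\alpha}}$ to $\atrans{\alpha}$. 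This is precisely the step for which predicative quantification over lower-rank postconditions is essential, and it is what the rank annotation on $\dleq[i]{\cdot}{\cdot}$ was engineered to permit.

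For the refinement-introduction rules I would discharge each goal $\dleq{\alpha}{\beta}$ by fixing an arbitrary postcondition $t:\typei{i}$ and state $s$ and transforming a strategy in $\dtrans{\alpha}\ t\ s$ into one in $\dtrans{\beta}\ t\ s$. The algebraic rules of \rref{fig:cdgl-discrete} reduce to unfolding the game semantics on both sides and producing a constructive isomorphism (often a definitional equality) between the two resulting types; the dualities \irref{dualSeq}, \irref{dualAssign}, and \irref{dualDNE} fall out of the mutual definition of $\atrans{\cdot}$ and $\dtrans{\cdot}$. The genuinely new content lies in the congruence rules that must cope with subnormality of game logics: in \irref{refSeq} I would use that $\lsysa_1$ contains no dual, so $\dbox{\lsysa_1}{\cdot}$ is normal and the second premise $\dbox{\lsysa_1}{(\dleq{\beta_1}{\beta_2})}$ can be threaded through $\dtrans{\alpha_1;\beta_1}\ t\ s = \dtrans{\alpha_1}\ (\dtrans{\beta_1}\ t)\ s$ to rewrite the inner postcondition; \irref{refSeqG} instead exploits that $\dleq{\beta_1}{\beta_2}$ holds in the empty context and hence at every intermediate state. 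The ODE rules of \rref{fig:ode-rules} I would settle by reasoning directly on the $\Sigma$/$\Pi$-types of solutions, reusing the predicate $\solves{sol}{s}{d}{\D{x}=f}$, and for \irref{refDG} invoking the linearity hypothesis that guarantees the ghost solution persists over the full duration.

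The main obstacle I anticipate is the loop rule \irref{refUnloop}: from $\dbox{\prepeat{\lsysa}}{(\dleq{\lsysa}{\beta})}$ I must produce, for arbitrary postcondition and state, a constructive map from the coinductive strategy type $\dtrans{\prepeat{\lsysa}}\ t\ s$ to $\dtrans{\prepeat{\beta}}\ t\ s$. Because these dormant repetition types are greatest fixed points, the transformation must itself be defined by guarded corecursion, consuming one $\lsysa$-refinement per unfolding; soundness here rests on monotonicity~\cite[Lem.\ 7]{ijcar20a} together with the restriction that $\lsysa$ is a system, which keeps the per-round refinement extractable from the normal box modality $\dbox{\prepeat{\lsysa}}{\cdot}$. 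Establishing the right coinductive invariant and the productivity of this transformation, all while remaining constructive, is the delicate part; by contrast the rank bookkeeping in the elimination rules and the remaining algebraic and ODE cases are comparatively mechanical.
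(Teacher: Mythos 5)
Your proposal matches the paper's proof in both structure and key ideas: induction on the derivation; instantiating the rank-$i$ quantifier at $\ftrans{\phi}$ via cumulativity (after dualizing for \irref{diamondref}) in the elimination rules; treating the introduction rules as strategy transformations for an arbitrary postcondition; exploiting normality (the derived axiom K, sound only for systems) in \irref{refSeq} and validity of the second premise in \irref{refSeqG}; and a corecursive transformation for \irref{refUnloop}. The paper carries out the \irref{refUnloop} step you flag as delicate exactly along the lines you sketch: it inverts both coinductive hypotheses into invariants $J$ (carrying the per-round refinement $\dleq{\lsysa}{\beta}$) and $K$ (carrying the postcondition), then re-forms the loop by \irref{bloopI} with invariant $J \land K$, using the system-only box-conjunction property to propagate both and \irref{boxref} to transfer each round from $\lsysa$ to $\beta$.
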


\subsection{Inlining}
\label{sec:systemify}

A game $\alpha$ describes what actions are allowed for each player but not how Angel selects among them given an adversarial Demon.
Every game modality proof, whether of $\dbox{\alpha}{\phi}$ or $\ddiamond{\alpha}{\phi},$ lets Demon make arbitrary (universally-quantified) moves within the confines of the game, and describes Angel's strategy to achieve a given postcondition $\phi$.
Whereas a given game can contain both Angelic and Demonic choices, a \emph{system} can only contain one or the other: modality $\dbox{\lsysa}{\phi}$ treats a system $\lsysa$ as Demonic while $\ddiamond{\lsysa}{\phi}$ treats a system as Angelic.

A folklore theorem describes the relation between hybrid games and hybrid systems: given a proof (winning strategy) for a hybrid game, one can ``inline'' Angel's strategy to produce a hybrid \emph{system} which implements that strategy.
The resulting system commits to Angel's choices according to the strategy and only leaves choices for the opponent Demon.
The constructivity of \CdGL ensures that Angel's choices are implementable by effective functions.
Since Demonic choices survive inlining, it is simplest to work with Demonic game modalities $\dbox{\alpha}{\phi}$ here, but every Angelic game modality $\ddiamond{\alpha}{\phi}$ could equivalently be expressed as $\dbox{\pdual{\alpha}}{\phi}$.
In this section, we formally define the inlining operation and specify its relation to the source game using refinements and \CdGL proof terms $M$.
Given a proof $M$ of some game property $\dbox{\alpha}{\phi}$ in context $\G,$ we will construct the \emph{system} $\pinline[\alpha]{M}$ by inlining the strategy $M$ in $\alpha$.
The system $\pinline[\alpha]{M}$ needs to commit to Angel's strategy according to $M$ while retaining all available choices of Demon.
What properties ought $\pinline[\alpha]{M}$ satisfy?

Committing to a safe Angel strategy should never make the system less safe.
The safety postcondition $\phi$ should \emph{transfer} to $\pinline[\alpha]{M},$ i.e., the following property should hold:
\[\G \vdash \dbox{\pinline[\alpha]{M}}{\phi}\]
However, transfer alone does not capture inlining.
For example, if we defined $\pinline[\alpha]{M} = \ptest{\bff}$ for all $\alpha$ and $M$, we would vacuously satisfy the transfer property but certainly not capture the meaning of strategy $M$.

We, thus, guarantee a converse direction. The inlining $\pinline[\alpha]{M}$ is a \emph{safety refinement} of $\alpha,$ so that \emph{every} postcondition $\psi$ satisfying $\dbox{\pinline[\alpha]{M}}{\psi}$ also satisfies $\dbox{\alpha}{\psi}$:
\[\G \vdash \dleq{\pinline[\alpha]{M}}{\alpha}\]
Intuitively, $\dbox{\pinline[\alpha]{M}}{\psi}$ says postcondition $\psi$ holds for every Demon behavior of $\pinline[\alpha]{M},$ while $\dbox{\alpha}{\psi}$ holds if there \emph{exists} an Angel strategy that ensures $\psi$ for every Demon behavior of $\alpha$.
Since strategy $M$ is designed to satisfy  $\psi,$ there certainly exists a strategy $M$ that satisfies $\psi$.
Refinement captures the notion that Angelic choices in $\pinline[\alpha]{M}$ are made more strictly than in $\alpha,$ while Demonic choices are only made more loosely.

Even transfer \emph{and} refinement do not fully validate the inlining operation, since defining $\pinline[\alpha]{M} = \alpha$ suffices to ensure both.
This leads to a third, most obvious property: $\pinline[\alpha]{M}$ must be a system when $\alpha$ is a game.
Not only are systemhood, transfer, and refinement all desirable properties for inlining, but their combination is an appealing specification because there is no trivial operation which satisfies all three.
If the above three properties hold, they also imply a sound version of the normal modal logic axiom K that is elusive in games:
If \(\G \vdash M : \dbox{\alpha}{\phi}\)
and \(\G \vdash \dbox{\pinline[\alpha]{M}}{\psi}\),
then \(\G \vdash \dbox{\alpha}{(\phi\land\psi)}\).
Additionally, transfer and systemhood suggest that game synthesis can ``export'' a game proof to a systems proof, for which synthesis tools already exist~\cite{DBLP:journals/fmsd/MitschP16,DBLP:conf/pldi/BohrerTMMP18}.
We discuss some technicalities first.


\paragraph*{Technicalities}
Inlining is recursively defined over the natural deduction proof terms of \rref{sec:proofterms}.
We find it useful to work entirely with modalities of the form $\dbox{\alpha;L}{\psi},$ where $L$ is a ``list'' of games which continue execution following $\alpha$.
Angelic programs are represented by duality $\pdual{\alpha}$ and terminal programs are supported by letting $L = \eskip$.
This style is interchangeable with normal-form \CdGL proofs; we elide the sequential composition ($\dbox{\alpha;\beta}{\phi} \lequiv \dbox{\alpha}{\dbox{\beta}{\phi}}$) and duality ($\dbox{\pdual{\alpha}}{\phi} \lequiv \ddiamond{\alpha}{\phi}$) steps which convert between the two.
Prior work~\cite{esop20} shows case-analysis, which is not \emph{canonical}, is sometimes \emph{normal} because state-dependent cases are decided only at runtime.
Normal case analyses are analogous to case-tree normal forms in lambda calculi with coproducts~\cite{DBLP:conf/lics/AltenkirchDHS01}.
Normal forms of (classical) ODE proofs have been characterized~\cite{DBLP:journals/corr/abs-1908-05535}.
We say a formula, context, or proof is \emph{system-test} if the only modalities it mentions are box system modalities.
Restricting inlining to the system-test fragment ensures the inlining of a proof variable $\pvx$ is a system.
System-test  is stronger than \emph{weak-test}  (no modalities in tests) but weaker than \emph{strong-test} (arbitrary modalities in tests).
We are not aware of practical use cases which require strong-test.

\paragraph*{Definitions}
We define the inlining operation.
Inlining is defined in terms of dormant proofs: $\pinline[\alpha]{M}$ is a hybrid system when $M$ is a proof of some $\dbox{\alpha}{\phi}$.
We represent Angelic connectives with box proofs of dual games: $\pinline[\pdual{\alpha}]{M}$ inlines a box proof $M$ of $\dbox{\pdual{\alpha}}{\phi},$ which is simply a diamond proof of $\ddiamond{\alpha}{\phi}$.
We first give the inlining of case-analysis and hypothesis proofs, the only two normal proofs which are not introduction forms.

\begin{align*}
\pinline[L]{\pvx} &= L\\
\pinline[L]{\ecase{A:(P \lor Q)}{B}{C}}  &= \{\ptest{P}; \pinline[L]{B}\} \cup \{\ptest{Q}; \pinline[L]{C}\}
\end{align*}
Proof by hypothesis trivially refines $L$ to itself, since they lack a concrete strategy for $L$.
Case analysis allows Demon to choose either branch, so long as it is provable.
When $P$ and $Q$ are not mutually exclusive, the inlining is nondeterministic.
Both $P$ and $Q$ are game-free in the system-test fragment and, in practical proofs, even quantifier-free first-order arithmetic.
We first give the \emph{Angelic} cases, which plug in the specific Angel strategy from proof $M$.
\begin{align*}
\pinline[\{\pdual{\humod{x}{f}};L\}]{(\edasgneq{y}{x}{\pvx}{M})}         ~&=~ \humod{x}{f}; \{\pinline[L]{M}\}\\
\pinline[\{\pdual{\prandom{x}};L\}]{(\etcons{f}{M})}  ~&=~ \humod{x}{f}; \{\pinline[L]{M}\}\\
\pinline[\{\pdual{\ptest{\psi}};L\}]{\edcons{M}{N}} ~&=~ \pinline[L]{N}\\
\pinline[\{\pdual{\{\alpha;\beta\}};L\}]{\edseq{M}}        ~&=~ \pinline[\{\pdual{\alpha};\{\pdual{\beta};L\}\}]{M}\\
\pinline[\{\pdual{\{\alpha\cup\beta\}};L\}]{\edinjL{M}}    ~&=~ \pinline[\{\pdual{\alpha};L\}]{M}\\
\pinline[\{\pdual{\{\alpha\cup\beta\}};L\}]{\edinjR{M}}    ~&=~ \pinline[\{\pdual{\beta};L\}]{M}\\
\pinline[\{\pdual{{\prepeat{\alpha}}};L\}]{(\efor{M}{N}{O})}   ~&=~\prepeat{\{\ptest{\met \metgr \metz}; \{\pinline[\pdual{\alpha}]{N}\}\}};\ptest{\met = \metz};\{\pinline[L]{O}\}
\end{align*}
Discrete assignments are unchanged.
Nondeterministic assignments are determinized with the assignment witness from the proof by \irref{arefRand}.
Subtly, Angelic tests can be eliminated by \irref{arefTest} because they are proven to succeed and because we wish only to keep tests which \emph{Demon} is required to pass.
A normal-form proof for a sequential composition $\alpha;\beta$ proves $\alpha$ with $\beta$ in the postcondition.
By \irref{seqassoc}, sequential compositions can be reassociated.
Normal Angelic choice proofs are injections, so Angelic proofs inline by \irref{arefChoiceR1} or \irref{arefChoiceR2} according to one branch or the other.
Normal Angelic repetition proofs are by convergence: some metric $\met$ decreases while maintaining invariant
$\conv$.
Hybrid systems loops are nondeterministic, so Demon chooses the loop duration, but the Demonic test $\met \metgr \metz$ must pass at each repetition and $\met = \metz$ must pass at the end, determinizing the loop duration.

To inline a discrete \emph{Demonic} connective, we do not restrict Demon's capabilities, but recursively traverse the proof so that Angelic proof terms can be inlined.
\begin{align*}
\pinline[\{\humod{x}{f};L\}]{\ebasgneq{y}{x}{\pvx}{M}}       ~&=~ \humod{x}{f}; \{\pinline[L]{M}\}\\
\pinline[\{\prandom{x};L\}]{(\elam{x}{\reals}{M})}           ~&=~ \prandom{x}; \{\pinline[L]{M}\}\\
\pinline[\{\ptest{\psi};L\}]{(\elam{\pvy}{\psi}{M})}         ~&=~ \ptest{\psi}; \{\pinline[L]{M}\}\\
\pinline[\{\{\alpha;\beta\};L\}]{\ebseq{M}}                  ~&=~ \pinline[\{\alpha;\beta;L\}]{M}\\
\pinline[\{\{\alpha\cup\beta\};L\}]{[M,N]}                   ~&=~ \{\pinline[\{\alpha;L\}]{M}\} \cup \{\pinline[\{\beta;L\}]{N}\} \\
\pinline[\{{\prepeat{\alpha};L}\}]{(\erep{M}{N}{\pvx:J}{O})} ~&=~ \prepeat{\{\pinline[\alpha]{N}\}}; \{\pinline[L]{O}\}
\end{align*}
Nondeterministic Demonic assignments, unlike Angelic ones, are not modified during inlining, because Demon retains the power to choose any value.
Demonic tests introduce assumptions, and must continue to do so in the inlined system to avoid changing the acceptable behavior.
Demonic sequential compositions, like Angelic ones, reassociate.
Demonic choices follow the distributive normal form of Demonic choice proofs, which prove each branch separately.
By \irref{seqdistr}, the distributed proof entails a proof of the original.
Demonic repetitions keep the loop, recalling that the coinductive loop invariant $J$ justifies postcondition by $O$.

We give the inlining cases for ODEs.
The inlining of an invariant-based Demonic proof (DC and DW) is a \emph{relaxation} of the ODE: the inlined system need not follow the precise behavior of the ODE so long as all invariants required for the proof are obeyed.
Indeed, this is where proof-based synthesis in \ModelPlex~\cite{DBLP:journals/fmsd/MitschP16} gains much of its power: real implementations never follow an ODE with perfect precision, but usually do follow its invariant-based relaxation.


  \begin{align*}
&\pinline[\{\humod{t}{0};\pdual{\pevolvein{\D{t}=1,\D{x}=f}{\ivr}};L\}]{\eas{d}{sln}{dom}{M}} =\\
&\quad\{\humod{t}{d}; \humod{x}{sln}; \humod{\D{x}}{f}\};\humod{\D{t}}{1}; \{\pinline[L]{M}\}\\[0.1cm]
&\pinline[\humod{t}{0};\pevolvein{\D{t}=1,\D{x}=f}{\ivr};L]{\eds{sln}{\elamu{d\,dom}{M}}} = \{\humod{t}{0};\pevolvein{\D{t}=1,\D{x}=f}{\ivr};\{\pinline[L]{M}\}\}\\
&\pinline[\{\pevolvein{\D{x}=f}{\ivr};L\}]{\edc{M:\rho}{N}} = \pinline[\{\pevolvein{\D{x}=f}{\ivr \land \rho};L\}]{N}\\
&\pinline[\{\pevolvein{\D{x}=f}{\ivr};L\}]{\edw{M}} =\ \{\prandom{x}; \humod{\D{x}}{f}; \ptest{\ivr}; \{\pinline[L]{M}\}\}\\
&\pinline[\{\pevolvein{\D{x}=f}{\ivr};\pdual{\prandom{y};\prandom{\D{y}}};L\}]{\edg{y_0}{a}{b}{M}} =\\
&\quad \{\humod{y}{f_0}; \pinline[\{\pevolvein{\D{x}=f,\D{y}=a(x)y + b(x)}{\ivr};L\}]{M}\}
  \end{align*}
An Angelic ODE solution (AS) specifies ODE duration $d$ and solution term $sln$.
The Angelic domain constraint is comparable to an Angelic test, so proved, and, hence, omitted in the inlining.
In Demonic ODE solutions (DS), the duration and domain constraint are \emph{assumptions}.
Since our ODEs are Lipschitz, they have unique solutions and Demon could inline the unique solution of the ODE, as does case (AS).
There is no obvious benefit to doing so, except that the inlined system would fall within \emph{discrete} dynamic logic.
Differential Cut (DC) inlining introduces an assumption in the domain constraint, and is sound by \irref{refDC}.
By itself, DC \emph{strengthens} a program, but in combination with DW enables relaxation of ODEs.
Differential Weakening (DW) relaxes an ODE by allowing $x$ and $\D{x}$ to change \emph{arbitrarily} so long as the domain constraint $\ivr$ (and thus invariants introduced by DC) remain true.

\paragraph*{Inlining Example}
Recall example $\pushpull$ and its safety property \rref{eq:pp-safe}.
Let $M$ be  the proof of \rref{eq:pp-safe} with a mirroring strategy described in \rref{sec:example-game}.
Then the result of inlining is
\begin{align*}
\pinline[\pushpull]{M} =
\big\{    &\{\humod{L}{-1};\humod{R}{1};\pevolvein{\D{x}=L+R}{x_l \leq x \leq x_r}\}\\
  \cup&\{\humod{L}{1};\humod{R}{-1};\pevolvein{\D{x}=L+R}{x_l \leq x \leq x_r}\}
\prepeat{\big\}}
\end{align*}
which we discuss step-by-step.
Demonic repetition inlining just repeats the body.
Inlining a Demonic choice follows the structure of the proof, not the source program, hence the ODE occurs for each branch.
Each branch commits to a choice of $L,$ and each branch of $M$ resolves the Angelic choice $R$ to balance out $L$.
When inlining an Angelic choice, only the branch taken is emitted.
In $\pinline[\pushpull]{M},$ we assume that $M$ proves the ODE $\pevolvein{\D{x}=L+R}{x_l \leq x \leq x_r}$ by replacing it with its solution, which is why the ODE appears verbatim in the refined system.
A differential invariant proof could also be used with a differential cut (DC) of $x = x_0$, in which case physics are represented by the program $\prandom{x};\prandom{\D{x}};\ptest{x_l \leq x \leq x_r \land x = x_0}$ in the result of inlining.
Different proofs generally give rise to different linings, some of which are less restrictive than others.
Differential invariants, especially inequational invariants, ($x \geq x_0$ vs.\ $x = x_0$) can be more easily monitored with  finite-precision numbers.

Note that the system $\pinline[\pushpull]{M}$ is a refinement of $\pushpull$ and satisfies the same safety theorem $\pre \limply \dbox{\pinline[\pushpull]{M}}{x=x_0}$.
Next, we show that this is the case for all inlined strategies.

\paragraph*{Theorems}
We state theorems \textbf{(proven in \rref{app:proofs})} showing how the inlining of a game $\alpha$ refines $\alpha$.
Recall that  $\G, \alpha, \phi,$ and $M$ are in the \emph{system-test} fragment of \CdGL.
\begin{theorem}[Systemhood]
 $\pinline[\alpha]{M}$ is a system, i.e., it does not contain dualities.
\label{thm:systemhood}
\end{theorem}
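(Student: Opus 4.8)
The plan is to prove this by structural induction on the proof term $M$, following exactly the recursive definition of $\pinline[\alpha]{M}$. Since inlining is defined by cases on the shape of $M$ (and correspondingly on the shape of the game skeleton $\alpha;L$), the induction has one case per defining clause of the inlining operation. In each case I inspect the right-hand side of the defining equation and verify that it is built only from game constructs that are not the duality operator $\pdual{\cdot}$, appealing to the induction hypothesis on every recursive occurrence of $\pinline{\cdot}$.

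First I would handle the two non-introduction normal forms. For the hypothesis case $\pinline[L]{\pvx} = L$, systemhood follows because the system-test restriction guarantees that the continuation list $L$ is itself a system: proof variables occur only at system-test type, so $L$ contains no dualities. For the case-analysis clause, the result $\{\ptest{P};\pinline[L]{B}\}\cup\{\ptest{Q};\pinline[L]{C}\}$ is assembled from tests, sequential composition, and choice applied to the recursive inlinings, which are systems by the induction hypothesis; the test conditions $P$ and $Q$ are game-free in the system-test fragment, so they introduce no dualities either. Next I would dispatch the Angelic discrete cases. The crucial observation is that \emph{every} Angelic clause discards the $\pdual{\cdot}$ wrapper present in its input skeleton: a dual assignment $\pdual{\humod{x}{f}}$ becomes the plain $\humod{x}{f}$, a dual nondeterministic assignment is determinized to $\humod{x}{f}$, dual tests are dropped entirely, and dual choices select a single branch. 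In each case the emitted fragment is a system built from non-dual constructs over recursive inlinings, so the induction hypothesis closes the case. The Angelic repetition clause yields a loop whose body and tail are recursive inlinings wrapped in ordinary tests and repetition, again dual-free by the induction hypothesis.

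The Demonic discrete cases are analogous but conceptually simpler, since Demonic constructs are already non-dual in the skeleton: each clause re-emits the same top-level construct ($\humod{x}{f}$, $\prandom{x}$, $\ptest{\psi}$, sequential composition, choice, or repetition) over recursively inlined subproofs, so systemhood is immediate from the induction hypothesis once one checks that no clause reintroduces a $\pdual{\cdot}$. Finally the ODE cases: the Angelic solution case (AS) emits only assignments and the plain ODE, the Demonic solution case (DS) re-emits the ODE directly, the DC case is purely a recursive call with a strengthened domain constraint, the DW case produces $\prandom{x};\humod{\D{x}}{f};\ptest{\ivr};\pinline[L]{M}$ which is manifestly a system, and the DG case introduces a ghost assignment before a recursive inlining on an augmented ODE. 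None of these reintroduces a duality.

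I expect the main obstacle to be not any single case but the bookkeeping that each clause is exhaustive and well-founded: I must confirm that the clauses of the inlining definition cover every normal-form proof term at system-test type, and that the recursive calls are on structurally smaller proofs so the induction is justified. The one genuinely non-trivial point is the hypothesis case, where systemhood of the output depends entirely on the \emph{system-test} hypothesis on $L$ (and hence on $\G,\alpha,\phi$); this is precisely why the theorem is stated relative to the system-test fragment, and I would flag that restriction as the load-bearing assumption that makes the base case go through.
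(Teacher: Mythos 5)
Your proposal is correct and takes essentially the same route as the paper: a structural induction on $M$ in which each defining clause of $\pinline[\alpha]{M}$ is inspected to confirm its right-hand side contains no $\pdual{\cdot}$, with the hypothesis case $\pinline[L]{\pvx}=L$ discharged by the system-test restriction on the context. The only point the paper makes explicit that you leave implicit is why the induction hypothesis stays applicable at each recursive call, namely that renaming, assignment, and (system) tests preserve the fact that $\G$ is system-test.
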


\begin{theorem}[Inlining transfer]
  If $\G \vdash M : \dbox{\alpha}{\phi}$ for system-test $\G, M,$ and hybrid game $\alpha$ then $\G \vdash \dbox{\pinline[\alpha]{M}}{\phi}$.
\label{thm:transfer}
\end{theorem}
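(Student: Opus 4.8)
The plan is to prove \rref{thm:transfer} by structural induction on the proof term $M$. Because the inlining operation threads a continuation ``list'' $L$ through its recursive calls, the induction must be carried out on a strengthened statement, generalized over the context $\G$, the continuation $L$, and the postcondition $\phi$: whenever $\G \vdash M : \dbox{L}{\phi}$ in the system-test fragment, then $\G \vdash \dbox{\pinline[L]{M}}{\phi}$. Setting $L = \alpha$ recovers the theorem. I emphasize that transfer cannot be obtained from the later refinement result via \irref{boxref}, since $\dleq{\pinline[\alpha]{M}}{\alpha}$ transports box postconditions in the direction from $\pinline[\alpha]{M}$ to $\alpha$, the opposite of what transfer demands; hence a direct inductive construction is needed.

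The base and discrete cases follow a uniform pattern. For the hypothesis case $\pinline[L]{\pvx} = L$, the goal $\dbox{L}{\phi}$ is exactly the assumption witnessed by $\pvx$; for case analysis $\ecase{A}{B}{C}$, the inlining is a Demonic choice whose box proof must succeed in both branches, and the Demonic tests $\ptest{P}$, $\ptest{Q}$ supply precisely the assumptions under which $B$ and $C$ type, so the induction hypothesis applies in each branch. The Angelic connectives commit to the choice recorded in $M$: $\etcons{f}{M}$ inlines to a deterministic assignment, $\edcons{M}{N}$ drops the provably-true test, and $\edinjL{M}$ or $\edinjR{M}$ keeps only the branch taken; after executing the committed step the box modality over the inlined system reduces to the induction hypothesis on the relevant subproof, justified by soundness of \irref{arefRand}, \irref{arefTest}, and \irref{arefChoiceR1}/\irref{arefChoiceR2}. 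The Demonic connectives instead retain Demon's full freedom: assignments, tests, and choices are kept (Demonic choice being distributed by \irref{seqdistr}) and sequential compositions reassociated by \irref{seqassoc}, with the goal following by applying the induction hypothesis underneath the preserved Demonic structure.

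The ODE cases require more care but are still driven by the induction hypothesis plus a corresponding soundness fact. For the Demonic solution, the ODE is kept, and since our ODEs are Lipschitz and thus have unique solutions, for every duration and solution the domain assumption discharges and the induction hypothesis on the body finishes the case; the Angelic solution instead substitutes the committed duration $d$ and solution term $sln$, relying on soundness of \irref{refSolve}. The differential cut recurses into the strengthened domain $\ivr \land \rho$ and is closed by the induction hypothesis together with \irref{refDC}; differential weakening relaxes the ODE to $\prandom{x};\humod{\D{x}}{f};\ptest{\ivr}$, whose box modality transfers because the retained test $\ptest{\ivr}$ re-supplies exactly the domain assumption the subproof needs (sound by \irref{refDW}); and differential ghost introduces the fresh dimension $y$, which cannot occur in $\phi$, so the postcondition is unaffected, with soundness from \irref{refDG}.

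The main obstacle I expect is the Angelic repetition case $\efor{M}{N}{O}$, whose inlining is the guarded Demonic loop $\prepeat{\{\ptest{\met \metgr \metz}; \{\pinline[\pdual{\alpha}]{N}\}\}};\ptest{\met = \metz};\{\pinline[L]{O}\}$. Here the determinization replaces Angel's convergence argument by Demonic nondeterministic iteration fenced in by metric tests, so proving the box modality requires a loop-invariant argument (via \irref{bloopI}) with the convergence predicate $\conv$ as invariant: $\conv$ holds initially by $M$; each iteration preserves $\conv$ because the guard $\ptest{\met \metgr \metz}$ supplies the hypothesis under which the inlined body $\pinline[\pdual{\alpha}]{N}$ maintains $\conv$ (induction hypothesis on $N$); and on exit the test $\ptest{\met = \metz}$ with $\conv$ lets the induction hypothesis on $O$ establish $\phi$. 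The subtlety is that Demon may try to stop while $\met \metgr \metz$ still holds, but then the Demonic test $\ptest{\met = \metz}$ fails and the box modality is vacuously satisfied, so only genuine termination at $\met = \metz$ contributes. Checking that this invariant argument goes through uniformly, and that it meshes with the analogous coinductive reasoning for the Demonic repetition case $\erep{M}{N}{\pvx:J}{O}$ (where the loop is kept and $J$ serves as the box invariant), is the crux of the proof.
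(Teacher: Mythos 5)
Your proposal is correct and takes essentially the same route as the paper: a structural induction on the normal proof term $M$, generalized over the continuation list $L$, in which Angelic cases commit to the choices recorded in the proof, Demonic cases keep Demon's structure intact, and the crux case of Angelic repetition $\efor{M}{N}{O}$ is discharged exactly as in the paper by \irref{bloopI} with the convergence predicate $\conv$ as loop invariant, the guard $\ptest{\met \metgr \metz}$ feeding the inductive step and the exit test $\ptest{\met = \metz}$ feeding the postcondition step (and the Demonic loop case keeping $J$ as invariant). The only cosmetic difference is that the paper's transfer proof never appeals to the soundness of the refinement rules you cite (\irref{arefRand}, \irref{arefTest}, \irref{refDC}, \irref{refDW}, \irref{refSolve}, \irref{refDG}) --- those enter only in the separate refinement theorem --- but instead rebuilds each box proof directly with the \CdGL proof rules (\irref{asgnI}, \irref{btestI}, \irref{brandomI}, \irref{seqI}, \irref{bloopI}); since that direct reconstruction is what your case descriptions substantively perform anyway, the extra citations are redundant rather than wrong.
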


\begin{theorem}[Inlining refinement]
If $\G \vdash M : \dbox{\alpha}{\phi}$ for system-test $\G, M,$ and hybrid game $\alpha$ then $\G \vdash \dleq{\pinline[\alpha]{M}}{\alpha}$.
\label{thm:inlining}
\end{theorem}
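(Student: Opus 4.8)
The plan is to prove a slightly generalized statement by structural induction on the normal proof term $M$, matching the recursive structure of the inlining operation. Concretely, I would show that $\G \vdash M : \dbox{L}{\phi}$ implies $\G \vdash \dleq{\pinline[L]{M}}{L}$ for every system-test continuation list $L$, with the theorem being the special case $L = \alpha$. Throughout I would freely invoke \rref{thm:systemhood}, which guarantees that each $\pinline[\cdot]{\cdot}$ is dual-free, precisely so that the system-restricted rules \irref{refSeq} and \irref{refUnloop} apply to the left-hand sides. The guiding principle is that each clause of the inlining definition was engineered to coincide with a specific refinement rule, so most cases reduce to ``apply the matching rule and appeal to the induction hypothesis on the continuation.''

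First I would dispatch the non-loop discrete cases. The hypothesis clause $\pinline[L]{\pvx}=L$ is immediate by \irref{refRefl}. Demonic connectives (nondeterministic assignment, test, choice, sequential composition) do not restrict Demon, so each follows by \irref{refSeq} together with the induction hypothesis for the tail, using \irref{seqassoc} to reassociate wherever the inlining reassociates and \irref{seqdistr} to match the distributed Demonic-choice proof. The Angelic commitment cases plug in Angel's concrete choice and are justified by the dedicated rules: nondeterministic assignment by \irref{arefRand}, the eliminated Angelic test by \irref{arefTest}, and the selected branch by \irref{arefChoiceR1} or \irref{arefChoiceR2}, each followed by \irref{refSeq} and the induction hypothesis. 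The case-analysis clause is handled by splitting on the disjunction proof $A:(P\lor Q)$: in each branch the guarded test supplies the relevant hypothesis to the context, so the induction hypothesis yields $\dleq{\pinline[L]{B}}{L}$ (respectively for $C$), and the two branches are recombined through \irref{drefChoiceL1}, \irref{drefChoiceL2}, and \irref{refTrans}.

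The ODE clauses follow their eponymous rules directly. Solution inlining is sound by \irref{refSolve} once $sln$ is recognized as the syntactic solution term, cut inlining threads the extra invariant $\rho$ into the domain constraint by \irref{refDC}, weakening relaxes the ODE to its guarded assignment form by \irref{refDW}, and ghost inlining augments the ODE with the fresh linear dimension by \irref{refDG}; in every case the continuation is closed by \irref{refSeq} and the induction hypothesis on $M$.

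The main obstacle will be the two repetition clauses. For the Demonic loop, the target is $\dleq{\{\prepeat{\pinline[\alpha]{N}};\pinline[L]{O}\}}{\{\prepeat{\alpha};L\}}$, which I would obtain by \irref{refUnloop} (legal since $\pinline[\alpha]{N}$ is a system by \rref{thm:systemhood}), reducing to the body refinement $\dbox{\prepeat{\pinline[\alpha]{N}}}{(\dleq{\pinline[\alpha]{N}}{\alpha})}$, after which the continuation is discharged via $O$ and \irref{refSeq}. The difficulty is that the induction hypothesis delivers the body refinement only under the coinductive invariant $J$, so I must first establish that $J$ is a genuine loop invariant of the inlined system (from the coinductive step $N$) and then apply the induction hypothesis within its scope. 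The Angelic convergence loop $\efor{M}{N}{O}$ is the most delicate: the source is a diamond-style loop that Angel must terminate by driving the metric $\met$ down to $\metz$, whereas the inlining is a box-style loop guarded by $\met \metgr \metz$ with exit test $\met = \metz$. Here I expect to need the variant reasoning of the underlying convergence proof to justify that the guarded box-loop faithfully refines the Angelic repetition, combining an \irref{refUnloop}-style body comparison with the termination guarantee extracted from the metric. These two cases carry essentially all of the real work; the remaining structural cases are routine given the matching refinement rules.
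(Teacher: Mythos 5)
Your overall strategy is exactly the paper's: induct on the normal proof term generalized over a continuation list $L$, invoke \rref{thm:systemhood} to license the system-restricted rules, close each discrete and ODE case with its eponymous refinement rule, and handle the Demonic loop by \irref{refUnloop} with the coinductive invariant $J$ established as an invariant of the inlined body (this step is \rref{thm:transfer} applied to $N$) so that the induction hypothesis applies under $J$. One smaller slip: in the case-analysis clause, recombining the branches ``through \irref{drefChoiceL1}, \irref{drefChoiceL2}, and \irref{refTrans}'' does not work, because those projections would force you to prove $\G \vdash \dleq{\ptest{P};\{\pinline[L]{B}\}}{L}$ outright, and a Demonic test on the left of a refinement is an assumption that can only be discharged where $P$ is known --- but $\G$ only provides $P \lor Q$. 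The paper instead keeps both guarded branches together: after \irref{refSeq} inside each branch it factors out $L$ by \irref{seqdistr}, identifies $\ptest{P}\cup\ptest{Q}$ with $\ptest{P \lor Q}$ by the definition of disjunction, discharges that test against the proof $A$ via \irref{drefTest}, and finishes with \irref{seqidl}. This is fixable, but as stated your recombination fails.

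The genuine gap is the Angelic convergence loop, which you correctly flag as hardest but do not actually solve. Your plan of ``an \irref{refUnloop}-style body comparison with the termination guarantee'' cannot be executed: \irref{refUnloop} concludes refinements of the shape $\dleq{\prepeat{\lsysa}}{\prepeat{\beta}}$, whereas here the refined game is $\pdual{\prepeat{\alpha}}$, the \emph{dual} of a repetition, which is not itself a repetition --- no rule in \rref{fig:connective-rules}, \rref{fig:cdgl-discrete}, or \rref{fig:ode-rules} matches it. The paper closes this case by introducing a dedicated rule \irref{loopInline}, which concludes $\dleq{\prepeat{\{\ptest{\met \metgr \metz};\{\pinline[\pdual{\alpha}]{N}\}\}};\ptest{\met=\metz}}{\pdual{\prepeat{\alpha}}}$ directly from the convergence premisses, and whose soundness is proved \emph{semantically} as a case of the soundness theorem; that soundness proof in turn invokes the inlining induction hypothesis and \rref{thm:transfer}, so soundness, transfer, and refinement are established by one simultaneous induction. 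The case also needs \rref{thm:transfer} on a re-packed convergence proof of $\ddiamond{\prepeat{\alpha}}{(J \land \met=\metz)}$ in order to push the continuation premiss $O$ through the inlined loop via \irref{mon}. Without this mechanism (or an equivalent bespoke semantic argument), your induction cannot be closed: ``variant reasoning'' by itself does not yield a derivation in the given calculus, and this case is precisely where the calculus as presented in the body of the paper is insufficient.
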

\rref{thm:systemhood} is proven by trivial induction on $M$.
\rref{thm:transfer} is proven by inducting on $M,$ reusing its contents in a proof for $\pinline[\alpha]{M}$.
\rref{thm:inlining} inducts on $M$ and in each case appeals to the corresponding refinement rule.
The fact that \rref{thm:inlining} could be proved validates the strength of our refinement rules.

\section{Conclusion}
\label{sec:conclusion}
We developed a refinement calculus for Constructive Differential Game Logic (\CdGL).
Technical challenges in this development included the facts that game logic is subnormal and that the \emph{constructive} box and diamond modalities $\dbox{\alpha}{\phi}$ and $\ddiamond{\alpha}{\phi}$ are not interdefinable.
We introduced a new constructive semantics for refinement and proved soundness.
We formalized an inlining operation and folklore theorem which reduce verified hybrid games to hybrid systems by specializing a game to the commitments made by its winning strategy.
The immediate application of the inlining operation, which we will pursue in future work, is to enable translating game proofs into the systems proofs which are supported by existing synthesis tools.
This allows exploiting the greater expressive power of games without reimplementing tools.
Once an implementation is available, there are wide array of applications studied in the hybrid systems and hybrid games literature which would benefit from the modeling power and synthesis guarantees that are possible with \CdGL.

Our refinement calculus is of theoretical and practical interest beyond reducing games to systems.
We expect that refinements can be used to provide shorter proofs, to compare the efficacy (dominance) of two strategies for the same game, and to determine when two strategies or programs should be considered ``the same''.
These questions are worth pursuing both for hybrid games and for games in general.

\bibliography{constructive-games,platzer}

\begin{thebibliography}{10}

\bibitem{DBLP:journals/iandc/AbramskyJM00}
Samson Abramsky, Radha Jagadeesan, and Pasquale Malacaria.
\newblock Full abstraction for {PCF}.
\newblock {\em Inf. Comput.}, 163(2):409--470, 2000.
\newblock \href {http://dx.doi.org/10.1006/inco.2000.2930}
  {\path{doi:10.1006/inco.2000.2930}}.

\bibitem{DBLP:conf/lics/AltenkirchDHS01}
Thorsten Altenkirch, Peter Dybjer, Martin Hofmann, and Philip~J. Scott.
\newblock Normalization by evaluation for typed lambda calculus with
  coproducts.
\newblock In {\em {LICS}}, pages 303--310. {IEEE} Computer Society, 2001.
\newblock URL: \url{https://doi.org/10.1109/LICS.2001.932506}, \href
  {http://dx.doi.org/10.1109/LICS.2001.932506}
  {\path{doi:10.1109/LICS.2001.932506}}.

\bibitem{DBLP:journals/jacm/AlurHK02}
Rajeev Alur, Thomas~A. Henzinger, and Orna Kupferman.
\newblock Alternating-time temporal logic.
\newblock {\em J. {ACM}}, 49(5):672--713, 2002.
\newblock \href {http://dx.doi.org/10.1145/585265.585270}
  {\path{doi:10.1145/585265.585270}}.

\bibitem{DBLP:conf/cdc/BhatiaKV10}
Amit Bhatia, Lydia~E. Kavraki, and Moshe~Y. Vardi.
\newblock Motion planning with hybrid dynamics and temporal goals.
\newblock In {\em {CDC}}. {IEEE}, 2010.
\newblock URL: \url{https://doi.org/10.1109/CDC.2010.5717440}.

\bibitem{bishop1967foundations}
Errett Bishop.
\newblock Foundations of constructive analysis.
\newblock 1967.

\bibitem{DBLP:journals/corr/abs-1908-05535}
Brandon Bohrer and Andr{\'{e}} Platzer.
\newblock Toward structured proofs for dynamic logics.
\newblock {\em CoRR}, abs/1908.05535, 2019.
\newblock URL: \url{http://arxiv.org/abs/1908.05535}, \href
  {http://arxiv.org/abs/1908.05535} {\path{arXiv:1908.05535}}.

\bibitem{esop20}
Brandon Bohrer and Andr\'{e} Platzer.
\newblock Constructive game logic.
\newblock In {\em ESOP}, 2020.
\newblock In press.

\bibitem{ijcar20a}
Brandon Bohrer and André Platzer.
\newblock Constructive hybrid games, 2020.
\newblock \href {http://arxiv.org/abs/2002.02536} {\path{arXiv:2002.02536}}.

\bibitem{DBLP:conf/pldi/BohrerTMMP18}
Brandon Bohrer, Yong~Kiam Tan, Stefan Mitsch, Magnus~O. Myreen, and Andr{\'{e}}
  Platzer.
\newblock {VeriPhy}: Verified controller executables from verified
  cyber-physical system models.
\newblock In Dan Grossman, editor, {\em PLDI}, pages 617--630. {ACM}, 2018.
\newblock \href {http://dx.doi.org/10.1145/3192366.3192406}
  {\path{doi:10.1145/3192366.3192406}}.

\bibitem{bridges2007techniques}
Douglas~S Bridges and Luminita~Simona Vita.
\newblock {\em Techniques of constructive analysis}.
\newblock Springer, 2007.

\bibitem{DBLP:journals/fuin/Celani01}
Sergio~A. Celani.
\newblock A fragment of intuitionistic dynamic logic.
\newblock {\em Fundam. Inform.}, 46(3):187--197, 2001.
\newblock URL:
  \url{http://content.iospress.com/articles/fundamenta-informaticae/fi46-3-01}.

\bibitem{DBLP:conf/concur/ChatterjeeHP07}
Krishnendu Chatterjee, Thomas~A. Henzinger, and Nir Piterman.
\newblock Strategy logic.
\newblock In Lu{\'{\i}}s Caires and Vasco~Thudichum Vasconcelos, editors, {\em
  CONCUR}, LNCS. Springer, 2007.
\newblock \href {http://dx.doi.org/10.1007/978-3-540-74407-8_5}
  {\path{doi:10.1007/978-3-540-74407-8_5}}.

\bibitem{DBLP:journals/iandc/CoquandH88}
Thierry Coquand and G{\'{e}}rard~P. Huet.
\newblock The calculus of constructions.
\newblock {\em Inf. Comput.}, 76(2/3):95--120, 1988.
\newblock \href {http://dx.doi.org/10.1016/0890-5401(88)90005-3}
  {\path{doi:10.1016/0890-5401(88)90005-3}}.

\bibitem{DBLP:conf/colog/CoquandP88}
Thierry Coquand and Christine Paulin.
\newblock Inductively defined types.
\newblock In Per Martin{-}L{\"{o}}f and Grigori Mints, editors, {\em COLOG},
  volume 417 of {\em LNCS}. Springer, 1988.
\newblock \href {http://dx.doi.org/10.1007/3-540-52335-9_47}
  {\path{doi:10.1007/3-540-52335-9_47}}.

\bibitem{DBLP:conf/mkm/Cruz-FilipeGW04}
Lu{\'{\i}}s Cruz{-}Filipe, Herman Geuvers, and Freek Wiedijk.
\newblock C-{CoRN}, the constructive {C}oq repository at {Nijmegen}.
\newblock In Andrea Asperti, Grzegorz Bancerek, and Andrzej Trybulec, editors,
  {\em MKM}, volume 3119 of {\em LNCS}. Springer, 2004.
\newblock Accessed: git commits 9c44dae and 6411967.
\newblock URL: \url{https://github.com/coq-community/corn}, \href
  {http://dx.doi.org/10.1007/978-3-540-27818-4_7}
  {\path{doi:10.1007/978-3-540-27818-4_7}}.

\bibitem{degen2006towards}
JW~Degen and JM~Werner.
\newblock Towards intuitionistic dynamic logic.
\newblock {\em Logic and Logical Philosophy}, 15(4):305--324, 2006.
\newblock \href {http://dx.doi.org/10.12775/LLP.2006.018}
  {\path{doi:10.12775/LLP.2006.018}}.

\bibitem{DBLP:journals/fac/Dybjer94}
Peter Dybjer.
\newblock Inductive families.
\newblock {\em Formal Asp. Comput.}, 6(4):440--465, 1994.
\newblock \href {http://dx.doi.org/10.1007/BF01211308}
  {\path{doi:10.1007/BF01211308}}.

\bibitem{DBLP:journals/automatica/FainekosGKP09}
Georgios~E. Fainekos, Antoine Girard, Hadas Kress{-}Gazit, and George~J.
  Pappas.
\newblock Temporal logic motion planning for dynamic robots.
\newblock {\em Automatica}, 45(2), 2009.
\newblock URL: \url{https://doi.org/10.1016/j.automatica.2008.08.008}.

\bibitem{DBLP:conf/IEEEcca/FilippidisDLOM16}
Ioannis Filippidis, Sumanth Dathathri, Scott~C. Livingston, Necmiye Ozay, and
  Richard~M. Murray.
\newblock Control design for hybrid systems with {TuLiP}: The temporal logic
  planning toolbox.
\newblock In {\em {IEEE} {Conference} on {Control} {Applications}}. {IEEE},
  2016.
\newblock URL: \url{https://doi.org/10.1109/CCA.2016.7587949}.

\bibitem{DBLP:conf/iros/FinucaneJK10}
Cameron Finucane, Gangyuan Jing, and Hadas Kress{-}Gazit.
\newblock {LTLMoP}: Experimenting with language, temporal logic and robot
  control.
\newblock In {\em {IROS}}. {IEEE}, 2010.
\newblock URL: \url{https://doi.org/10.1109/IROS.2010.5650371}.

\bibitem{foster2010bidirectional}
John~Nathan Foster.
\newblock {\em Bidirectional programming languages}.
\newblock PhD thesis, University of Pennsylvania, 2010.
\newblock URL: \url{https://repository.upenn.edu/edissertations/56}.

\bibitem{ghosh2008strategies}
Sujata Ghosh.
\newblock Strategies made explicit in dynamic game logic.
\newblock {\em Workshop on Logic and Intelligent Interaction at ESSLLI}, pages
  74 --81, 2008.

\bibitem{DBLP:journals/sLogica/Goranko03}
Valentin Goranko.
\newblock The basic algebra of game equivalences.
\newblock {\em Studia Logica}, 75(2):221--238, 2003.
\newblock URL: \url{https://doi.org/10.1023/A:1027311011342}, \href
  {http://dx.doi.org/10.1023/A:1027311011342}
  {\path{doi:10.1023/A:1027311011342}}.

\bibitem{DBLP:harel2000}
David Harel, Dexter Kozen, and Jerzy Tiuryn.
\newblock {\em Dynamic logic}.
\newblock {MIT} Press, 2000.
\newblock \href {http://dx.doi.org/10.1145/568438.568456}
  {\path{doi:10.1145/568438.568456}}.

\bibitem{hughes1996new}
George~Edward Hughes, Max~J Cresswell, and Mary~Meyerhoff Cresswell.
\newblock {\em A new introduction to modal logic}.
\newblock Psychology Press, 1996.

\bibitem{kamide2010strong}
Norihiro Kamide.
\newblock Strong normalization of program-indexed lambda calculus.
\newblock {\em Bull. Sect. Logic Univ. \L \'{o}d\'{z}}, 39(1-2):65--78, 2010.

\bibitem{DBLP:journals/tac/KloetzerB08}
Marius Kloetzer and Calin Belta.
\newblock A fully automated framework for control of linear systems from
  temporal logic specifications.
\newblock {\em {IEEE} Trans. Automat. Contr.}, 53(1):287--297, 2008.
\newblock URL: \url{https://doi.org/10.1109/TAC.2007.914952}.

\bibitem{DBLP:journals/tcs/Kozen83}
Dexter Kozen.
\newblock Results on the propositional mu-calculus.
\newblock {\em Theor. Comput. Sci.}, 27:333--354, 1983.
\newblock URL: \url{https://doi.org/10.1016/0304-3975(82)90125-6}, \href
  {http://dx.doi.org/10.1016/0304-3975(82)90125-6}
  {\path{doi:10.1016/0304-3975(82)90125-6}}.

\bibitem{DBLP:journals/toplas/Kozen97}
Dexter Kozen.
\newblock Kleene algebra with tests.
\newblock {\em ACM Trans. Program. Lang. Syst.}, 19(3):427--443, 1997.

\bibitem{lipton1992constructive}
James Lipton.
\newblock Constructive {K}ripke semantics and realizability.
\newblock In {Y. N.} Moschovakis, editor, {\em Logic from Computer Science},
  pages 319--357. Springer, 1992.
\newblock \href {http://dx.doi.org/10.1007/978-1-4612-2822-6_13}
  {\path{doi:10.1007/978-1-4612-2822-6_13}}.

\bibitem{DBLP:conf/lics/LoosP16}
Sarah~M. Loos and Andr{\'e} Platzer.
\newblock Differential refinement logic.
\newblock In Martin Grohe, Eric Koskinen, and Natarajan Shankar, editors, {\em
  LICS}, pages 505--514. ACM, 2016.
\newblock \href {http://dx.doi.org/10.1145/2933575.2934555}
  {\path{doi:10.1145/2933575.2934555}}.

\bibitem{DBLP:conf/itp/MakarovS13}
Evgeny Makarov and Bas Spitters.
\newblock The {Picard} algorithm for ordinary differential equations in {Coq}.
\newblock In Sandrine Blazy, Christine Paulin{-}Mohring, and David Pichardie,
  editors, {\em ITP}, volume 7998 of {\em LNCS}. Springer, 2013.
\newblock \href {http://dx.doi.org/10.1007/978-3-642-39634-2_34}
  {\path{doi:10.1007/978-3-642-39634-2_34}}.

\bibitem{DBLP:journals/corr/Mamouras16}
Konstantinos Mamouras.
\newblock Synthesis of strategies using the {Hoare} logic of angelic and
  demonic nondeterminism.
\newblock {\em Logical Methods in Computer Science}, 12(3), 2016.
\newblock \href {http://dx.doi.org/10.2168/LMCS-12(3:6)2016}
  {\path{doi:10.2168/LMCS-12(3:6)2016}}.

\bibitem{DBLP:journals/fmsd/MitschP16}
Stefan Mitsch and Andr{\'e} Platzer.
\newblock {ModelPlex}: Verified runtime validation of verified cyber-physical
  system models.
\newblock {\em Form. Methods Syst. Des.}, 49(1):33--74, 2016.
\newblock Special issue of selected papers from RV'14.
\newblock \href {http://dx.doi.org/10.1007/s10703-016-0241-z}
  {\path{doi:10.1007/s10703-016-0241-z}}.

\bibitem{DBLP:conf/focs/Parikh83}
Rohit Parikh.
\newblock Propositional game logic.
\newblock In {\em FOCS}, pages 195--200. {IEEE}, 1983.
\newblock \href {http://dx.doi.org/10.1109/SFCS.1983.47}
  {\path{doi:10.1109/SFCS.1983.47}}.

\bibitem{DBLP:journals/tocl/Platzer15}
Andr{\'e} Platzer.
\newblock Differential game logic.
\newblock {\em {ACM} Trans. Comput. Log.}, 17(1):1:1--1:51, 2015.
\newblock \href {http://dx.doi.org/10.1145/2817824}
  {\path{doi:10.1145/2817824}}.

\bibitem{DBLP:journals/tocl/Platzer17}
Andr{\'e} Platzer.
\newblock Differential hybrid games.
\newblock {\em {ACM} Trans. Comput. Log.}, 18(3):19:1--19:44, 2017.
\newblock \href {http://dx.doi.org/10.1145/3091123}
  {\path{doi:10.1145/3091123}}.

\bibitem{Platzer18}
Andr{\'e} Platzer.
\newblock {\em Logical Foundations of Cyber-Physical Systems}.
\newblock Springer, Switzerland, 2018.
\newblock \href {http://dx.doi.org/10.1007/978-3-319-63588-0}
  {\path{doi:10.1007/978-3-319-63588-0}}.

\bibitem{DBLP:conf/lics/PlatzerT18}
Andr{\'{e}} Platzer and Yong~Kiam Tan.
\newblock Differential equation axiomatization: The impressive power of
  differential ghosts.
\newblock In Anuj Dawar and Erich Gr{\"{a}}del, editors, {\em LICS}, pages
  819--828, New York, 2018. ACM.
\newblock \href {http://dx.doi.org/10.1145/3209108.3209147}
  {\path{doi:10.1145/3209108.3209147}}.

\bibitem{DBLP:conf/kr/RamanujamS08}
Ramaswamy Ramanujam and Sunil~Easaw Simon.
\newblock Dynamic logic on games with structured strategies.
\newblock In Gerhard Brewka and J{\'{e}}r{\^{o}}me Lang, editors, {\em
  Knowledge Representation}, pages 49--58. {AAAI} Press, 2008.
\newblock URL: \url{http://www.aaai.org/Library/KR/2008/kr08-006.php}.

\bibitem{DBLP:conf/emsoft/TalyT10}
Ankur Taly and Ashish Tiwari.
\newblock Switching logic synthesis for reachability.
\newblock In Luca~P. Carloni and Stavros Tripakis, editors, {\em {EMSOFT}},
  pages 19--28. {ACM}, 2010.
\newblock URL: \url{https://doi.org/10.1145/1879021.1879025}.

\bibitem{COQ}
{The Coq development team}.
\newblock The {Coq} proof assistant reference manual, 2019.
\newblock URL: \url{https://coq.inria.fr/}.

\bibitem{van2001games}
Johan Van~Benthem.
\newblock Games in dynamic-epistemic logic.
\newblock {\em Bulletin of Economic Research}, 53(4):219--248, 2001.

\bibitem{DBLP:series/lncs/Benthem15}
Johan van Benthem.
\newblock Logic of strategies: What and how?
\newblock In Johan van Benthem, Sujata Ghosh, and Rineke Verbrugge, editors,
  {\em Models of Strategic Reasoning - Logics, Games, and Communities}, volume
  8972 of {\em LNCS}, pages 321--332. Springer, 2015.
\newblock \href {http://dx.doi.org/10.1007/978-3-662-48540-8_10}
  {\path{doi:10.1007/978-3-662-48540-8_10}}.

\bibitem{DBLP:journals/sLogica/BenthemP11}
Johan van Benthem and Eric Pacuit.
\newblock Dynamic logics of evidence-based beliefs.
\newblock {\em Studia Logica}, 99(1-3):61--92, 2011.
\newblock \href {http://dx.doi.org/10.1007/s11225-011-9347-x}
  {\path{doi:10.1007/s11225-011-9347-x}}.

\bibitem{DBLP:journals/games/BenthemPR11}
Johan van Benthem, Eric Pacuit, and Olivier Roy.
\newblock Toward a theory of play: {A} logical perspective on games and
  interaction.
\newblock {\em Games}, 2011.
\newblock \href {http://dx.doi.org/10.3390/g2010052}
  {\path{doi:10.3390/g2010052}}.

\bibitem{DBLP:conf/atal/HoekJW05}
Wiebe van~der Hoek, Wojciech Jamroga, and Michael~J. Wooldridge.
\newblock A logic for strategic reasoning.
\newblock In Frank Dignum, Virginia Dignum, Sven Koenig, Sarit Kraus,
  Munindar~P. Singh, and Michael~J. Wooldridge, editors, {\em AAMAS}. {ACM},
  2005.
\newblock \href {http://dx.doi.org/10.1145/1082473.1082497}
  {\path{doi:10.1145/1082473.1082497}}.

\bibitem{DBLP:journals/mscs/Oosten02}
Jaap van Oosten.
\newblock Realizability: {A} historical essay.
\newblock {\em Mathematical Structures in Computer Science}, 12(3):239--263,
  2002.
\newblock \href {http://dx.doi.org/10.1017/S0960129502003626}
  {\path{doi:10.1017/S0960129502003626}}.

\bibitem{DBLP:series/txtcs/Weihrauch00}
Klaus Weihrauch.
\newblock {\em Computable Analysis - An Introduction}.
\newblock Texts in Theoretical Computer Science. An {EATCS} Series. Springer,
  2000.
\newblock \href {http://dx.doi.org/10.1007/978-3-642-56999-9}
  {\path{doi:10.1007/978-3-642-56999-9}}.

\bibitem{DBLP:journals/apal/Wijesekera90}
Duminda Wijesekera.
\newblock Constructive modal logics {I}.
\newblock {\em Ann. Pure Appl. Logic}, 50(3):271--301, 1990.
\newblock \href {http://dx.doi.org/10.1016/0168-0072(90)90059-B}
  {\path{doi:10.1016/0168-0072(90)90059-B}}.

\bibitem{DBLP:journals/apal/WijesekeraN05}
Duminda Wijesekera and Anil Nerode.
\newblock Tableaux for constructive concurrent dynamic logic.
\newblock {\em Ann. Pure Appl. Logic}, 2005.
\newblock \href {http://dx.doi.org/10.1016/j.apal.2004.12.001}
  {\path{doi:10.1016/j.apal.2004.12.001}}.

\end{thebibliography}

\appendix
\newpage
\section{Semantics Details}
\label{app:sem-full}
We give the precise definitions of differentials $\der{f}$ and solutions $(\solves{sol}{s}{d}{\D{x}=f})$ here.

The differential term $\der{f}$ is definable when $f$ is differentiable.
Not every term $f$ need be differentiable, so we give a \emph{virtual} definition, defining when $\der{f}$ is equal to some term $g$.
If $\der{f}$ does not exist, $\der{f} = g$ is not provable.
We define the (total) derivative as the dot product ($\vdot$) of gradient $\nabla$ with $\D{s},$ which is the vector of values $s\ \D{x}$ assigned to primed variables.
To show that $\nabla$ is the gradient, we define the gradient as a limit, which we express in $(\epsilon,\delta)$ style.
\begin{align*}
  (\der{f}\ s = g\ s) &\equiv
    \lexists[{\xty^{\abs{\D{s}}}}]{\nabla}{}
    (g\ s = \nabla \vdot \D{s}) \kwprod
    \pity{\epsilon}{\reals_+}{\sity{\delta}{\reals_+}{\pity{r}{\sty}{}}}\\
    &\sfun{(\norm{r - s} < \delta )}
        {\abs{f\ r - f\ s - \nabla \vdot (r - s)} \leq \epsilon \norm{r - s}}
\end{align*}
For practical proofs, a library of standard rules for automatic, syntactic differentiation of common arithmetic operations can be proven.

The predicate $(\solves{sol}{s}{d}{\D{x}=f})$ simply employs $\der{sol}$ to say that the solution satisfies the differential equation at every time, and also insures that the solution is compatible with the initial state.
\[\small{(\solves{sol}{s}{d}{\D{x}=f}) \lequiv
\left(
\sprod{(\lget{s}{x} = sol\ 0)}
{\pity{r}{[0,d]}{
   (\der{sol}\ r = f\ (\lset{s}{x}{(sol\ r)}))
}}\right)}\]

The main paper also mentions both active and demonic strategies are constructive, but allow classical opponents.
We give an example here: consider the relationship between active and dormant semantics of $\prandom{x}$.
Angel gives a computable strategy for $x,$ which are countably many.
However, the dormant player does not care how $x$ was determined, and can handle any of the uncountably many values of type $\reals$.
This mirrors the distinction between computable reals (countable) and computable functions over reals (countably many, uncountable domain).

\newpage
\section{\CdGL Proof Rules}
\label{app:pc-full}
\renewcommand{\proves}[3]{#1\allowbreak\vdash #2 \allowbreak \mathop{:} #3}

We give the full proof calculus for \CdGL from prior work~\cite{esop20,ijcar20a} for the sake of being complete.
In addition to proof terms mentioned in the main paper, the full calculus includes elimination forms (which, except case analysis,  are not normal forms) and several admissible rules.
Because many proof rules come in symmetric pairs for Angelic and Demonic proofs, the corresponding proof terms are distinguished, usually with square brackets for Demonic and angle brackets for Angelic, or sometimes with different names.
Demonic projections support both choices and repetitions.
When the same exact rule schema describes both rules, we write $\dmodality{\alpha}{\phi}$ for ``either modality'' and $\pmodality{\alpha}{\phi}$ for its ``opposite''.
The additional proof terms $M,N,O$ (sometimes $A,B,C$) are given by the grammar:
\begin{align*}
M,N,O &\bebecomes~ \cdots  \ercase{A}{B}{C}  \alternative\efp{A}{B}{C} \alternative  \eapp{M}{N} \alternative \eapp{M}{f} \\
  &\alternative  \eProjL{M} \alternative \eProjR{M}\alternative  \estop{M} \alternative \ego{M}  \alternative \eunpack{M}{N} \alternative \eSwap{M}  \alternative \eQE{\phi}{M}
\end{align*}
where the proof variables $\pvs$ and $\pvg$ are mnemonic for ``stop'' and ``go''.
Diamond repetition  case analysis term  $\ercase{A}{B}{C}$ is  distinguished from diamond choice case analysis.
Fixed point elimination $\efp{A}{B}{C}$ eliminates a diamond repetition inductively.
Application and projection are standard.
Diamond repetition injectors are $\estop{M}$ and $\ego{M}$.
Existential elimination is $\eunpack{M}{N}$.
Dual introduction $\eSwap{M}$ is left implicit in the main paper, for example in every Angelic case of inlining.
Eliminators are not shown here for duals, deterministic assignment, and sequential composition, because they are just inverses of the introduction rules where the premiss and conclusion are swapped.
We leave first-order arithmetic to the ``metalogic'' and write it $\eQE{\phi}{M}$ for conclusion $\phi$ which follows from the fact proven by $M$.
First-order reasoning is captured in the following rule \irref{QE}, the only rule in \CdGL which is not \emph{effective} (cannot be syntactically checked by a computer).
We allow this non-effective rule precisely because we assume the premiss will be checked externally by some proof system for first-order constructive arithmetic.
\[\cinferenceRule[QE|FO]{}
{\linferenceRule[formula]
{\proves{\G}{M}{\rho}}
{\proves{\G}{\eQE{\phi}{M}}{\phi}}
}{\text{exists }\m{M: \pity{s}{\allstate}{\ftrans{\rho\limply\phi}\ s}},\text{ for } \m{\rho,\phi}\text{ F.O.}}
\]

The rules for discrete \CGL~\cite{esop20} are given in \rref{fig:cdgl-rules-prop}.
Free variables of an expression $e$ are written $\freevars{e}$ in side conditions.
For the most part, the rules follow the semantics closely.
Rules for implications, universal quantifiers, conjunctions and disjunctions are analogous to those from the literature.
Sequential composition decomposes the modality into two.
Duality introduction alternates modalities.
In classical game logic, this definition of duality is often considered secondary to the classical duality $\dbox{\pdual{\alpha}}{P} = \lnot \dbox{\alpha}{\lnot P}$.
However, the latter equivalence does not hold constructively, thus the notion of duality as modality alternation is primary in \CdGL.
Rules \irref{dloopI} and \irref{bloopI}, while complex, correspond to well-known convergence (cf. termination) and invariant arguments.

\begin{figure}[h!]
\centering
\begin{calculuscollections}{\columnwidth}
\begin{calculus}
\cinferenceRule[dchoiceE|{$\langle\cup\rangle${E}}]{}
{
\linferenceRule[formula]
{\proves{\G}{A}{\ddiamond{\alpha\cup\beta}{\phi}}
        &\proves{\G,\pvl:\ddiamond{\alpha}{\phi}}{B}{\psi}
        &\proves{\G,\pvr:\ddiamond{\beta}{\phi}}{C}{\psi}}
{\proves{\G}{\edcase{A}{B}{C}}{\psi}}
}{}
\cinferenceRule[drcase|{$\langle*\rangle${C}}]{}
{
\linferenceRule[formula]
{\proves{\G}{A}{\ddiamond{\prepeat{\alpha}}{\phi}} & \proves{\G,\pvs:\phi}{B}{\psi} & \proves{\G,\pvg:\ddiamond{\alpha}{\ddiamond{\prepeat{\alpha}}{\phi}}}{C}{\psi}}
{\proves{\G}{\ercase{A}{B}{C}}{\psi}}
}{}
\end{calculus}
\end{calculuscollections}

\begin{calculuscollections}{0.5\columnwidth}
\begin{calculus}
\cinferenceRule[dchoiceIL|{$\langle\cup\rangle${I1}}]{}
{\linferenceRule[formula]
  {\proves{\G}{M}{\ddiamond{\alpha}{\phi}}}
  {\proves{\G}{\edinjL{M}}{\ddiamond{\alpha\cup\beta}{\phi}}}
}{}
\cinferenceRule[dchoiceIR|{$\langle\cup\rangle${I2}}]{}
{\linferenceRule[formula]
  {\proves{\G}{M}{\ddiamond{\beta}{\phi}}}
  {\proves{\G}{\edinjR{M}}{\ddiamond{\alpha\cup\beta}{\phi}}}
}{}
\cinferenceRule[bchoiceI|{$[\cup]${I}}]{}
{\linferenceRule[formula]
  {\proves{\G}{M}{\dbox{\alpha}{\phi}} & \proves{\G}{N}{\dbox{\beta}{\phi}}}
  {\proves{\G}{\ebcons{M}{N}}{\dbox{\alpha\cup\beta}{\phi}}}
}{}
\cinferenceRule[dtestI|{$\langle?\rangle$}{I}]{}
{\linferenceRule[formula]
  {\proves{\G}{M}{\phi} & \proves{\G}{N}{\psi}}
  {\proves{\G}{\edcons{M}{N}}{\ddiamond{\ptest{\phi}}{\psi}}}
}{}
\cinferenceRule[btestI|{$[?]$}{I}]{}
{\linferenceRule[formula]
  {\proves{\G,\pvx:\phi}{M}{\psi}}
  {\proves{\G}{(\eplam{\phi}{M})}{\dbox{\ptest{\phi}}{\psi}}}
}{}
\cinferenceRule[btestE|{$[?]$}{E}]{}
{\linferenceRule[formula]
  {\proves{\G}{M}{\dbox{\ptest{\phi}}{\psi}} & \proves{\G}{N}{\phi}}
  {\proves{\G}{(\eapp{M}{N})}{\psi}}
}{}
\end{calculus}
\end{calculuscollections}
\begin{calculuscollections}{0.5\columnwidth}
\begin{calculus}
\cinferenceRule[bchoiceEL|{$[\cup]${E1}}]{}
{\linferenceRule[formula]
  {\proves{\G}{M}{\dbox{\alpha\cup\beta}{\phi}}}
  {\proves{\G}{\ebprojL{M}}{\dbox{\alpha}{\phi}}}
}{}
\cinferenceRule[bchoiceER|{$[\cup]${E2}}]{}
{\linferenceRule[formula]
  {\proves{\G}{M}{\dbox{\alpha\cup\beta}{\phi}}}
  {\proves{\G}{\ebprojR{M}}{\dbox{\beta}{\phi}}}
}{}
\cinferenceRule[hyp|{hyp}]{}
{\linferenceRule[formula]
  {}
  {\proves{\G,\pvx:\phi}{\pvx}{\phi}}
}{}
\cinferenceRule[dtestEL|{$\langle?\rangle$}{E1}]{}
{\linferenceRule[formula]
  {\proves{\G}{M}{\ddiamond{\ptest{\phi}}{\psi}}}
  {\proves{\G}{\edprojL{M}}{\phi}}
}{}
\cinferenceRule[dtestER|{$\langle?\rangle$}{E2}]{}
{\linferenceRule[formula]
  {\proves{\G}{M}{\ddiamond{\ptest{\phi}}{\psi}}}
  {\proves{\G}{\edprojR{M}}{\psi}}
}{}
\cinferenceRule[bunroll1|{$[*]${E1}}]{}
{\linferenceRule[formula]
  {\proves{\G}{M}{\dbox{\prepeat{\alpha}}{\phi}}}
  {\proves{\G}{\ebprojL{M}}{\phi}}
}{}
\end{calculus}
\begin{calculus}
\cinferenceRule[dstop|{$\langle*\rangle$S}]{}
{
\linferenceRule[formula]
{\proves{\G}{M}{\phi}}  
{\proves{\G}{\estop{M}}{\ddiamond{\prepeat{\alpha}}{\phi}}}
}{}
\cinferenceRule[dgo|{$\langle*\rangle$G}]{}
{
\linferenceRule[formula]
{\proves{\G}{M}{\ddiamond{\alpha}{\ddiamond{\prepeat{\alpha}}{\phi}}}}
{\proves{\G}{\ego{M}}{\ddiamond{\prepeat{\alpha}}{\phi}}}
}{}
\cinferenceRule[broll|{$[*]${R}}]{}
{\linferenceRule[formula]
  {\proves{\G}{M}{\phi} & \proves{\G}{N}{\dbox{\alpha}{\dbox{\prepeat{\alpha}}{\phi}}}}
  {\proves{\G}{\ebcons{M}{N}}{\dbox{\prepeat{\alpha}}{\phi}}}
}{}
\cinferenceRule[seqI|{$\lstrike{;}\rstrike$}I]{}
{\linferenceRule[formula]
  {\proves{\G}{M}{\dmodality{\alpha}{\dmodality{\beta}{\phi}}}}
  {\proves{\G}{\eSeq{M}}{\dmodality{\alpha;\beta}{\phi}}}
}{}
\cinferenceRule[dualI|{$\lstrike{}^d\rstrike$}I]{}
{\linferenceRule[formula]
  {\proves{\G}{M}{\pmodality{\alpha}{\phi}}}
  {\proves{\G}{\eSwap{M}}{\dmodality{\pdual{\alpha}}{\phi}}}
}{}
\cinferenceRule[bunroll2|{$[*]${E2}}]{}
{\linferenceRule[formula]
  {\proves{\G}{M}{\dbox{\prepeat{\alpha}}{\phi}}}
  {\proves{\G}{\ebprojR{M}}{\dbox{\alpha}{\dbox{\prepeat{\alpha}}{\phi}}}}
}{}
\end{calculus}
\end{calculuscollections}

\begin{calculuscollections}{\columnwidth}
\begin{calculus}
\cinferenceRule[dloopI|{$\langle*\rangle${I}}]{}
{\linferenceRule[formula]
{\deduce{\proves{\pvx:\conv,\pvy:\met = \metz}{C}{\phi}}{\proves{\G}{A}{\conv} & \proves{\pvx:\conv,\pvy:\met_0 = \met \metgr \metz}{B}{\ddiamond{\alpha}{(\conv\land \met_0 \metgr \met)}}}}
{\proves{\G}{\efor{A}{B}{C}}{\ddiamond{\prepeat{\alpha}}{\phi}}}
}{$\met_0$ fresh}
\end{calculus}
\\
\begin{calculus}
\cinferenceRule[bloopI|{$[*]$I}]{}
{\linferenceRule[formula]
  {\proves{\G}{M}{J} & \proves{\pvx:J}{N}{\dbox{\alpha}{J}} & \proves{\pvx:J}{O}{\phi}}
  {\proves{\G}{(\erep{M}{N}{\pvx}{O})}{\dbox{\prepeat{\alpha}}{\phi}}}
}{}
\end{calculus}%
\hfill%
\begin{calculus}
\cinferenceRule[dloopE|FP]{}
{
\linferenceRule[formula]
{\proves{\G}{A}{\ddiamond{\prepeat{\alpha}}{\phi}}
        &\proves{\pvs:\phi}{B}{\psi} & \proves{\pvg:\ddiamond{\alpha}{\psi}}{C}{\psi}}
{\proves{\G}{\efp{A}{B}{C}}{\psi}}
}{}
\end{calculus}
\end{calculuscollections}

\begin{calculuscollections}{\columnwidth}
\begin{calculus}
\cinferenceRule[asgnI|{$\lstrike{{:}=}\rstrike$}I]{}
{\linferenceRule[formula]
{\proves{\eren{\G}{x}{y},\pvx:(x=\eren{f}{x}{y})}{M}{\phi}}
{\proves{\G}{\eAsgneq{y}{x}{\pvx}{M}}{\dmodality{\humod{x}{f}}{\phi}}}
}{$y$ (and $\pvx$) fresh}
\cinferenceRule[brandomI|{$[{:}{*}]${I}}]{}
{
\linferenceRule[formula]
{\proves{\eren{\G}{x}{y}}{M}{\phi}}
{\proves{\G}{(\etlam{\reals}{M})}{\dbox{\prandom{x}}{\phi}}}
}{\ldito}
\cinferenceRule[drandomI|{$\langle{:}*\rangle${I}}]{}
{
\linferenceRule[formula]
{\proves{\eren{\G}{x}{y},\pvx:(x=\eren{f}{x}{y})}{M}{\phi}}
{\proves{\G}{\etcons{f}{M}}{\ddiamond{\prandom{x}}{\phi}}}
}{\ldito}
\cinferenceRule[drandomE|{$\langle{:}*\rangle${E}}]{}
{
\linferenceRule[formula]
{\proves{\G}{M}{\ddiamond{\prandom{x}}{\phi}} & \proves{\eren{\G}{x}{y},\pvx:\phi}{N}{\psi}}
{\proves{\G}{\eunpack{M}{N}}{\psi}}
}{$y$ fresh, $x \notin \freevars{\psi}$}
\cinferenceRule[brandomE|{$[{:}{*}]${E}}]{}
{
      \linferenceRule[formula]
        {\proves{\G}{M}{\dbox{\prandom{x}}{\phi}}}
        {\proves{\G}{\eapp{M}{f}}{\tsub{\phi}{x}{f}}}
}{$\tsub{\phi}{x}{f}$ admiss.}
\end{calculus}
\end{calculuscollections}

\caption{\CdGL proof calculus: discrete games}
\label{fig:cdgl-rules-prop}
\end{figure}

Convergence says a variant formula $\conv$ is maintained as a termination metric $\met$ converges to its zero value $\metz$.
Fixed-point elimination \irref{dloopE} is also inductive, and should be understood as traversing the execution of a loop in reverse.
If $\ddiamond{\alpha}{\phi}$ is true, show that some $\psi$ follows from $\phi$ everywhere, and that $\psi$ is preserved when unwinding the loop, which shows $\psi$ holds now.
Assignment and existential quantifier proof terms are notationally heavyweight, but this is only because they combine an explicit renaming step with the standard existential rule.
These renamings aid the normalization theorem of prior work~\cite{esop20} and free a user from manually applying renaming steps.
Some side conditions persist after renaming, for example the side condition of \irref{drandomE} says that an existential variable cannot escape its scope.

Next, we give the proof terms for differential equations in \rref{fig:cdgl-rules-ode}.
The proofs term for \irref{dg} and \irref{dsolve} exploit the Existential Property: since the proof of the premiss always contains a witness, we can assume as much in the rule, so that the proof term of the conclusion can copy the witness from the premiss.

\begin{figure}[h!]
  \centering
\begin{calculuscollections}{\textwidth}
\begin{calculus}
\cinferenceRule[di|DI]{}
{\linferenceRule[formula]
  {\proves{\G}{M}{\phi} & \proves{\G}{N}{\lforall{x}{(\ivr \limply \dbox{\humod{\D{x}}{f}}{\der{\phi}})}}}
  {\proves{\G}{\edi{M}{N}}{\dbox{\pevolvein{\D{x}=f}{\ivr}}{\phi}}}
}{}
\cinferenceRule[dc|DC]{}
{\linferenceRule[formula]
  {\proves{\G}{M}{\dbox{\pevolvein{\D{x}=f}{\ivr}}{R}} & \proves{\G}{N}{\dbox{\pevolvein{\D{x}=f}{\ivr \land R}}{\phi}}}
  {\proves{\G}{\edc{M}{N}}{\dbox{\pevolvein{\D{x}=f}{\ivr}}{\phi}}}
}{}
\end{calculus}
\begin{calculus}
    \cinferenceRule[dw|DW]{}
{\linferenceRule[formula]
  {\proves{\G}{M}{\lforall{x}{\lforall{\D{x}}{(\ivr \limply \phi)}}}}
  {\proves{\G}{\edw{M}}{\dbox{\pevolvein{\D{x}=f}{\ivr}}{\phi}}}
}{}
\end{calculus}
\end{calculuscollections}

\begin{calculus}
\cinferenceRule[dg|DG]{}
{\linferenceRule[formula]
  {\proves{\G}{\etcons{y_0}{M}}{\lexists{y}{\dbox{\pevolvein{\D{x}=f,\D{y}=a(x)y + b(x)}{\ivr}}{\phi}}}}
  {\proves{\G}{\edg{y_0}{a}{b}{M}}{\dbox{\pevolvein{\D{x}=f}{\ivr}}{\phi}}}
}{}
\cinferenceRule[bsolve|bsolve]{}
{\linferenceRule[formula]
  {\proves{\G}{M}{(\lforall[{\reals_{\geq0}}]{t}{((\lforall[{[0,t]}]{r}{\dbox{\humod{t}{r};\humod{x}{sln}}{\psi}})\limply\dbox{\humod{x}{sln};\humod{\D{x}}{f}}{\phi})})}}
  {\proves{\G}{\eds{sln}{M}}{\dbox{\pevolvein{\D{x}=f}{\ivr}}{\phi}}}
}{}
\cinferenceRule[dsolve|dsolve]{}
{\linferenceRule[formula]
  {\proves{\G}{\etcons{d}{\edcons{M}{N}}}{\lexists[{\reals_{\geq0}}]{t}{((\lforall[{[0,t]}]{r}{\ddiamond{\humod{t}{r};\humod{x}{sln}}{\psi}})\land\ddiamond{\humod{x}{sln};\humod{\D{x}}{f}}{\phi})}}}
  {\proves{\G}{\eas{d}{sln}{M}{N}}{\ddiamond{\pevolvein{\D{x}=f}{\ivr}}{\phi}}}
}{}
\end{calculus}  
  \caption{\CdGL proof calculus: continuous games. In bsolve and dsolve, $sln$ solves $x'=f$ globally, $t$ and $r$ fresh, $x' \notin \freevars{\phi}$}
  \label{fig:cdgl-rules-ode}
\end{figure}

Several derivable rules are discussed in \rref{fig:cdgl-rules-derived}.
Rule \irref{dec} is a special case of \irref{QE} which says that while excluded middle does not hold in general, it is perfectly acceptable to apply excluded-middle-like reasoning to any disjunction which is known to be decidable.
The most common application of \irref{dec} is an approximate comparison \irref{split}, which is not only sound but effective and can thus be proofchecked syntactically.
The discrete ghost \irref{ghost} allows introducing a fresh game variable during a proof and can be derived from the rules for quantifiers.
Monotonicity \irref{mon} often serves a similar role to Kripke's modal modus ponens axiom K, which does not hold in game logics \cite{DBLP:journals/tocl/Platzer15}.
It is used, for example, for concise right-to-left symbolic execution proofs.
The normalization theorem for discrete \CGL~\cite{esop20} shows that \CGL's monotonicity rule is admissible.
We hypothesize that it is also admissible in full \CdGL.
The main requirement for admissible monotonicity is that the core rules permit postcondition generalization, for example in \irref{bloopI}.
\begin{figure}[h!]
  \centering
\begin{calculuscollections}{\textwidth}
  \begin{calculus}
\cinferenceRule[dec|Dec]{}
{\linferenceRule[formula]
 {\proves{\G}{M}{\rho}}
 {\proves{\G}{(\elem{\phi \lor \psi}{M})}{\phi \lor \psi}}
 }{\text{exists } \m{M:(\pity{s}{\allstate}{\ftrans{\rho \limply \phi \lor \psi}\ s})}, \text{ for } \m{\rho, \phi, \psi}\text{ F.O.}}
\cinferenceRule[split|{$<$}]{}
{\linferenceRule[formula]
 {\proves{\G}{M}{\veps > 0}}
 {\proves{\G}{(\esplit{f}{g + \veps}{M})}{f > g \lor f < g + \veps}}
 }{}
\cinferenceRule[ghost|iG]{}
  {\linferenceRule[formula]
    {\proves{\G,\pvx:x=f}{M}{\phi}}
    {\proves{\G}{\eghost{x}{f}{\pvx}{M}}{\phi}}}
  {\m{x}\text{ fresh except free in }\m{M,} \m{\pvx}\text{ fresh}}
\cinferenceRule[mon|{M}]{}
  {\linferenceRule[formula]
    {\proves{\G}{M}{\ddiamond{\alpha}{\phi}} & \proves{\earen{\G}{\alpha},\pvx:\phi}{N}{\psi}}
    {\proves{\G}{\emon{M}{N}{\pvx}}{\ddiamond{\alpha}{\psi}}}}{}
  \end{calculus}
\end{calculuscollections}
  \caption{\CdGL derived and admissible rules}
  \label{fig:cdgl-rules-derived}
\end{figure}

\newpage
\section{Theory proofs}
\label{app:proofs}
\renewcommand{\proves}[3]{#1\allowbreak\vdash #3}
\irlabel{dualE|{$\lstrike{}^d\rstrike$}E}
\irlabel{seqE|{$\lstrike{;}\rstrike$}E}
\irlabel{asgnE|{$\lstrike{{:}=}\rstrike$}E}
\irlabel{brandomI|{$[{:}*]$}}
\irlabel{bchoiceI|{$[\cup]$I}}
\irlabel{bchoiceE1|{$[\cup]$E1}}
\irlabel{bchoiceE2|{$[\cup]$E2}}
\irlabel{btestI|{$[?]$I}}
\irlabel{broll|{$[*]$r}}
\irlabel{bunroll|{$[*]${E}}}
\irlabel{bloopI|{$[*]$I}}
\irlabel{bsolve|{$[']$}}
\irlabel{di|DI}
\irlabel{dc|DC}
\irlabel{dw|DW}
\irlabel{dg|DG}
\irlabel{dv|DV}
\irlabel{dsolve|{$\langle'\rangle$}}
\irlabel{dloopI|{$\langle*\rangle$}I}
\irlabel{dstop|{$\langle*\rangle$}S}
\irlabel{dgo|{$\langle*\rangle$}G}
\irlabel{dchoiceIL|{$\langle\cup\rangle$}IL}
\irlabel{dchoiceIR|{$\langle\cup\rangle$}IR}
\irlabel{dtestI|{$\langle?\rangle$}I}
\irlabel{drandomI|{$\langle{:}*\rangle$}I}
\irlabel{seqI|{$\lstrike{;}\rstrike$}I}
\irlabel{seqE|{$\lstrike{;}\rstrike$}E}
\irlabel{asgnI|{$\lstrike{{:}=}\rstrike$}I}
\irlabel{asgnE|{$\lstrike{{:}=}\rstrike$}E}
\irlabel{drcase|{\m{\langle{\cup}\rangle}}E}
\irlabel{hyp|hyp}
\irlabel{mon|M}
\newcommand{\Post}{\mathit{Post}}
\newcommand{\Use}{\mathit{Use}}
\newcommand{\Show}{\mathit{Show}}

\paragraph*{Notations and Preliminaries}
Recall that the free variables  $\freevars{e}$ of expression $e$ are those which influence meaning, while bound variables $\boundvars{\alpha}$ of a game are those which might change during play.
We use syntactic-level ($\freevars{\phi}$) and semantic-level ($\freevars{\ftrans{\phi}})$ notions of free variable interchangeably here.
Free variables, bound variables, and the relations between their semantic and syntactic characterizations are discussed thoroughly in prior work~\cite{ijcar20a}.
We notate vectors with arrows, e.g., $\vec{x}$ for vectors of variables in vectorial assignments.

\subsection{Properties of Refinement}
While the semantics we gave for refinement are concise and general, their generality makes some proofs more difficult:
the refinement semantics consider arbitrary postconditions, which might depend on arbitrary variables.
This is overkill. As we show, it suffices to consider only postconditions whose free variables are all bound in $\alpha$ or $\beta$: in comparing two games (even if their free variables exceed their bound variables) the role of the postcondition is solely to capture the \emph{output} behavior of the games.
This tightening of the semantics is important because it allows us to use a tight definition of free variables for refinement formulas: $\freevars{\dleq{\alpha}{\beta}} = \freevars{\alpha} \cup \freevars{\beta}$.
Tight definitions of free variables are important in turn because they ensure our refinement rules are applicable in the widest possible range of cases.

\begin{lemma}[Domain restriction]
The following semantics are equivalent to the semantics of $\aleq[i]{\alpha}{\beta}$ and $\dleq[i]{\alpha}{\beta}$, respectively:
\begin{itemize}
  \item
$\lforall[\allstate \to \typei{i}]{t}{\freevars{t} \subseteq \boundvars{\alpha} \cup \boundvars{\beta}
  \limply \atrans{\alpha}\ t\ s
  \limply \atrans{\beta}\ t\ s}$
  \item
$\lforall[\allstate \to \typei{i}]{t}{\freevars{t} \subseteq \boundvars{\alpha} \cup \boundvars{\beta}
  \limply \dtrans{\alpha}\ t\ s
  \limply \dtrans{\beta}\ t\ s}$
\end{itemize}
That is, quantifying only over postconditions determined by the bound variables of $\alpha$ and $\beta$ is sufficient to characterize the refinement relation between $\alpha$ and $\beta$.
\label{lem:app-restrict}
\end{lemma}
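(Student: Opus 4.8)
The plan is to prove each bi-implication by exhibiting constructive maps in both directions, carrying out the Demonic case in detail and obtaining the Angelic case either by the identical argument with $\atrans{\cdot}$ in place of $\dtrans{\cdot}$, or via the duality $\aleq{\alpha}{\beta}\lequiv\dleq{\pdual{\alpha}}{\pdual{\beta}}$ together with $\boundvars{\pdual{\alpha}}=\boundvars{\alpha}$. The forward direction (the original semantics imply the restricted one) is immediate: the restricted statement quantifies over a subclass of the postconditions $t:\allstate\to\typei{i}$, so instantiating a proof of the full quantifier at any $t$ with $\freevars{t}\subseteq\boundvars{\alpha}\cup\boundvars{\beta}$ yields exactly the restricted statement. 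All the work is in the backward direction.

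For the backward direction I would fix the evaluation state $s$ and an arbitrary postcondition $t:\allstate\to\typei{i}$, and manufacture a postcondition $\tilde{t}$ that agrees with $t$ on every state reachable from $s$ but whose free variables lie in $V$, where $V=\boundvars{\alpha}\cup\boundvars{\beta}$. Concretely, let $\tilde{t}\ r = t\ r'$, where $r'$ is the state agreeing with $r$ on the variables in $V$ and with the fixed state $s$ on all variables outside $V$. Since $\tilde{t}\ r$ reads $r$ only through its values on $V$ (the rest being supplied by the constant $s$), we get $\freevars{\tilde{t}}\subseteq V$, so the restricted refinement applies to $\tilde{t}$ and furnishes a map $\dtrans{\alpha}\ \tilde{t}\ s \to \dtrans{\beta}\ \tilde{t}\ s$.

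It remains to identify $\dtrans{\alpha}\ \tilde{t}\ s$ with $\dtrans{\alpha}\ t\ s$ and $\dtrans{\beta}\ \tilde{t}\ s$ with $\dtrans{\beta}\ t\ s$, after which the map above is exactly the desired transformer $\dtrans{\alpha}\ t\ s \to \dtrans{\beta}\ t\ s$; since $t$ was arbitrary, this reconstructs the full semantics. The identification rests on a \emph{bound-effect}/coincidence property for the type-theoretic semantics (established for \CdGL in prior work~\cite{ijcar20a}): the types $\dtrans{\gamma}\ P\ s$ and $\atrans{\gamma}\ P\ s$ depend on $P$ only through its values on states that agree with $s$ off $\boundvars{\gamma}$. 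For $\gamma=\alpha$, every state $r$ agreeing with $s$ off $\boundvars{\alpha}$ in particular agrees with $s$ off $V$ (as $V^{\complement}\subseteq\boundvars{\alpha}^{\complement}$), whence $r'=r$ and so $\tilde{t}\ r=t\ r$; the bound-effect property then gives $\dtrans{\alpha}\ \tilde{t}\ s=\dtrans{\alpha}\ t\ s$, and the same argument using $\boundvars{\beta}\subseteq V$ handles $\beta$.

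The main obstacle is making the coincidence property precise and \emph{constructive} enough to transport a strategy transformer across it, rather than merely asserting a classical equivalence of inhabitation. I expect to prove it by induction on the game $\gamma$, simultaneously for $\atrans{\cdot}$ and $\dtrans{\cdot}$; the assignment and test cases reduce to the setter/getter axioms, but the delicate cases are the fixed-point constructions for $\prepeat{\gamma}$ — where agreement of postconditions must be shown to propagate through the least (respectively greatest) fixed point, using monotonicity~\cite[Lem.\ 7]{ijcar20a} — and the ODE case, where one must argue that only the values of $P$ at states differing from $s$ in $\boundvars{\gamma}$ (here $x$ and $\D{x}$) are consulted along any solution. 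Once coincidence is available as a genuine equality (or a natural isomorphism commuting with the strategy structure), the remaining bookkeeping above is routine.
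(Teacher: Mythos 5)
Your proposal is correct, and its core is the same as the paper's: the forward direction by instantiation, and for the converse the very same projected postcondition (your $\tilde{t}$ is exactly the paper's $\hat{t} = \elam{r}{\allstate}{t\ (\lset{r}{\vec{x}}{(\lget{s}{\vec{x}})})}$ with $\vec{x} = \allvars \setminus (\boundvars{\alpha}\cup\boundvars{\beta})$). Where you diverge is in how strategies are transported between $t$ and $\tilde{t}$. You posit a type-level coincidence property (the semantics consult $P$ only at states agreeing with $s$ off $\boundvars{\gamma}$), to be proved by a fresh simultaneous induction on games, and you flag the fixed-point and ODE cases as the delicate part. The paper never claims an equality or isomorphism of types; it builds two one-way maps by composing lemmas already available from prior work: the bound-effect lemma \cite[Lem.~11]{ijcar20a}, giving $\atrans{\alpha}\ t\ s \to \atrans{\alpha}\ (t \land \vec{x}=\lget{s}{\vec{x}})\ s$, and monotonicity \cite[Lem.~7]{ijcar20a} applied to the pointwise implications $(t \land \vec{x} = \lget{s}{\vec{x}})\ r \limply \hat{t}\ r$ and $(\hat{t} \land \vec{x}=\lget{s}{\vec{x}})\ r \limply t\ r$, and then symmetrically on the $\beta$ side. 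In fact, the coincidence property you want, in the form actually needed here (if $P$ and $Q$ agree on all states agreeing with $s$ off the bound variables, then $\atrans{\gamma}\ P\ s$ and $\atrans{\gamma}\ Q\ s$ are inter-derivable), factors exactly through those two lemmas: bound effect strengthens the postcondition with $\vec{x}=\lget{s}{\vec{x}}$, after which ordinary unrestricted monotonicity applies. So the induction you anticipate as the main obstacle is unnecessary, and working with one-way maps rather than a ``genuine equality'' also dissolves the concern you raise about transporting a transformer across a mere equivalence of inhabitation. What your stronger formulation would buy is a reusable isomorphism statement about the semantics; for this lemma, the paper's composition of existing results is shorter and already fully constructive.
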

\begin{proof}
We give the cases for $\aleq[i]{\alpha}{\beta}$ and the cases for $\dleq[i]{\alpha}{\beta}$ are symmetric.
We show that each version of the semantics implies the other.

The first direction is trivial.
Assume $\ftrans{\aleq[i]{\alpha}{\beta}}\ s,$ so (0) $\lforall[\allstate \to \typei{i}]{t}{\atrans{\alpha}\ t\ s \limply \atrans{\beta}\ t\ s}$.
Now fix some $t \in \allstate \to \typei{i}$ and assume $\freevars{t} \subseteq \boundvars{\alpha} \cup \boundvars{\beta}$ and $\atrans{\alpha}\ t\ s$.
Then by (0) have $\atrans{\beta}\ t\ s$.
Thus
$\lforall[\allstate \to \typei{i}]{t}{\freevars{t} \subseteq \boundvars{\alpha} \cup \boundvars{\beta}
  \limply \atrans{\alpha}\ t\ s
  \limply \atrans{\beta}\ t\ s}$ as desired.

We show the converse direction.
Assume  (0) $\lforall[\allstate \to \typei{i}]{t}{\freevars{t} \subseteq \boundvars{\alpha} \cup \boundvars{\beta}
  \limply \atrans{\alpha}\ t\ s
  \limply \atrans{\beta}\ t\ s}$.
Now fix $t \in \allstate \to \typei{i}$ and assume (1) $\atrans{\alpha}\ t\ s$.
Now define $\hat{t} = (\elam{r}{\allstate}{t\ (\lset{r}{\vec{x}}{(\lget{s}{\vec{x}})})})$ where $\vec{x} = \allvars \setminus (\boundvars{\alpha} \cup \boundvars{\beta}),$ where $\allvars$ is the set of all variables.
By construction (2) $\freevars{\hat{t}} \subseteq \boundvars{\alpha} \cup \boundvars{\beta}$.
Note that (3a) $(t \land \vec{x} = s\ \vec{x})\ r \limply \hat{t}\ r$ for all $r$ by arithmetic substitution.
and (3b) $(\hat{t} \land \vec{x}= s\ \vec{x}) r \limply t\ r$ for all $r$ by arithmetic substitution.

From (1) by \cite[Lem.\ 11]{ijcar20a} have
$\atrans{\alpha}\ (t \land \vec{x} = s\ \vec{x})\ s,$
then by monotonicity and (3a) have
$\atrans{\alpha}\ \hat{t}\ s$.
Combined with (2) this satisfies the assumption of (0) so 
$\atrans{\beta}\ \hat{t}\ s$, then by bound effect again
$\atrans{\beta}\ (\hat{t} \land \vec{x}= s\ \vec{x})\ s$
and by monotonicity and (3b) have
$\atrans{\beta}\ t\ s$.
This held for all $t$ so finally
$\ftrans{\aleq[i]{\alpha}{\beta}}\ s$
as desired.
\end{proof}

\subsection{Substitution}
The following lemmas about free variables, renaming, and substitution will be used in the soundness proof of the proof calculus.
These lemmas were all proved inductively for \CdGL~\cite{ijcar20a}.
Thus, each of our proofs merely adds to prior inductive proofs a case for the refinement connective and assumes access to inductive hypotheses.
\begin{lemma}[Formula coincidence]
If $s=r$ on $\freevars{\G} \cup \freevars{\phi}$ then given $M$ such that $\mproves{\ftrans{\G}\ s}{M}{(\ftrans{\phi}\ s)}$
there exists $N$ such that $\mproves{\ftrans{\G}\ r}{N}{(\ftrans{\phi}\ r)}$.
We also prove coincidence for contexts: If $s=r$ on $\freevars{\G}$ and $\ftrans{\G}\ s$ is inhabited then $\ftrans{\G}\ r$ is inhabited.
We also prove coincidence for the construct $(\solves{sol}{s}{d}{\D{x}=f})$:
If $s=r$ on $\freevars{f} \cup \{x\}$ then $(\solves{sol}{s}{d}{\D{x}=f}) = (\solves{sol}{r}{d}{\D{x}=f})$
\label{lem:app-formula-coincide}
\end{lemma}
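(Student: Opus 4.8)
The plan is to prove all three statements by a single simultaneous induction on the structure of the formula $\phi$ (with the context and $\solves{sol}{s}{d}{\D{x}=f}$ claims carried along as auxiliary parts of the same induction), reusing verbatim the inductive coincidence argument already established for \CdGL in \cite{ijcar20a}. For every connective other than the new refinement $\dleq[i]{\alpha}{\beta}$, the required case is literally the prior-work case, so I would reproduce reasoning only for the refinement connective and otherwise appeal to the inductive hypotheses, in particular to formula/game coincidence for the structurally smaller subgames $\alpha$ and $\beta$. The context claim then follows because $\ftrans{\G}\ s$ is a finite product of formula semantics, so inhabitation transfers componentwise from the formula case. The $\solves{sol}{s}{d}{\D{x}=f}$ claim follows by unfolding the definition from \rref{app:sem-full}: it mentions $s$ only through $\lget{s}{x}$ and through the occurrences $f\ (\lset{s}{x}{\cdot})$ and $\der{sol}$, hence depends on $s$ only at $\freevars{f}\cup\{x\}$; agreement there makes the two instances definitionally equal, using the derivative coincidence supplied by the formula case on $\der{sol}$.

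For the refinement case $\phi \equiv \dleq[i]{\alpha}{\beta}$, recall $\freevars{\phi}=\freevars{\alpha}\cup\freevars{\beta}$ and the semantics $\ftrans{\dleq[i]{\alpha}{\beta}}\ s = \pity{t}{(\allstate\to\typei{i})}{(\sfun{\dtrans{\alpha}\ t\ s}{\dtrans{\beta}\ t\ s})}$. The two directions are symmetric, so it suffices to turn a realizer $M$ at $s$ into a realizer $N$ at $r$ whenever $s=r$ on $\freevars{\alpha}\cup\freevars{\beta}$. The key simplification is \rref{lem:app-restrict}: both at $s$ and at $r$ it is enough to quantify over postconditions $t$ with $\freevars{t}\subseteq\boundvars{\alpha}\cup\boundvars{\beta}$. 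So I would fix such a $t$ together with a dormant strategy $d:\dtrans{\alpha}\ t\ r$, and construct an element of $\dtrans{\beta}\ t\ r$, being free to feed $M$ any postcondition of the same rank.

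The construction routes through $s$. Using that plays of $\alpha$ (resp.\ $\beta$) from $s$ and from $r$ make the same choices and therefore agree on their bound variables while leaving the remaining variables pinned at their (agreeing) initial values, I would introduce a frozen postcondition $\hat{t}$ that overwrites all variables outside $\freevars{\alpha}\cup\freevars{\beta}$ with their common value, exactly as in the proof of \rref{lem:app-restrict}. Bound effect~\cite[Lem.\ 11]{ijcar20a} together with monotonicity makes $\dtrans{\alpha}\ t\ r$ and $\dtrans{\alpha}\ \hat{t}\ r$ interderivable; the game coincidence hypothesis for $\alpha$ transports the latter to $\dtrans{\alpha}\ \hat{t}\ s$ (the free variables of $\alpha$ and of $\hat t$ now lie where $s$ and $r$ agree, modulo the must-bound variables that are overwritten anyway); the realizer $M$ yields $\dtrans{\beta}\ \hat{t}\ s$; and running the same chain backwards through the coincidence hypothesis for $\beta$ and bound effect returns $\dtrans{\beta}\ t\ r$. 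Since $M$ is an explicit function, $N$ is obtained constructively, as the lemma demands.

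I expect the postcondition transfer to be the one delicate point. The quantified $t$ may genuinely read variables lying in $\boundvars{\alpha}\cup\boundvars{\beta}$ but outside $\freevars{\alpha}\cup\freevars{\beta}$, so game coincidence cannot be applied to $t$ directly; \rref{lem:app-restrict} plus the freezing step are exactly what confine $t$'s dependence to variables that are either genuinely shared by the two plays or are (must-)overwritten during play, after which the must-bound-aware game coincidence of \cite{ijcar20a} applies on each side. Reconciling the single frozen postcondition simultaneously for $\alpha$ and for $\beta$ is where care is needed: it goes through because the two games being compared expose the same bound variables to the postcondition, which holds for every refinement formed by the calculus of \rref{sec:proof-calculus}. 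Everything outside the refinement case is inherited unchanged from the prior \CdGL coincidence proof.
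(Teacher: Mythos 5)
Your skeleton matches the paper's proof: a simultaneous induction that inherits every case except the refinement connective from prior work, with \rref{lem:app-restrict} as the entry point and the game coincidence inductive hypotheses for $\alpha$ and $\beta$ doing the transport between $s$ and $r$. But your transport construction has concrete flaws. First, the frozen postcondition is ill-defined: you overwrite variables outside $\freevars{\alpha}\cup\freevars{\beta}$ with ``their common value,'' yet $s$ and $r$ agree only on $\freevars{\G}\cup\freevars{\phi}$, so outside that set there is no common value to freeze to. Second, this is not ``exactly as in the proof of \rref{lem:app-restrict}'': that proof freezes the variables outside $\boundvars{\alpha}\cup\boundvars{\beta}$, and the difference is fatal. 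Bound effect lets you trade $t$ for $\hat t$ only when the frozen variables are guaranteed unchanged by play, i.e.\ non-bound; your $\hat t$ freezes variables that may be bound, and then interderivability fails: for $\alpha$ the assignment $\humod{x}{1}$ and $t$ the predicate $x=1$, one has $\dtrans{\alpha}\ t\ r$ inhabited but $\dtrans{\alpha}\ \hat t\ r = (\lget{r}{x}=1)$, which is empty when $\lget{r}{x}\neq 1$. Third, your escape hatch for the ``delicate point'' --- that the two games expose the same bound variables, claimed for every refinement the calculus produces --- is false (rule \irref{drefChoiceL1} relates $\alpha\cup\beta$ to $\alpha$, whose bound variables can differ) and would not suffice anyway: a postcondition may read a variable that is bound but neither free nor must-bound, and along plays that never write it, its differing initial values in $s$ and $r$ survive to the final state.

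To be fair, the rock you hit is exactly where the proof lives, and the paper's own proof does not clear it either: it asserts $\freevars{t}\subseteq\boundvars{\alpha}$ and $\freevars{t}\subseteq\boundvars{\beta}$ as consequences of $\freevars{t}\subseteq\boundvars{\alpha}\cup\boundvars{\beta}$ (a non sequitur) and then invokes the IH. In fact no argument of this shape can be completed with the tight definition $\freevars{\dleq{\alpha}{\beta}}=\freevars{\alpha}\cup\freevars{\beta}$ announced before the lemma: the formula $\dleq{\humod{x}{1}}{\ptest{\btt}}$ has empty tight free variables, yet feeding the postcondition $x=1$ shows its semantics holds exactly at states with $x=1$, so coincidence over the tight set is simply false. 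What makes the argument go through is enlarging the agreement set for refinement formulas to also contain $\boundvars{\alpha}\cup\boundvars{\beta}$: then \rref{lem:app-restrict} confines $\freevars{t}$ inside the agreement set, $s=r$ on $\freevars{\alpha}\cup\freevars{t}$ and on $\freevars{\beta}\cup\freevars{t}$, and the game coincidence hypotheses apply directly, with no freezing at all. So the needed repair is a larger free-variable set for $\dleq{\alpha}{\beta}$, not a cleverer frozen postcondition; as written, your construction does not close the gap.
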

\begin{proof}
We prove the case for $\dleq[i]{\alpha}{\beta},$ the case for $\aleq[i]{\alpha}{\beta}$ is symmetric, and the other cases given in prior work~\cite{ijcar20a}.

By \rref{lem:app-restrict} have
$ \ftrans{\dleq[i]{\alpha}{\beta}}\ s
= \lforall[\allstate\to\typei{i}]{t}{\freevars{t} \subseteq \boundvars{\alpha} \cup \boundvars{\beta} 
\limply \dtrans{\alpha}\ t\ s
\limply \dtrans{\beta}\ t\ s
}$
and likewise for state $r$, so it suffices to show that
   (P0) $\atrans{\alpha}\ t\ s = \dtrans{\alpha}\ t\ r$
and (P1) $\atrans{\beta}\ t\ s = \dtrans{\beta}\ t\ r$ assuming
(0) $\freevars{t} \subseteq \boundvars{\alpha} \cup \boundvars{\beta}$.
From (0) have
(0A) $\freevars{t} \subseteq \boundvars{\alpha}$ and
(0B) $\freevars{t} \subseteq \boundvars{\beta}$.

Then the IH applies on $\alpha$ and $\beta$ satisfying (P0) and (P1) as desired.
\end{proof}

\begin{lemma}[Formula uniform renaming]
   $\mproves{\G}{M}{(\ftrans{\phi}\ s)}$ iff $\mproves{\eren{\G}{x}{y}}{\eren{M}{x}{y}}{\eren{\phi}{x}{y}}$.
\label{lem:app-formula-rename}
\end{lemma}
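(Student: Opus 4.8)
The plan is to reuse the simultaneous induction of prior work~\cite{ijcar20a}: since renaming, coincidence, and substitution are proved by one mutual induction over the structure of games and formulas, every case except the new refinement connective is inherited unchanged, and I need only supply the case $\phi \equiv \dleq[i]{\alpha}{\beta}$, with the case $\phi \equiv \aleq[i]{\alpha}{\beta}$ handled symmetrically. Because uniform renaming swaps $x$ and $y$ and is therefore an involution ($\eren{\eren{\phi}{x}{y}}{x}{y} = \phi$), it suffices to establish one direction of the ``iff''; the converse follows by renaming back. Note also that the refinement subexpressions $\alpha$ and $\beta$ are strictly smaller than $\dleq[i]{\alpha}{\beta}$, and while games may embed refinement formulas through tests $\ptest{\phi}$ (making the induction genuinely mutual), the inductive hypotheses for $\alpha$ and $\beta$ are available.

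First I would invoke \rref{lem:app-restrict} to replace $\ftrans{\dleq[i]{\alpha}{\beta}}\ s$ with its equivalent bounded form, whose quantifier ranges only over postconditions $t : \allstate \to \typei{i}$ satisfying $\freevars{t} \subseteq \boundvars{\alpha} \cup \boundvars{\beta}$. This restriction is what makes the free-variable bookkeeping manageable. Next I would appeal to the inductive hypothesis at the subgames $\alpha$ and $\beta$: the game-level renaming result of prior work~\cite{ijcar20a} states that $\dtrans{\eren{\alpha}{x}{y}}$ and $\dtrans{\eren{\beta}{x}{y}}$ agree with $\dtrans{\alpha}$ and $\dtrans{\beta}$ once the postcondition is composed with the state swap and the swap is applied to the initial state.

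The crux is transporting the universally quantified postcondition across the renaming. I would exhibit the bijection $t \mapsto (\elam{r}{\allstate}{t\ (\eren{r}{x}{y})})$ on postconditions, where $\eren{r}{x}{y}$ denotes the state $r$ with its values at $x$ and $y$ interchanged, and verify that the side condition $\freevars{t} \subseteq \boundvars{\alpha} \cup \boundvars{\beta}$ is carried to the corresponding condition for the renamed games, using that bound variables commute with renaming, $\boundvars{\eren{\alpha}{x}{y}} = \eren{\boundvars{\alpha}}{x}{y}$ (again from prior work). Composing this bijection with the game-renaming hypothesis shows that the implication $\dtrans{\alpha}\ t\ s \limply \dtrans{\beta}\ t\ s$ holds for all admissible $t$ exactly when the analogous implication for $\eren{\alpha}{x}{y}$ and $\eren{\beta}{x}{y}$ holds for all admissible renamed postconditions. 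Hence the type $\ftrans{\dleq[i]{\alpha}{\beta}}\ s$ is inhabited by $M$ precisely when the renamed type is inhabited by $\eren{M}{x}{y}$, where $\eren{M}{x}{y}$ arises by applying the bijection and the IH transformations pointwise to $M$.

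The main obstacle I expect is exactly this last step: refinement quantifies over a \emph{function space} of postconditions, so renaming must act coherently on that space while simultaneously respecting the free-variable side condition and both game semantics. The domain restriction lemma is essential here, since restricting to bound-variable postconditions is what makes the quantifier domains of the original and renamed refinements correspond under $t \mapsto (\elam{r}{\allstate}{t\ (\eren{r}{x}{y})})$; without it the correspondence between these quantifier domains would be far harder to pin down.
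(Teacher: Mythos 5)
Your proof is correct and its core is the same as the paper's: add one case for $\dleq[i]{\alpha}{\beta}$ to the simultaneous induction of prior work, unfold the refinement semantics, apply the renaming IH to the subgames $\alpha$ and $\beta$, and then re-index the universally quantified postcondition along the map $t \mapsto (\elam{r}{\allstate}{t\ (\eren{r}{x}{y})})$. The one genuine difference is your first move: you route the argument through the domain restriction lemma (\rref{lem:app-restrict}) and then must check that the re-indexing bijection preserves the side condition $\freevars{t} \subseteq \boundvars{\alpha} \cup \boundvars{\beta}$, using $\boundvars{\eren{\alpha}{x}{y}} = \eren{\boundvars{\alpha}}{x}{y}$. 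That detour is sound (and non-circular, since \rref{lem:app-restrict} does not depend on renaming), but it is exactly what the paper's proof avoids: because transposition renaming is an involution, $P \mapsto \eren{P}{x}{y}$ is a permutation of the \emph{entire} space $\allstate \to \typei{i}$, so the unrestricted quantifier can be re-indexed directly, with no free-variable bookkeeping at all. Consequently your closing claim — that the domain restriction is \emph{essential} and that without it the quantifier correspondence ``would be far harder to pin down'' — has it backwards: restriction is what the paper needs for coincidence and substitution (where the two sides are not related by a bijection of the whole postcondition space), whereas for renaming it only adds proof obligations. Also note the paper dispatches the context $\G$ in one line (context semantics rename, by prior work), a step your write-up leaves implicit.
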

\begin{proof}
  Recall that renaming $x$ to $y$ also renames $\D{x}$ to $\D{y}$.

  It suffices to consider $\G = \Gemp$ because, as argued in prior work, $\ftrans{\G}\ s$ iff $\ftrans{\eren{\G}{x}{y}}\ \eren{s}{x}{y}$.
  We give the case for $\dleq[i]{\alpha}{\beta}$.
  \begin{align*}
      &\ftrans{\dleq[i]{\alpha}{\beta}}\ s\\
=     &\lforall[\allstate\to\typei{i}]{P}{\dtrans{\alpha}\ P\ s \limply \dtrans{\beta}\ P\ s}\\
=_{IH} &\lforall[\allstate\to\typei{i}]{P}{\dtrans{\eren{\alpha}{x}{y}}\ \eren{P}{x}{y}\ \eren{s}{x}{y}
                 \limply \dtrans{\eren{\beta}{x}{y}}\ \eren{P}{x}{y}\ \eren{s}{x}{y}}\\
=_{*}  &\lforall[\allstate\to\typei{i}]{P}{\dtrans{\eren{\alpha}{x}{y}}\ P\ \eren{s}{x}{y}
                 \limply \dtrans{\eren{\beta}{x}{y}}\ P\ \eren{s}{x}{y}}\\
=     &\ftrans{\dleq[i]{\eren{\alpha}{x}{y}}{\eren{\beta}{x}{y}}}\ \eren{s}{x}{y}\\
=     &\ftrans{\eren{(\dleq[i]{\alpha}{\beta})}{x}{y}}\ \eren{s}{x}{y}
  \end{align*}
where the step marked $*$ holds because transposition renaming $\eren{\cdot}{x}{y}$ is a permutation on $\allstate \to \typei{i}$ so that $\eren{P}{x}{y}\eren{\,}{x}{y} = P$.

\end{proof}

\begin{lemma}[Formula substitution]
Recall that $\tsub{s}{x}{f}$ is shorthand for $\lset{s}{x}{(f\ s)}$.
 If the substitutions $\tsub{\G}{x}{f}$ and $\tsub{\phi}{x}{f}$ are admissible, then 
    $\mproves{\ftrans{\G}\ s}{M}{\ftrans{\phi}\ \tsub{s}{x}{f}}$
iff $\mproves{\tsub{\G}{x}{f}\ s}{\tsub{M}{x}{f}\ s}{\ftrans{\tsub{\phi}{x}{f}}\ s}$.
\label{lem:app-formula-subst}
\end{lemma}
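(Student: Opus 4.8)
The plan is to extend the existing inductive proof of formula substitution from prior work~\cite{ijcar20a} with the single new case for the refinement connective. As in the proof of \rref{lem:app-formula-rename}, the Angelic case $\aleq[i]{\alpha}{\beta}$ is symmetric to the Demonic case $\dleq[i]{\alpha}{\beta}$, and it suffices to treat the empty context $\G = \Gemp$, since the context transfer (``$\ftrans{\G}\ \tsub{s}{x}{f}$ iff $\ftrans{\tsub{\G}{x}{f}}\ s$'') is already supplied by prior work under the admissibility hypothesis. Throughout I would track proof terms explicitly: the stated ``iff'' is witnessed by the translation $M \mapsto \tsub{M}{x}{f}\ s$ together with its inverse, so the inductive hypotheses (which provide these translations for the subgames and subformulas) must be threaded through each rewriting step.

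First I would use that substitution distributes syntactically over refinement, $\tsub{(\dleq[i]{\alpha}{\beta})}{x}{f} = \dleq[i]{\tsub{\alpha}{x}{f}}{\tsub{\beta}{x}{f}}$ whenever the substitution is admissible, and then unfold both semantics via \rref{lem:app-restrict} so that the left-hand side reads
\[\lforall[\allstate\to\typei{i}]{t}{\big(\dtrans{\alpha}\ t\ \tsub{s}{x}{f} \limply \dtrans{\beta}\ t\ \tsub{s}{x}{f}\big)}\]
and the right-hand side is the corresponding quantification over $\dtrans{\tsub{\alpha}{x}{f}}$ and $\dtrans{\tsub{\beta}{x}{f}}$ at $s$. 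Applying the game-level substitution hypothesis (which holds for an arbitrary postcondition $t$, since postconditions are threaded unchanged through the game semantics) to each of $\dtrans{\alpha}\ t\ \tsub{s}{x}{f}$ and $\dtrans{\beta}\ t\ \tsub{s}{x}{f}$ rewrites them as $\dtrans{\tsub{\alpha}{x}{f}}\ (t \circ \sigma)\ s$ and $\dtrans{\tsub{\beta}{x}{f}}\ (t \circ \sigma)\ s$, where $\sigma$ denotes the state map $r \mapsto \tsub{r}{x}{f}$. The left-hand quantifier is thereby reduced to a quantifier ranging over postconditions of the form $t \circ \sigma$.

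The main obstacle is the final reindexing of this postcondition quantifier, which is genuinely more delicate than in \rref{lem:app-formula-rename}. There, renaming acted as an involution on $\allstate \to \typei{i}$, so the quantifier reindexed for free; here the state map $\sigma$ is \emph{not} invertible, so the family $\{t \circ \sigma\}$ need not exhaust all postconditions. The inclusion $\{t \circ \sigma\} \subseteq \{u\}$ immediately yields one direction of the iff. For the converse I would invoke \rref{lem:app-restrict} to restrict attention to postconditions $u$ with $\freevars{u} \subseteq \boundvars{\tsub{\alpha}{x}{f}} \cup \boundvars{\tsub{\beta}{x}{f}}$ and then exhibit a witness $t$ whose image $t \circ \sigma$ matches $u$ on the output states of the games: because $\sigma$ alters only $x$, one may simply take $t = u$ whenever $x$ is not free in $u$, and when $x$ is bound by the subgames the residual dependence on $x$ can be absorbed by a coincidence/bound-effect argument in the spirit of \rref{lem:app-formula-coincide}, precisely because admissibility of $\tsub{(\dleq[i]{\alpha}{\beta})}{x}{f}$ controls the interaction between $x$ and the bound variables. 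Carrying the proof-term translations through this reindexing, and verifying admissibility is preserved at each recursive call, is the bookkeeping I expect to dominate the effort.
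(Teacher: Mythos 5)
Your proposal matches the paper's own proof in all essentials: the paper likewise adds only the $\dleq[i]{\alpha}{\beta}$ case to the prior simultaneous induction, distributes the substitution syntactically, handles the easy direction by instantiating the postcondition quantifier at $\tsub{P}{x}{f}$ (your inclusion $\{t\circ\sigma\}\subseteq\{u\}$), and handles the hard direction exactly as you do---via \rref{lem:app-restrict}, the observation that substitution preserves bound variables, and the admissibility-driven argument that $x\notin\boundvars{\alpha}\cup\boundvars{\beta}$ forces $x\notin\freevars{P}$, so coincidence (\rref{lem:app-formula-coincide}) gives $P = \tsub{P}{x}{f}$ on the relevant states. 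The only slight blemish is your hedge about ``when $x$ is bound by the subgames'': under the paper's reading of admissibility that case cannot arise, so no separate absorption argument is needed there.
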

\begin{proof}
We prove only the case for $\dleq[i]{\alpha}{\beta},$ since all other cases are proven in prior work~\cite{ijcar20a}.

We prove each direction separately.
In each direction, IH step applies because formula substitution and game substitution are proven by simultaneous induction with one another.
We are only showing the new case of this induction.

\begin{align*}
&\phantom{\limply}               \ftrans{{\tsub{{\dleq[i]{\alpha}{\beta}}}{x}{f}}}\ s \\
&\limply        \ftrans{{\dleq[i]{{\tsub{\alpha}{x}{f}}}{{\tsub{\beta}{x}{f}}}}}\ s \\
&\limply        (\lforall[\allstate \to \typei{i}]{P}{(\dtrans{\tsub{\alpha}{x}{f}}\ P\ s \to \dtrans{\tsub{\beta}{x}{f}}\ P\ s)}) \\
&\limply_*      (\lforall[\allstate \to \typei{i}]{P}{(\dtrans{\tsub{\alpha}{x}{f}}\ \tsub{P}{x}{f}\ s \to \dtrans{\tsub{\beta}{x}{f}}\ \tsub{P}{x}{f}\ s)}) \\
&\limply_{IH}   (\lforall[\allstate \to \typei{i}]{P}{(\dtrans{\alpha}\ P\ \tsub{s}{x}{f} \to \dtrans{\beta}\ P\ \tsub{s}{x}{f})}) \\
&\limply   \ftrans{\dleq[i]{\alpha}{\beta}}\ \tsub{s}{x}{f}
\end{align*}

The step marked $*$ holds because the universe $\allstate \to \typei{i}$ is closed under substitution, so any universally quantified statement which holds of all $P$ trivially holds of those $P$ which have form $\tsub{Q}{x}{f}$ for some $Q$.

\begin{align*}
&\phantom{\limply}  \ftrans{\dleq[i]{\alpha}{\beta}}\ \tsub{s}{x}{f} \\
&\limply_{Dom}   (\lforall[\allstate \to \typei{i}]{P}{(\freevars{P} \subseteq \boundvars{\alpha} \cup \boundvars{\beta}
                 \limply \dtrans{\alpha}\ P\ \tsub{s}{x}{f} \to \dtrans{\beta}\ P\ \tsub{s}{x}{f})}) \\
&\limply_{IH} (\lforall[\allstate \to \typei{i}]{P}{(\freevars{P} \subseteq \boundvars{\alpha} \cup \boundvars{\beta}
                 \limply \dtrans{\tsub{\alpha}{x}{f}}\ \tsub{P}{x}{f}\ s \to \dtrans{\tsub{\beta}{x}{f}}\ \tsub{P}{x}{f}\ s)}) \\
&\limply_{BV}  (\lforall[\allstate \to \typei{i}]{P}{(\freevars{P} \subseteq \boundvars{\tsub{\alpha}{x}{f}} \cup \boundvars{\tsub{\beta}{x}{f}}
                 \limply \dtrans{\tsub{\alpha}{x}{f}}\ \tsub{P}{x}{f}\ s \to \dtrans{\tsub{\beta}{x}{f}}\ \tsub{P}{x}{f}\ s)}) \\
&\limply_*    (\lforall[\allstate \to \typei{i}]{P}{(\freevars{P} \subseteq \boundvars{\tsub{\alpha}{x}{f}} \cup \boundvars{\tsub{\beta}{x}{f}}
                 \limply \dtrans{\tsub{\alpha}{x}{f}}\ P\ s \to \dtrans{\tsub{\beta}{x}{f}}\ P\ s)}) \\
&\limply_{Dom} \ftrans{\dleq[i]{\tsub{\alpha}{x}{f}}{\tsub{\beta}{x}{f}}}\ s
\end{align*}

The steps marked Dom hold by \rref{lem:app-restrict}.
The step marked BV observes that $\boundvars{\alpha} = \boundvars{\tsub{\alpha}{x}{f}}$ and $\boundvars{\beta} = \boundvars{\tsub{\beta}{x}{f}}$ because term substitutions do not introduce or remove bound variables.
To show the step marked $*,$ assume an arbitrary $P$ such that $\freevars{P} \subseteq \boundvars{\tsub{\alpha}{x}{f}}$.
Here $\tsub{P}{x}{f}$ refers to the \emph{semantic} substitution $(\elam {s}{\allstate}{P\ (\tsub{s}{x}{f})})$.
Since $x \notin \boundvars{\alpha} \cup \boundvars{\beta},$ then by the variable assumption on $P$ we have that $x \notin \freevars{P},$ thus $s = \tsub{s}{x}{f}$ on $\freevars{P}$ and thus by (semantic) \rref{lem:app-formula-coincide} have $P\ s = (\tsub{P}{x}{f})\ s$ for all such $P$ which suffices to prove the step.
\end{proof}
\begin{corollary}
  If $\G \vdash \tsub{\phi}{x}{f}$ is valid then $\G, \pvx : x=f \vdash \phi$ is valid.
\label{cor:app-eq-rewrite}
\end{corollary}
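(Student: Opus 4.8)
The plan is to unfold both validity statements into their type-theoretic meaning and then transport the given proof across the syntactic substitution using \rref{lem:app-formula-subst}, exploiting that the equality hypothesis $x = f$ makes the corresponding semantic state update trivial. First I would unfold the hypothesis: validity of $\G \vdash \tsub{\phi}{x}{f}$ supplies a term $M : \pity{s}{\allstate}{(\ftrans{\G}\ s \to \ftrans{\tsub{\phi}{x}{f}}\ s)}$. The goal, validity of $\G, \pvx : x = f \vdash \phi$, asks for a term $N$ which, given a state $s$, a proof $g : \ftrans{\G}\ s$, and a proof $e : (\lget{s}{x} = f\ s)$, produces an inhabitant of $\ftrans{\phi}\ s$.

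Given such $s$, $g$, $e$, the construction of $N$ proceeds in three steps. First, apply the hypothesis to obtain $M\ s\ g : \ftrans{\tsub{\phi}{x}{f}}\ s$. Second, invoke \rref{lem:app-formula-subst} specialized to the empty context $\Gemp$ (for which $\tsub{\Gemp}{x}{f} = \Gemp$ is trivially admissible): since $\tsub{\phi}{x}{f}$ is admissible, inhabitation of $\ftrans{\tsub{\phi}{x}{f}}\ s$ corresponds to inhabitation of $\ftrans{\phi}\ \tsub{s}{x}{f}$, where $\tsub{s}{x}{f} = \lset{s}{x}{(f\ s)}$, so I obtain a proof of $\ftrans{\phi}\ \tsub{s}{x}{f}$. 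Third, use the equality hypothesis: from $e : (\lget{s}{x} = f\ s)$ we get $\lset{s}{x}{(f\ s)} = \lset{s}{x}{(\lget{s}{x})} = s$ by the setter--getter axioms, hence $\tsub{s}{x}{f} = s$ and therefore $\ftrans{\phi}\ \tsub{s}{x}{f} = \ftrans{\phi}\ s$ by congruence. This yields the desired inhabitant of $\ftrans{\phi}\ s$ and completes $N$.

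The one delicate point is the second step, namely relating the \emph{syntactic} substitution $\tsub{\phi}{x}{f}$ to the \emph{semantic} update $\tsub{s}{x}{f}$. I would discharge this entirely by appeal to the already-established Formula Substitution Lemma rather than re-proving it, observing that its empty-context instance strips away all context bookkeeping (the hypothesis $\ftrans{\Gemp}\ s$ is vacuously inhabited) and leaves exactly the bidirectional transport of proofs between $\ftrans{\tsub{\phi}{x}{f}}\ s$ and $\ftrans{\phi}\ \tsub{s}{x}{f}$ that I need. I would also flag the standing admissibility requirement on $\tsub{\phi}{x}{f}$, which is what licenses the lemma; everything else is a single function application followed by a rewrite along the setter--getter identity, both manifestly computable, so the resulting $N$ is a legitimate \CdGL proof witnessing validity.
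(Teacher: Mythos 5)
Your proof is correct and takes essentially the same route as the paper's: the paper likewise derives the corollary from \rref{lem:app-formula-subst} together with the observation that the hypothesis $x=f$ collapses the distinction between the state $s$ and the semantically updated state $\tsub{s}{x}{f}$. The only cosmetic difference is that the paper phrases the equality step as ``$x=f$ holds at the substituted state by reflexivity (for $x \notin \freevars{f}$),'' whereas you use the fixed-point direction $\tsub{s}{x}{f} = \lset{s}{x}{(\lget{s}{x})} = s$ via the setter--getter law, which in fact sidesteps any explicit freshness requirement.
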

\begin{proof}
  By \rref{lem:app-formula-subst} and because $\ftrans{x=f}\ \tsub{\phi}{x}{f}$ holds reflexivity for $x \notin \freevars{f}$.
\end{proof}

\subsection{Soundness}
We now show the main soundness using the previous lemmas.
\begin{theorem}[Soundness of Proof Calculus]
  If $\proves{\G}{M}{\phi}$ is provable then $\seq{\G}{\phi}$ is valid.
  As a special case, if $(\proves{\Gemp}{M}{\phi})$ is provable, then $\phi$ is valid.
\label{thm:app-proof-calculus-sound}
\end{theorem}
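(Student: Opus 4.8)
The plan is to proceed by induction on the derivation of $\proves{\G}{M}{\phi}$, which for the core \CdGL fragment coincides with structural induction on the proof term $M$. For each rule I assume the premisses are valid (the inductive hypotheses) and exhibit a witness inhabiting $\pity{s}{\allstate}{(\ftrans{\G}\ s \to \ftrans{\phi}\ s)}$; the special case $\G = \Gemp$ is then immediate from the definition of validity. Since the \CdGL core rules were already shown sound in prior work~\cite{ijcar20a,esop20}, and the semantics here differ only in living in the cumulative predicative tower $\typei{i}$ rather than a single universe, those cases transfer essentially verbatim: cumulativity guarantees that any witness in $\typei{\rank{\phi}}$ is simultaneously a witness in every larger universe, so no rank bookkeeping is needed outside the refinement rules. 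I would therefore concentrate the argument on the refinement connective and the rules of \rref{sec:proof-calculus}.

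For the elimination rules \irref{diamondref} and \irref{boxref}, the inductive hypotheses supply a strategy for $\ddiamond{\alpha}{\phi}$ (resp.\ $\dbox{\alpha}{\phi}$) together with the refinement $\aleq[i]{\alpha}{\beta}$ (resp.\ $\dleq[i]{\alpha}{\beta}$). Unfolding the refinement semantics yields a map $\pity{t}{(\allstate \to \typei{i})}{(\atrans{\alpha}\ t\ s \to \atrans{\beta}\ t\ s)}$; the side condition $\rank{\phi} \leq i$ together with cumulativity lets me instantiate $t := \ftrans{\phi}$, which lives in $\allstate \to \typei{\rank{\phi}}$ and hence in $\allstate \to \typei{i}$, and apply the map to the given strategy, producing the desired strategy for $\beta$. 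This is precisely the step for which rank was introduced, so predicativity is what makes the instantiation legitimate. The structural and algebraic refinement rules (\irref{arefTest}, \irref{drefTest}, \irref{arefRand}, \irref{drefRand}, the choice rules, and the equations of \rref{fig:cdgl-discrete}) are handled uniformly: I unfold both games' semantics at a fixed but arbitrary postcondition $t$ and build the required constructive map between strategies directly. Here \rref{lem:app-restrict} is essential, since it permits assuming $\freevars{t} \subseteq \boundvars{\alpha} \cup \boundvars{\beta}$, which is exactly what discharges the free-variable side conditions (for example $x \notin \freevars{g}$ in \irref{assignCancel}), while \rref{lem:app-formula-coincide}, \rref{lem:app-formula-rename}, and \rref{lem:app-formula-subst} supply the renaming and substitution bookkeeping for the assignment and quantifier rules.

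The genuinely delicate cases, and the ones I expect to be the main obstacle, are those that work around subnormality and those for differential equations. For \irref{refSeq} and \irref{refUnloop} I cannot appeal to any general distribution of $\dbox{\cdot}{}$ over conjunction, so I would exploit the hypothesis that the relevant component $\lsysa_1$ (resp.\ the loop body) is a \emph{system}: a system box modality is normal, which lets me push the contextual refinement $\dbox{\lsysa_1}{(\dleq{\beta_1}{\beta_2})}$ through the composition using $\dtrans{\alpha;\beta}\ P\ s = \dtrans{\alpha}\ (\dtrans{\beta}\ P)\ s$ and monotonicity of the game semantics in the postcondition, then chain it with the first-component refinement; \irref{refSeqG} is the analogous argument in the empty context. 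For \irref{refUnloop} this reasoning is iterated along the coinductive unfolding of $\dtrans{\prepeat{\lsysa}}$. The ODE rules \irref{refDC}, \irref{refDW}, \irref{refSolve}, and especially \irref{refDG} require reasoning at the level of the continuous semantics: I unfold $\dtrans{\pevolvein{\D{x}=f}{\ivr}}$, invoke the definition of $(\solves{sol}{s}{d}{\D{x}=f})$ from \rref{app:sem-full}, and in \irref{refDG} argue that linearity of the $y$-dynamics forces the ghost solution to exist on the entire domain of the $x$-solution, so that every $x$-trajectory extends to a trajectory of the augmented system. The hardest single step is \irref{refDG}, where the constructive witness must produce the ghost trajectory uniformly from the given $x$-trajectory while respecting the nondeterministically assigned final values of $y$ and $\D{y}$; this asymmetry is exactly why the rule is a refinement rather than an equivalence.
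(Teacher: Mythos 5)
Your proposal follows essentially the same route as the paper's proof: induction on the derivation, with the prior-work \CdGL cases imported unchanged, the elimination rules \irref{diamondref} and \irref{boxref} handled by instantiating the quantified postcondition at $\ftrans{\phi}$ (legitimate by the rank side condition plus cumulativity), the algebraic rules by semantic unfolding at an arbitrary postcondition $P$, and \irref{refSeq}/\irref{refSeqG}/\irref{refUnloop} by exploiting normality of system modalities. The divergences are local but worth naming. First, for \irref{refUnloop} the paper does not literally ``iterate along the coinductive unfolding'': it inverts the two coinductive hypotheses into invariants $J$ and $K$, then rebuilds $\dbox{\prepeat{\beta}}{P}$ by \irref{bloopI} with invariant $J \land K$, using the derived axiom K (sound only for systems) to close the inductive step; this invariant-extraction-and-recombination is the concrete form your sketch would have to take. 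Second, and more substantively, for \irref{refDG} you propose to redo the analytic argument that linearity of the ghost dynamics guarantees the ghost trajectory exists for the full duration of the $x$-trajectory. The paper avoids this entirely: it massages the hypothesis into the premiss shape of the prior-work \CdGL ghost rule \irref{dg} (via the existential property and monotonicity \irref{mon}) and invokes that rule's already-established soundness, so the case reduces to a few lines of bookkeeping. Your version is self-contained but duplicates the hard analysis already done in prior work; the paper's version illustrates the intended economy of layering refinement soundness on top of the existing \CdGL calculus, a pattern it uses throughout (e.g., \irref{seqI}, \irref{btestI}, \irref{boxref} as reasoning steps inside the soundness argument). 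One small misattribution: the side condition $x \notin \freevars{g}$ in \irref{assignCancel} is discharged by the term coincidence lemmas of prior work, not by \rref{lem:app-restrict}; the restriction lemma's actual role is to justify the tight free-variable treatment of refinement formulas inside the coincidence and substitution lemmas (\rref{lem:app-formula-coincide}, \rref{lem:app-formula-subst}), which your proof would still need as stated.
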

\begin{proof}
By induction on the derivation.
In each case, fix some state $s : \sty$ and assume (G) $\ftrans{\G}\ s$.
In each case, fix $i \in \mathbb{N}$ and $P : \typei{i}$.
In each case, we show some refinement of form $\dleq{\alpha}{\beta}$ by assuming $\dbox{\alpha}{P}$ to show
$\dbox{\beta}{P}$.

In every premiss whose context is also $\G$, assume modus ponens with assumption (G) has been applied.

We first mention several derived axioms which we use.
The axioms \irref{K}, \irref{Kd}, and \irref{boxand}, while not sound for all games $\alpha$ , are sound for systems $\lsysa$.
\begin{center}
\begin{calculuscollections}{\textwidth}
  \begin{calculus}
\cinferenceRule[K|K]{}
{\dbox{\lsysa}{(\phi \limply \psi)} \limply \dbox{\lsysa}{\phi} \limply \dbox{\lsysa}{\psi}
}{$\lsysa$ is a system}
\cinferenceRule[Kd|Kd]{}
{\dbox{\lsysa}{(\phi \limply \psi)} \limply \ddiamond{\lsysa}{\phi} \limply \ddiamond{\lsysa}{\psi}
}{\ldito}
\cinferenceRule[boxand|{$[\,]\land$}]{}
{\dbox{\lsysa}{\phi} \land \dbox{\lsysa}{\psi} \lequiv \dbox{\lsysa}{(\phi \land \psi)}}
{\ldito}
  \end{calculus}
\end{calculuscollections}
\end{center}
Soundness of \irref{K} and \irref{Kd} can be proved  sound by an induction on $\lsysa,$ then axiom \irref{boxand} derives from \irref{K}.
In classical logics, \irref{Kd} is derivable from \irref{K}, but constructively the axioms are independent.


\mycase  \irref{nopAssign}
We show $\humod{x}{x} \liso \ptest{\btt}$ by showing
       $(\G \vdash \dbox{\humod{x}{x}}{P})
\lequiv  (\eren{\G}{x}{y},x=y \vdash P)
\lequiv_1  (\G, x=y \vdash P)
\lequiv_2  (\G \vdash P) 
\lequiv  (\G \vdash \btt \limply P)
\lequiv  (\G \vdash \dbox{\ptest{\btt}}{P})$
where the step marked 1 holds because by \rref{cor:app-eq-rewrite} and the step marked 2 holds because $y$ was fresh.

\mycase \irref{choiceidem}
We show $\alpha\cup\alpha \liso \alpha$ by showing 
$\dbox{\alpha\cup\alpha}{P}
\lequiv  \dbox{\alpha}{P} \land \dbox{\alpha}{P}
\lequiv  \dbox{\alpha}{P}$.

\mycase \irref{choicecomm}
We show $\alpha\cup\beta \liso \beta\cup\alpha$ by showing
$\dbox{\alpha\cup\beta}{P} 
\lequiv \dbox{\alpha}{P} \land \dbox{\beta}{P}
\lequiv \dbox{\beta}{P} \land \dbox{\alpha}{P}
\lequiv \dbox{\beta\cup\alpha}{P}$.

\mycase \irref{choiceassoc}
We show $\{\alpha\cup\beta\}\cup\gamma \liso \alpha \cup \{\beta \cup \gamma\}$ by showing
$\dbox{\{\alpha\cup\beta\}\cup\gamma}{P}
\lequiv  {\dbox{\alpha\cup\beta}{P} \land \dbox{\gamma}{P}}
\lequiv  (\dbox{\alpha}{P} \land \dbox{\beta}{P}) \land \dbox{\gamma}{P}
\lequiv   \dbox{\alpha}{P} \land (\dbox{\beta}{P} \land \dbox{\gamma}{P})
\lequiv   \dbox{\alpha}{P} \land \dbox{\beta \cup \gamma}{P}
\lequiv   \dbox{\alpha \cup \{\beta \cup \gamma\}}{P}$.

\mycase \irref{annihl}
We show $\ptest{\bff};\alpha \liso \ptest{\bff}$ by showing
$ \dbox{\ptest{\bff};\alpha}{P}
\lequiv  \dbox{\ptest{\bff}}{\dbox{\alpha}{P}}
\lequiv  (\bff \limply \dbox{\alpha}{P})
\lequiv  \btt
\lequiv  (\bff \limply P)
\lequiv  \dbox{\ptest{\bff}}{P}$.

\mycase \irref{dualDNE}
We show $\pdual{{\pdual{\alpha}}} \liso \alpha$ by showing
$\dbox{\pdual{{\pdual{\alpha}}}}{P}
\lequiv  \ddiamond{\pdual{\alpha}}{P}
\lequiv \dbox{\alpha}{P}$.

\mycase \irref{diamondref}
Assume $\rank{P} \leq i$ by side condition.
Assume (D1) $\G \vdash \ddiamond{\alpha}{P}$ and
(D2) $\G \vdash \aleq[i]{\alpha}{\beta},$ i.e.,
(D2) $\G \vdash \dleq[i]{\pdual{\alpha}}{\pdual{\beta}},$ i.e.,
and from (D2) have  $\lforall[(\sty \to \typei{i})]{P}{\dbox{\pdual{\alpha}}{P} \limply \dbox{\pdual{\beta}}{P}}$
and then  $\lforall[(\sty \to \typei{i})]{P}{\ddiamond{\alpha}{P} \limply \ddiamond{\beta}{P}}$ by semantics.
Since $\rank{P} \leq i$ then $\ftrans{P}\ s \in \typei{i},$ so by specialization and modus ponens on
(D1) have $\ddiamond{\beta}{P}$ as desired.

\mycase \irref{boxref}
Assume $\rank{P} \leq i$ by side condition.
Assume (D1) $\G \vdash \dbox{\alpha}{P}$ and
(D2) $\G \vdash \dleq[i]{\alpha}{\beta}$
and from (D2) have $\lforall[(\sty \to \typei{i})]{P}{\dbox{\alpha}{P} \limply \dbox{\beta}{P}}$.
Since $\rank{P} \leq i$ then $\ftrans{P}\ s \in \typei{i},$ so by specialization and modus ponens on
(D1) have $\dbox{\beta}{P}$ as desired.

\mycase \irref{arefTest}
Assume (D) $\phi \limply \psi$.
Then have
\begin{align*}
       &\ddiamond{\ptest{\phi}}{P}\\
\lequiv   &\phi \land P\\
\limply_D &\psi \land P\\
\lequiv  &=\ddiamond{\ptest{\psi}}{P}
\end{align*}
so that $\ddiamond{\ptest{\phi}}{P} \limply \ddiamond{\ptest{\psi}}{P}$ as desired.

\mycase \irref{drefTest}
Assume (D) $\psi \limply \phi$.
Then have 
\begin{align*}
          &\dbox{\ptest{\phi}}{P}\\
\lequiv   &(\phi \limply P)\\
\limply_D &(\psi \limply P)\\
\lequiv   &\dbox{\ptest{\psi}}{P}
\end{align*}
so that $\dbox{\ptest{\phi}}{P} \limply \dbox{\ptest{\psi}}{P}$ as desired.

\mycase \irref{refUnloop}
Assume as a side condition that (SC1) $\lsysa$ is a system.
Let $i$ be the rank of the refinement, which as a side condition (SC2) is at least the rank of $\lsysa$ and $\beta$.
Assume (D) $\dbox{\prepeat{\lsysa}}{(\dleq[i]{\lsysa}{\beta})}$.
Assume (A) $\dbox{\prepeat{\lsysa}}{P}$ to show $\dbox{\prepeat{\beta}}{P}$.
By inversion on (D) and (A) there exist $J, K : \sty \to \typei{i}$ such that
(D1) $\G \vdash J$  (D2) $J \vdash \dbox{\lsysa}{J}$  (D3) $J \vdash \dleq[i]{\lsysa}{\beta}$
(A1) $\G \vdash K$  (A2) $K \vdash \dbox{\lsysa}{K}$  (A3) $K \vdash P$.
We show $\dbox{\prepeat{\beta}}{P}$ by \irref{bloopI} with invariant $J \land K$.
The first premiss $\G \vdash J \land K$ is immediate by (D1) and (A1).
The third premiss $J \land K \vdash P$ is immediate by (A3).
The second premiss is $J \land K \vdash \dbox{\beta}{(J \land K)}$.
By (D3) we have $J \land K \vdash \dleq[i]{\lsysa}{\beta}$ so we apply \irref{boxref}, which is applicable because $J \land K$ has rank $i$ by (SC2).
Then by soundness of \irref{boxref} it suffices to show $J \land K \vdash \dbox{\lsysa}{(J \land K)}$.
By (SC1), $\lsysa$ is a system so axiom \irref{K} applies, thus $J \land K \vdash \dbox{\lsysa}{(J \land K)}$ holds by (D2) and (A2).

\mycase \irref{unrollLref}
To show $\ptest{\btt} \cup \{\alpha;\prepeat{\alpha}\} \liso \prepeat{\alpha}$ we show each direction of the equivalence.

We show the left-to-right case: Assume (A) $\dbox{\ptest{\btt} \cup \{\alpha;\prepeat{\alpha}\}}{P}$ to show $\dbox{\prepeat{\alpha}}{P}$.
From \irref{bchoiceE1} and \irref{bchoiceE2} on (A) have (A1) $\dbox{\ptest{\btt}}{P}$ and (A2) $\dbox{\alpha;\prepeat{\alpha}}{P}$.
Respectively, from (A1) have (A3) $P$ by \irref{btestI} and modus ponens on verum $\btt$, then from (A2) have (A4) $\dbox{\alpha}{\dbox{\prepeat{\alpha}}{P}}$ by \irref{seqE}.
From (A2) and (A4) by \irref{broll} have $\dbox{\prepeat{\alpha}}{P}$ as desired.

We show the right-to-left case: Assume (A) $\dbox{\prepeat{\alpha}}{P}$ to show $\dbox{\ptest{\btt} \cup \alpha;\prepeat{\alpha}}{P}$.
By \irref{bunroll} on (A) have (A1) $P$ and (A2) $\dbox{\alpha}{\dbox{\prepeat{\alpha}}{P}}$.
Then respectively have (A3) $\dbox{\ptest{\btt}}{P}$ by weakening and \irref{btestI} on (A1), and  (A4) $\dbox{\alpha;\prepeat{\alpha}}{P}$ by \irref{seqI} on (A2).
By \irref{bchoiceI} on (A3) and (A4) have $\dbox{\ptest{\btt} \cup \alpha;\prepeat{\alpha}}{P}$ as desired.

\mycase \irref{arefRand}
Have $\ftrans{\ddiamond{\humod{x}{f}}{P}}\ s
\limply P\ (\lset{s}{x}{(f\ s)})
\limply \sity{v}{\reals}{P\ (\lset{s}{x}{v})}
\limply \ftrans{\ddiamond{\prandom{x}}{P}}\ s$.

\mycase \irref{drefRand}
Have  $\ftrans{\dbox{\prandom{x}}{P}}\ s
\limply  \pity{v}{\reals}{P\ (\lset{s}{x}{v})}
\limply   P\ (\lset{s}{x}{(f\ s)})
\limply  \ftrans{\dbox{\humod{x}{f}}{P}}\ s$.

\mycase \irref{refSeq}
Assume  (SC) $\lsysa_1$ is a system.
Assume (D1) $\dleq{\lsysa_1}{\alpha_2}$
and (D2) $\dbox{\lsysa_1}{(\dleq{\beta_1}{\beta_2})}$
and (A) $\dbox{\lsysa_1;\beta_1}{P}$ to show
$\dbox{\alpha_2;\beta_2}{P}$.
From (A) by \irref{seqE} have (1) $\dbox{\lsysa_1}{\dbox{\beta_1}{P}}$.
Axiom \irref{K} is applicable to (D2) and (1)  because of (SC), yielding
(2) $\dbox{\lsysa_1}{(\dleq{\beta_1}{\beta_2} \land \dbox{\beta_1}{P})}$.
Then by \irref{mon} and \irref{boxref} have (3) $\dbox{\lsysa_1}{\dbox{\beta_2}{P}}$.
By \irref{boxref} on the postcondition $\dleq{\beta_1}{\beta_2} \land \dbox{\beta_1}{P}$ from (3) and (D1) have (4) $\dbox{\alpha_2}{\dbox{\beta_2}{P}},$ so by \irref{seqI} have
$\dbox{\alpha_2;\beta_2}{P}$ as desired.

\mycase \irref{refSeqG}
Assume (D1) $\dleq{\alpha_1}{\alpha_2}$
and (D2) $\cdot \vdash \dleq{\beta_1}{\beta_2}$.
Then from (A) have $\dbox{\alpha_1}{\dbox{\beta_1}{P}}$.
Apply \irref{boxref} with (D1) to get
(1) $\dbox{\alpha_2}{\dbox{\beta_1}{P}}$.
Since (D2) is valid then plug in $P$ to get that (2) $\dbox{\beta_1}{P} \vdash \dbox{\beta_2}{P}$.
Then monotonicity \irref{mon} on (1) and (2) gives $\dbox{\alpha_2}{\dbox{\beta_2}{P}}$
which immediately gives $\dbox{\alpha_2;\beta_2}{P}$ by \irref{seqI} as desired.

\mycase \irref{drefChoiceR}
Assume (D1) $\dleq{\alpha}{\beta}$
and (D2) $\dleq{\alpha}{\gamma}$
and (A) $\dbox{\alpha}{P}$
so by \irref{boxref} on (D1) and (D2) have
$\dbox{\beta}{P}$ and $\dbox{\gamma}{P}$ so that
$\dbox{\beta \cup \gamma}{P}$ as desired, so that
$\dleq{\alpha}{\beta \cup \gamma}$.

\mycase \irref{refTrans}
Assume (D1) $\G \vdash \dleq{\alpha}{\beta}$ and
(D2) $\G \vdash \dleq{\beta}{\gamma}$.
Want to show $\G \vdash \dleq{\alpha}{\gamma}$.
(A) $\dbox{\alpha}{P}$.
From (D1) and (D2) have $\dbox{\alpha}{P} \limply \dbox{\beta}{P}$ and $\dbox{\beta}{P} \limply \dbox{\gamma}{P}$ so by modus ponens twice from (A) have $\dbox{\gamma}{P},$ so finally $\dbox{\alpha}{P} \limply \dbox{\gamma}{P},$ i.e., $\dleq{\alpha}{\gamma}$.

\mycase \irref{refRefl}
Want to show $\G \vdash \dleq{\alpha}{\alpha}$.
Assume (A) $\dbox{\alpha}{P}$ so $\dbox{\alpha}{P}$ by \irref{hyp}, thus
$\dbox{\alpha}{P} \limply \dbox{\alpha}{P},$ i.e., $\dleq{\alpha}{\alpha}$.

\mycase \irref{seqassoc}
Each step is reversible so that this case is an equivalence $\{\alpha;\beta\};\gamma \liso \alpha;\beta;\gamma$.
$\dbox{\{\alpha;\beta\};\gamma}{P}
 \lequiv   \dbox{\alpha;\beta}{\dbox{\gamma}{P}}
 \lequiv   \dbox{\alpha}{\dbox{\beta}{\dbox{\gamma}{P}}}
 \lequiv_* \dbox{\alpha}{\dbox{\beta;\gamma}{P}}
 \lequiv   \dbox{\alpha;\{\beta;\gamma\}}{P}$
where step (*) also uses \irref{mon} and every step uses \irref{seqI} or \irref{seqE}.

\mycase \irref{assignCancel}
Assume side condition $x \notin \freevars{g}$.
Then note $(\lset{s}{x}{(f\ s)}) = s$ on $\freevars{g}^\complement$ by \cite[Lem.\ 11]{ijcar20a}, then by \cite[Lem.\ 10]{ijcar20a} have
(1) $g\ (\lset{s}{x}{(f\ s)}) = g\ s$
\begin{align*}
          & \ftrans{\dbox{\humod{x}{f};\humod{x}{g}}{P}}\ s\\
\lequiv   & \ftrans{\dbox{\humod{x}{f}}{\dbox{\humod{x}{g}}{P}}}\ s\\
\lequiv   & \ftrans{\dbox{\humod{x}{g}}{P}}\ (\lset{s}{x}{(f\ s)})\\
\lequiv   & P\ (\lset{(\lset{s}{x}{(f\ s)})}{x}{(g\ (\lset{s}{x}{(f\ s)}))})\\
\lequiv_1 & P\ (\lset{(\lset{s}{x}{(f\ s)})}{x}{(g\ s)})\\
\lequiv   & P\ (\lset{s}{x}{(g\ s)})\\
\lequiv   & \ftrans{\dbox{\humod{x}{g}}{P}}\ s
\end{align*}

\mycase \irref{seqdistr}
$\ftrans{\dbox{(\alpha\cup\beta);\gamma}{P}}\ s
\lequiv   \ftrans{\dbox{\alpha\cup\beta}{\dbox{\gamma}{P}}}\ s
\lequiv  \ftrans{\dbox{\alpha}{\dbox{\gamma}{P}}}\ s \kwprod \ftrans{\dbox{\beta}{\dbox{\gamma}{P}}}\ s
\lequiv  \ftrans{\dbox{\alpha;\gamma}{P}}\ s \kwprod \ftrans{\dbox{\beta;\gamma}{P}}\ s
\lequiv  \ftrans{\dbox{\{\alpha;\gamma\} \cup \{\beta;\gamma\}}{P}}\ s$

\mycase \irref{drefChoiceL1}
Since
$       \dtrans{\alpha \cup \beta}\ P\ s
\limply \dtrans{\alpha}\ P\ s \kwprod \dtrans{\beta}\ P\ s
\limply \dtrans{\alpha}\ P\ s$.

\mycase \irref{drefChoiceL2}
Since
$       \dtrans{\alpha \cup \beta}\ P\ s
\limply \dtrans{\alpha}\ P\ s \kwprod \dtrans{\beta}\ P\ s
\limply \dtrans{\beta}\ P\ s$.

\mycase \irref{arefChoiceR1}
Since $\dbox{\pdual{\alpha}}{P} \limply \ddiamond{\alpha}{P} \limply \ddiamond{\alpha \cup \beta}{P} \limply \dbox{\pdual{\{\alpha \cup \beta\}}}{P}$

\mycase \irref{arefChoiceR2}
Since $\dbox{\pdual{\beta}}{P} \limply \ddiamond{\beta}{P} \limply \ddiamond{\alpha \cup \beta}{P} \limply \dbox{\pdual{\{\alpha \cup \beta\}}}{P}$

\mycase \irref{seqidl}
$\dbox{\ptest{\btt}; \alpha}{P}
\lequiv \dbox{\ptest{\btt}}{\dbox{\alpha}{P}}
\lequiv \ptest{\btt} \limply {\dbox{\alpha}{P}}
\lequiv \dbox{\alpha}{P}$

\mycase \irref{dualSkip}
$\dbox{\eskip}{P}
\lequiv  \dbox{\ptest{\btt}}{P}
\lequiv  \btt \to P
\lequiv  P
\lequiv  \btt \land P
\lequiv  \ddiamond{\ptest{\btt}}{P}
\lequiv  \dbox{\pdual{\ptest{\btt}}}{P}
\lequiv  \dbox{\pdual{\eskip}}{P}$

\mycase \irref{dualSeq}
$        \dbox{\pdual{\{\alpha;\beta\}}}{P}
\lequiv  \ddiamond{\alpha;\beta}{P}
\lequiv  \ddiamond{\alpha}{\ddiamond{\beta}{P}}
\lequiv  \ddiamond{\alpha}{\dbox{\pdual{\beta}}{P}}
\lequiv  \dbox{\pdual{\alpha}}{\dbox{\pdual{\beta}}{P}}
\lequiv  \dbox{\pdual{\alpha};\pdual{\beta}}{P}$

\mycase \irref{dualAssign}
$         \ftrans{\dbox{\pdual{\humod{x}{f}}}{P}}\ s
\lequiv  \ftrans{\ddiamond{\humod{x}{f}}{P}}\ s
\lequiv  P\ (\lset{s}{x}{(f\ s)})
\lequiv  \ftrans{\dbox{\humod{x}{f}}{P}}\ s
$

\mycase \irref{refDC}
Assume (D) $\dbox{\pevolvein{\D{x}=f}{\phi}}{\psi}$.
We wish to show $\ftrans{\dbox{\pevolvein{\D{x}=f}{\phi}}{P}}\ s \lequiv \ftrans{\dbox{\pevolvein{\D{x}=f}{\phi \land \psi}}{P}}\ s$.
Show the forward implication, then converse implication.

Forward implication:
Assume
(A)
\begin{align*}
&\ftrans{\dbox{\pevolvein{\D{x}=f}{\phi}}{P}}\ s = \\
\quad&\pity{d}{\reals_{\geq0}}{\pity{sol}{[0,d]\to\xty}{}} \\
\quad&(\solves{sol}{s}{d}{\D{x}=f})\\
\quad&\to (\pity{t}{[0,d]}{\ftrans{\phi}\ {(\lset{s}{x}{(sol\ t)})}})\\
\quad&\to P\ (\lset{s}{(x,\D{x})}{(sol\ d, f\ (\lset{s}{x}{(sol\ d)}))})
\end{align*}

Want to show
\begin{align*}
&\ftrans{\dbox{\pevolvein{\D{x}=f}{\phi\land\psi}}{P}}\ s = \\
\quad&\pity{d}{\reals_{\geq0}}{\pity{sol}{[0,d]\to\xty}{}} \\
\quad&(\solves{sol}{s}{d}{\D{x}=f})\\
\quad&\to (\pity{t}{[0,d]}{\ftrans{\phi\land\psi}\ {(\lset{s}{x}{(sol\ t)})}})\\
\quad&\to P\ (\lset{s}{(x,\D{x})}{(sol\ d, f\ (\lset{s}{x}{(sol\ d)}))})
\end{align*}
So assume (B1) $(\solves{sol}{s}{d}{\D{x}=f})$
and (B2) $(\pity{t}{[0,d]}{\ftrans{\phi\land\psi}\ (\lset{s}{x}{(sol\ t)})})$
then by left projection have (B3) $(\pity{t}{[0,d]}{\ftrans{\phi}\ (\lset{s}{x}{(sol\ t)})})$.
By applying (B1) and (B3) to (A) have 
$P\ (\lset{s}{(x,\D{x})}{(sol\ d, f\ (\lset{s}{x}{(sol\ d)}))})$ as desired.

Converse implication:
Assume (A)
\begin{align*}
&\ftrans{\dbox{\pevolvein{\D{x}=f}{\phi\land\psi}}{P}}\ s = \\
\quad&\pity{d}{\reals_{\geq0}}{\pity{sol}{[0,d]\to\xty}{}} \\
\quad&(\solves{sol}{s}{d}{\D{x}=f})\\
\quad&\to (\pity{t}{[0,d]}{\ftrans{\phi\land\psi}\ {(\lset{s}{x}{(sol\ t)})}})\\
\quad&\to P\ (\lset{s}{(x,\D{x})}{(sol\ d, f\ (\lset{s}{x}{(sol\ d)}))})
\end{align*}

Want to show
\begin{align*}
&\ftrans{\dbox{\pevolvein{\D{x}=f}{\phi}}{P}}\ s = \\
\quad&\pity{d}{\reals_{\geq0}}{\pity{sol}{[0,d]\to\xty}{}} \\
\quad&(\solves{sol}{s}{d}{\D{x}=f})\\
\quad&\to (\pity{t}{[0,d]}{\ftrans{\phi}\ {(\lset{s}{x}{(sol\ t)})}})\\
\quad&\to P\ (\lset{s}{(x,\D{x})}{(sol\ d, f\ (\lset{s}{x}{(sol\ d)}))})
\end{align*}
So assume (B1) $(\solves{sol}{s}{d}{\D{x}=f})$
and (B2) $(\pity{t}{[0,d]}{\ftrans{\phi}\ {(\lset{s}{x}{(sol\ t)})}})$.
Now for every $t \in [0,d]$ have $\ftrans{\psi}\ (\lset{s}{x}{(sol\ t)})$
from (D) because solutions and domain-constraints are prefix-closed:
from (B1) and (B2) have (C1) $(\solves{sol}{s}{t}{\D{x}=f})$
and (C2) $(\pity{r}{[0,t]}{\ftrans{\phi}\ {(\lset{s}{x}{(sol\ r)})}})$.
That is, (B3) $(\pity{t}{[0,d]}{\ftrans{\psi}\ {(\lset{s}{x}{(sol\ t)})}})$.
By conjunction (B2) have
(B4) $(\pity{t}{[0,d]}{\ftrans{\phi \land \psi}\ {(\lset{s}{x}{(sol\ t)})}})$.
Applying (B1) and (B4) to (A) have
$P\ (\lset{s}{(x,\D{x})}{(sol\ d, f\ (\lset{s}{x}{(sol\ d)}))})$
as desired.

\mycase \irref{refDW}
Assume (A)$\ftrans{\dbox{\prandom{x};\humod{\D{x}}{f};\ptest{\ivr}}{P}}\ s$ so that
$\ftrans{\dbox{\prandom{x}}{\dbox{\humod{\D{x}}{f}}{\dbox{\ptest{\ivr}}{P}}}}\ s$ and
(A1)
$\pity{v}{\reals}{(\ftrans{\ivr}\ (\lset{(\lset{s}{x}{v})}{\D{x}}{(f\ (\lset{s}{x}{v}))}) \limply \ftrans{P}\ (\lset{(\lset{s}{x}{v})}{\D{x}}{(f\ (\lset{s}{x}{v}))}))}$

To show $\dbox{\pevolvein{\D{x}=f}{\ivr}}{P}$ we assume some $d > 0$ and $sol$ such that
(B1) $(\solves{sol}{s}{d}{\D{x}=f})$
(B2) $(\pity{t}{[0,d]}{\ftrans{\ivr}\ (\lset{s}{x}{(sol\ t)})})$.
Specialize (B2) to $t=d$ so
$\ftrans{\ivr}\ (\lset{s}{x}{(sol\ d)})$ and apply \rref{lem:app-formula-coincide} since $\D{x} \notin \freevars{\ivr}$ by syntactic constraints so
(C) $\ftrans{\ivr}\ (\lset{(\lset{s}{x}{(sol\ d)})}{\D{x}}{(f\ \lset{s}{x}{(sol\ d)})})$.
Specialize (A1) to $v = sol\ d$ and apply (C) yielding
$\ftrans{P}\ (\lset{(\lset{s}{x}{(sol\ d)})}{\D{x}}{(f\ (\lset{s}{x}{(sol\ d)}))})$ as desired.

\mycase \irref{refSolve}
Assume side condition (SC1) that term $\{x,\D{x},t,\D{t}\} \cap \freevars{d} = \emptyset,$ so by \cite[Lem.\ 10]{ijcar20a} we know
that the value of term $d$ is constant, i.e.,
$d\ s = d\ (\lset{s}{(x,t)}{(sol\ r,r)}) = d\ (\lset{s}{(x,t,\D{x},\D{t})}{(sol\ r, r, f\ (\lset{s}{(x,t)}{(sol\ r,r)}), 1)})$ for all $r$.
Throughout this case we write $\hat{d}$ for the real number $d\ s$.

Assume side condition that $sln$ solves the ODE on $[0,d]$ meaning there exists function $sol$ such that
 (SC2) $sol\ (\lget{s}{t}) = sln\ s$ for states $s$ such that $\lget{s}{t} \in [0,\hat{d}]$ and $(\solves{sol}{s}{\hat{d}}{\D{t}=1,\D{x}=f})$.
Assume (G1) $t=0$ (G2) $d \geq 0$
(D) $\dbox{\prandom{t};\ptest{0 \leq t \leq d};\humod{x}{sln}}{\ivr}$
and (A)
$\dbox{\humod{t}{d};\ptest{t \geq 0};\humod{x}{sln};\humod{\D{x}}{f};\humod{\D{t}}a{1}}{P}$
so that
(A1) 
\[P\ (\lset{s}{(x,\D{x},t,\D{t})}{(sln\ (\lset{s}{t}{\hat{d}}), f\ (\lset{(x,t)}{(sln\ (\lset{s}{t}{\hat{d}}))}), \hat{d}, 1)})\]
Want to show $\dbox{\pdual{\pevolvein{\D{t}=1,\D{x}=f}{\ivr}}}{P},$ i.e., $\ddiamond{\pevolvein{\D{t}=1,\D{x}=f}{\ivr}}{P}$ for which it suffices to show for some $sol$ and $d$ (specifically $sol$ and $d$ above) that
(P1) $(\solves{sol}{s}{\hat{d}}{\D{t}=1,\D{x}=f})$
(P2) $(\pity{r}{{[0,\hat{d}]}}{\ftrans{\ivr}\ {(\lset{s}{(x,t)}{(sol\ r,r)})}})\\$
(P3) $P\ (\lset{s}{(x,t,\D{x},\D{t})}{(sol\ \hat{d}, \hat{d}, f\ (\lset{s}{(x,t)}{(sol\ \hat{d},\hat{d})}), 1)})$

(P1) is immediate by (SC2).
To show (P2), note from (D) have
\begin{align*}
  &\pity{r}{[0,\hat{d}]}{\ftrans{\ivr}\ (\lset{s}{(t,x)}{(r,sln\ (\lset{s}{t}{r}))})}\\ 
= &\pity{r}{[0,\hat{d}]}{\ftrans{\ivr}\ (\lset{s}{(x,t)}{(sln\ (\lset{s}{t}{r}),r)})}\\
=_{SC2} &\pity{r}{[0,\hat{d}]}{\ftrans{\ivr}\ (\lset{s}{(x,t)}{(sol\ r,r)})}
\end{align*}
which is (P2) as desired.

Then (P3) follows from (A1) by specializing $t=d$ and since by (SC2) $sln\ (\lset{s}{t}{d}) = sol\ \hat{d}$.
This completes the case.

\mycase \irref{refDG}
Assume side condition that (SC) $y$ is fresh in $y_0, f, a, b, \ivr$.
Assume (A) $\ftrans{\dbox{\humod{y}{y_0};\pevolvein{\D{x}=f,\D{y}=a(x)y+b(x)}{\ivr}}{P}}\ s$ to show
$\ftrans{\dbox{\pevolvein{\D{x}=f}{\ivr};\pdual{\prandom{y}};\pdual{\prandom{\D{y}}}}{P}}\ s$.
Let $Q \equiv (\lexists{y}{\lexists{\D{y}}{P}})$.
From (A) have $\ftrans{\dbox{\pevolvein{\D{x}=f,\D{y}=a(x)y+b(x)}{\ivr}}{P}}\ (\lset{s}{y}{(y_0\ s)})$
With witness $y = (y_0\ s)$ then have
(1) $\ftrans{\lexists{y}{\dbox{\pevolvein{\D{x}=f,\D{y}=a(x)y+b(x)}{\ivr}}{P}}}\ s$
and by \irref{mon} since trivially $\ftrans{P}\ s \limply \ftrans{\lexists{y}{\lexists{\D{y}}{P}}}\ s = \ftrans{Q}\ s$ then have
(2) $\ftrans{\lexists{y}{\dbox{\pevolvein{\D{x}=f,\D{y}=a(x)y+b(x)}{\ivr}}{Q}}}\ s$.
By (SC) and because $y$ and $\D{y}$ are not free variables of $Q,$ then rule \irref{dg} applies to (2) yielding
(3) $\ftrans{\dbox{\pevolvein{\D{x}=f}{\ivr}}{Q}}\ s$.
Then note for all states $t : \allstate$ that
\begin{align*}
       ~&\ftrans{Q}\ t\\
\limply~&\atrans{\prandom{y};\prandom{\D{y}}}\ P\ t\\
\lequiv~&\dtrans{\pdual{\{\prandom{y};\prandom{\D{y}}\}}}\ P\ t
\end{align*}
by \irref{dualI}, \irref{seqI}, and \irref{drandomI} so by \irref{mon} on (3) have
$ \ftrans{\dbox{\pevolvein{\D{x}=f}{\ivr}}{Q}}\ s
= \ftrans{\dbox{\pevolvein{\D{x}=f}{\ivr}}{\dbox{\pdual{\{\prandom{y};\prandom{\D{y}}\}}}{P}}}\ s
= \ftrans{\dbox{\pevolvein{\D{x}=f}{\ivr};\pdual{\{\prandom{y};\prandom{\D{y}}\}}}{P}}\ s$
as desired.

\mycase \irref{loopInline}
This case is actually part of the inductive proof of \rref{thm:inlining}.
Assume (D1) $\G \vdash J$
and (D2) $J, \met_0 = \met \metgr \metz \vdash \ddiamond{\alpha}{(J \land  \met_0 \metgr \met )}$
and (A) $\dbox{\prepeat{\hat{\lsysa}};\beta}{P} = \dbox{\prepeat{\hat{\lsysa}}}{\dbox{\beta}{P}}$
to prove $\dbox{\gamma}{P}$.
Here $\hat{\lsysa}, \beta, $ and $\gamma$ are defined as
\begin{align*}
  \hat{\lsysa} &\equiv \ptest{\met \metgr 0}; \{\pinline[\pdual{\alpha}]{IS}\}\\
  \beta        &\equiv \ptest{\met = \metz}\\
  \gamma       &\equiv \pdual{{\prepeat{\alpha}}}
\end{align*}

From (D1) and (D2), we can apply the inlining IH to get:
 $J, \met \metgr \metz \vdash \dleq{\pinline[\pdual{\alpha}]{IS}}{\pdual{\alpha}}$ and by \irref{btestI} then
(IH) $J, \met \metgr \metz \vdash \dleq{\hat{\lsysa}}{\pdual{\alpha}}$.
By \rref{thm:transfer} also have (T) $J, \met \metgr \metz \vdash \dbox{\hat{\lsysa}}{(J \land \met_0 \metgr \met)}$.

Prove $\ddiamond{\prepeat{\alpha}}{P}$ with metric $\met$ and invariant $J \land \dbox{\prepeat{\alpha}}{\dbox{\beta}{P}}$.
Show each premiss.

(P1) $\G \vdash J \land \dbox{\prepeat{\hat{\lsysa}}}{\dbox{\beta}{P}}$
holds by (D1) and (A).
(P2) $J \land \dbox{\prepeat{\alpha}}{\dbox{\beta}{P}}, {\met_0 = \met \metgr \metz} \vdash \ddiamond{\alpha}{(J \land \dbox{\prepeat{\hat{\lsysa}}}{\dbox{\beta}{P}} \land \met_0 \metgr \met)}$.
By \irref{dualI} the modalities $\ddiamond{\alpha}{}$ and $\dbox{\pdual{\alpha}}{}$ are equivalent, so it suffices to show
$J \land \dbox{\prepeat{\hat{\lsysa}}}{\dbox{\beta}{P}}, {\met_0 = \met \metgr \metz} \vdash \dbox{\pdual{\alpha}}{(J \land \dbox{\prepeat{\hat{\lsysa}}}{\dbox{\beta}{P}} \land \met_0 \metgr \met)}$.
Note (IH) applies since $J$ and $\met \metgr \metz$ are assumed in the context, thus by \irref{boxref} it suffices to show
(BR)
$J \land \dbox{\prepeat{\hat{\lsysa}}}{\dbox{\beta}{P}}, {\met_0 = \met \metgr \metz} \vdash \dbox{\hat{\lsysa}}{(J \land \dbox{\prepeat{\hat{\lsysa}}}{\dbox{\beta}{P}} \land \met_0 \metgr \met)}$.
Recall by \rref{thm:systemhood} that $\hat{\lsysa}$ is a system and so admits axiom \irref{boxand}.
By \irref{boxand} on (BR) and by (D2) it suffices to show
${J \land \dbox{\prepeat{\hat{\lsysa}}}{\dbox{\beta}{P}}}, {\met_0 = \met \metgr \metz} \vdash \dbox{\hat{\lsysa}}{(\dbox{\prepeat{\hat{\lsysa}}}{\dbox{\beta}{P}})}$.
which is the right projection of assumption $\dbox{\prepeat{\hat{\lsysa}}}{\dbox{\beta}{P}}$.

(P3) $J \land \dbox{\prepeat{\alpha}}{\dbox{\beta}{P}}, \met = \metz \vdash P$.
By left conjunct of $\dbox{\prepeat{\alpha}}{\dbox{\beta}{P}}$ have $\dbox{\ptest{\met = \metz}}{P}$ so by modus ponens on $\met = \metz$ have $P$.
\end{proof}

\subsection{Inlining}
\label{sec:proofs-inlining}
We now show the results on inlining.
Systemhood is a lemma to transfer, which is a lemma to refinement.
Recall that  $\G, \alpha, \phi,$ and $M$ are in the \emph{system-test} fragment of \CdGL.
A system-test formula or context does not mention diamond modalities nor box game modalities.
A system-test game has only system-test formulas in tests and domain constraints.
A system-test proof only ever introduces system-test formulas and games in the context.
A proof of a system-test game is not automatically system-test, e.g., if it mentions a game in a cut.
The system-test requirement can be relaxed so that diamond and game formulas are allowed if they are trivially equivalent to some system-test formula.
The only case in which we thus relax the system-test requirement is \irref{dchoiceE}: intuitively, we wish to provide a case analysis rule for system-test disjunctions $\phi \lor \psi,$ but because disjunctions are derived from choices, we instead consider a rule which eliminates some choice $\ddiamond{\ptest{\phi} \cup \ptest{\psi}}{\rho},$ trivially equivalent to  $(\phi \lor \psi) \land \rho$.

\begin{theorem}[Systemhood]
 $\pinline[\alpha]{M}$ is a system, i.e., it does not contain dualities.
\label{thm:app-systemhood}
\end{theorem}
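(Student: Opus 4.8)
The plan is to proceed by structural induction on the proof term $M$, mirroring the recursive definition of the inlining operation. Since $\pinline[\alpha]{M}$ is defined by recursion on $M$ with the continuation list threaded through as the optional argument, the natural induction hypothesis is that for every strict subproof $N$ of $M$ and every continuation $L'$ appearing in a recursive call, $\pinline[L']{N}$ is a system, i.e.\ contains no occurrence of $\pdual{\cdot}$. Throughout the induction I would carry the \emph{system-test} assumption as a side invariant: every context, formula, and game encountered stays system-test, so in particular every proof variable in scope has a system-test type.

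For the introduction forms --- the discrete Demonic and Angelic assignments, tests, choices, sequential compositions, repetitions, and all of the ODE rules --- the right-hand side of each defining equation is assembled from only two kinds of ingredients. The first is explicit game syntax that never mentions $\pdual{\cdot}$: deterministic and nondeterministic assignments $\humod{x}{f}$ and $\prandom{x}$, tests such as $\ptest{\met \metgr \metz}$ and $\ptest{\met = \metz}$, bare ODEs, and the connectives $;$, $\cup$, and $\prepeat{\cdot}$. The second is recursive inlinings $\pinline[L']{N}$ of strict subproofs, which are systems by the induction hypothesis. Since every dual-free connective preserves systemhood, each such right-hand side is a system. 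The point to stress in the Angelic cases (e.g.\ determinizing $\prandom{x}$ via \irref{arefRand}, or eliminating an Angelic test via \irref{arefTest}) is that although the \emph{input} game carries a $\pdual{\cdot}$ marking an Angelic move, inlining \emph{consumes} that duality rather than emitting one: the committed strategy is recorded in ordinary, dual-free system syntax.

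The only place systemhood could conceivably fail is the hypothesis case $\pinline[L]{\pvx} = L$, which returns the continuation $L$ verbatim, and here the system-test invariant is indispensable. Because the context is system-test, the proof variable $\pvx$ has a system-test type; a modal type $\dbox{L}{\phi}$ that is system-test is by definition a box \emph{system} modality, which forces $L$ to be a system and hence free of $\pdual{\cdot}$. The same remark handles the normal case-analysis form $\pinline[L]{\ecase{A}{B}{C}}$: its emitted tests $\ptest{P}$ and $\ptest{Q}$ are built from game-free (indeed first-order) formulas, its two branches are systems by the induction hypothesis, and a choice of sequences of systems is again a system.

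I expect the main obstacle to be bookkeeping rather than mathematical depth: one must check that the system-test invariant is genuinely preserved under each recursive descent, so that the hypothesis case is only ever reached with a system-valued $L$. Once that invariant is established, every remaining case reduces to a direct syntactic inspection confirming that the defining right-hand side introduces no $\pdual{\cdot}$, and the theorem follows immediately by induction on $M$.
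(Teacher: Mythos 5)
Your proposal is correct and follows essentially the same route as the paper's proof: structural induction on $M$, observing that every defining right-hand side of the inlining is dual-free by inspection, that the system-test invariant on the context is preserved under each recursive descent, and that the hypothesis case $\pinline[L]{\pvx} = L$ is exactly where system-testness of $\G$ is needed to force $L$ to be a system. The paper's proof is just a terser statement of these same three observations.
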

\begin{proof}
  Induction on $M$.
  In the hypothesis case, it suffices that $\G$ is system-test.
  In each case, the IH applies because renaming, assignment, and (system) tests preserve the fact that $\G$ is system-test.
  In each case the right-hand side of $\pinline[M]{\alpha}$ does not contain $\pdual{\cdot}$ by inspection.
\end{proof}

\begin{theorem}[Inlining transfer]
  If $\G \vdash M : \dbox{\alpha}{\phi}$ for system-test $\G, M,$ and hybrid game $\alpha$ then $\G \vdash \dbox{\pinline[\alpha]{M}}{\phi}$.
\label{thm:app-transfer}
\end{theorem}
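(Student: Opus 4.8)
The plan is to prove a mildly generalized statement by structural induction on the proof term $M$. Since inlining is defined over the list-context form $\dbox{L}{\phi}$, where $L$ sequences the games that remain to be executed (terminating in $\eskip$), I would prove the strengthened claim: whenever $\G \vdash M : \dbox{L}{\phi}$ for a system-test context, game-list, and proof, then $\G \vdash \dbox{\pinline[L]{M}}{\phi}$. The original statement is the special case $L = \alpha$. The whole point is that the induction \emph{reuses} the subproofs of $M$ almost verbatim: each inlining clause is matched against the corresponding \CdGL introduction/proof rule, and the residual obligation is supplied by the induction hypothesis on the immediate subproofs.

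First I would dispatch the two non-introduction normal forms. For the hypothesis $\pvx$, inlining yields $\pinline[L]{\pvx} = L$, so the goal $\dbox{\pinline[L]{\pvx}}{\phi} = \dbox{L}{\phi}$ is discharged by $\pvx$ itself. For the normal case analysis on a disjunction $P \lor Q$, the inlining is $\{\ptest{P};\pinline[L]{B}\} \cup \{\ptest{Q};\pinline[L]{C}\}$; applying \irref{bchoiceI} and \irref{btestI} reduces the goal to $P \limply \dbox{\pinline[L]{B}}{\phi}$ and $Q \limply \dbox{\pinline[L]{C}}{\phi}$, each following from the induction hypothesis on $B$ and $C$ in the branch-extended context. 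Then I would treat the discrete cases, which split into Demonic and Angelic. The Demonic connectives preserve the game structure: $\prandom{x}$, $\ptest{\psi}$, and Demonic choices and loops all survive inlining, so I reuse the recursive subproofs and reassemble with the matching rule (\irref{brandomI}, \irref{btestI}, \irref{bchoiceI}, \irref{bloopI}); in particular the coinductive loop invariant $J$ from the original $\erep{M}{N}{\pvx:J}{O}$ proof serves directly as the invariant for the inlined Demonic loop. The Angelic connectives instead commit to the specific strategy recorded in $M$: nondeterministic assignments are determinized, Angelic tests are proved and erased, and Angelic choices collapse to the chosen branch, after which the continuation proof supplies exactly the remaining obligation under the induction hypothesis.

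The ODE cases follow the same pattern, appealing to \irref{dc}, \irref{dw}, \irref{bsolve}/\irref{dsolve}, and \irref{dg} to rebuild a proof for the relaxed or determinized ODE system that inlining produces. For \irref{dw} I must additionally verify that the relaxed system $\prandom{x};\humod{\D{x}}{f};\ptest{\ivr}$ still entails $\phi$, which is immediate from the original \irref{dw} premiss; for \irref{dc} the strengthened domain constraint is justified by the cut fact, and for \irref{dg} the fresh ghost dimension introduced by inlining is matched against the same ghost introduction used in $M$.

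The hard part will be the Angelic repetition case $\efor{M}{N}{O}$, whose inlining turns a convergence (well-founded-termination) argument into an \emph{invariant} argument for a nondeterministic loop guarded by metric tests, namely $\prepeat{\{\ptest{\met \metgr \metz}; \{\pinline[\pdual{\alpha}]{N}\}\}};\ptest{\met = \metz};\{\pinline[L]{O}\}$. Here I would invoke \irref{bloopI} with the variant formula $\conv$ as loop invariant and discharge its three premisses: $\conv$ holds initially (from $M$); each guarded body preserves $\conv$, because under the guard $\met \metgr \metz$ the inlined step $\pinline[\pdual{\alpha}]{N}$ maintains $\conv$ by the induction hypothesis applied to the convergence-step proof $N$, whose conclusion $\ddiamond{\alpha}{(\conv \land \met_0 \metgr \met)}$ reestablishes $\conv$; and upon exit the test $\met = \metz$ together with $\conv$ feeds the continuation proof $O$ (again via the induction hypothesis) to establish $\phi$. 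This is precisely the loop-inlining reasoning isolated in the proof of \rref{thm:app-proof-calculus-sound}, and it is delicate because the discrete loop must reconstruct, from a metric that strictly \emph{decreases}, an invariant that is merely \emph{preserved}: the metric's role is relegated to the guard tests rather than appearing in the loop invariant itself, so termination of the original Angelic strategy is recovered only through Demon's obligation to eventually satisfy $\met = \metz$.
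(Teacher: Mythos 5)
Your proposal matches the paper's proof in both structure and detail: induction on the normal proof term $M$ in its list-context form, reassembling each inlined construct with the corresponding \CdGL rule (\irref{btestI}, \irref{bchoiceI}, \irref{brandomI}, \irref{asgnI}, \irref{bloopI}, and the ODE rules), and handling the Angelic repetition case exactly as the paper does---via \irref{bloopI} with the variant $\conv$ as loop invariant, the metric relegated to the guard tests $\met \metgr \metz$ and exit test $\met = \metz$, with the three premisses discharged by $M$, $N$, and $O$ through the induction hypothesis. The only slight inaccuracy is attributing the loop case to the reasoning in the soundness theorem (that \irref{loopInline} argument serves the refinement theorem, not transfer), but the argument you actually give for the case is the paper's own.
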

\begin{proof}
  By induction on the normal natural deduction proof $M$.
  Let (SC) denote the side condition that $\alpha$ and $\G$ are system-test.

\mycase \irref{hyp}
Let $\pvx$ denote the (variable) proof term for \irref{hyp}.
Then $\pvx : \dbox{\alpha}{\phi}$ for system $\alpha$ and $\pinline[\alpha]{\pvx} = \alpha$ and $\dleq{\alpha}{\alpha}$ reflexively.

\mycase \irref{dchoiceE}
As discussed in \rref{sec:proofs-inlining}, rule \irref{dchoiceE} is not system-test in the strictest sense, but the only case we truly need in practice is $\dchoice{\ptest{\phi} \cup \ptest{\psi}}{\rho}$ which is system-test in the relaxed sense that it is everywhere equivalent to $(\phi \lor \psi) \land \rho,$ which is system-test assuming $\phi, \psi, \rho$ are system-test.
Here we give the system-test case of \irref{dchoiceE} and assume that only system tests are used in (the normal forms of) proofs.
$L$ may be an arbitrary game with arbitrary postcondition $\Post$.
Note that general cases are allowed to appear in proofs so long as they are eliminated during normalization.
Even in normal forms, the general-purpose choice game \emph{introduction} rule may also be used, just not its elimination.

Assume
(D1) $\proves{\G}{A}{\ddiamond{\ptest{\phi}\cup\ptest{\psi}}{\rho}},$
 $\proves{\G,\ddiamond{\ptest{\phi}}{\rho}}{B}{\dbox{L}{\Post}},$ and
 $\proves{\G,\ddiamond{\ptest{\psi}}{\rho}}{C}{\dbox{L}{\Post}}$.
where the latter two are definitionally equivalent to
(D2) $\proves{\G,\phi \land \rho}{B}{\dbox{L}{\Post}}$ and
(D3) $\proves{\G,\psi \land \rho}{B}{\dbox{L}{\Post}}$

In this case
\[\pinline[L]{\ecase{A}{B}{C}}  = \{\ptest{\phi \land \rho}; \{\pinline[L]{B}\}\} \cup \{\ptest{\psi \land \rho}; \{\pinline[L]{C}\}\}\]
so it suffices to show
$\dbox{\{\ptest{\phi \land \rho}; \{\pinline[L]{B}\}\} \cup \{\ptest{\psi \land \rho}; \{\pinline[L]{C}\}\}}{\Post}$

By the IHs on (D2) and (D3) have
(B2) $\proves{\G,\phi \land \rho}{}{\dbox{\pinline[L]{B}}{\Post}}$
(B3) $\proves{\G,\psi \land \rho}{}{\dbox{\pinline[L]{C}}{\Post}}$
so by \irref{btestI} and \irref{seqI} have
(2) $\proves{\G}{}{\dbox{\ptest{\phi \land \rho};\{\pinline[L]{B}\}}{\Post}}$
(3) $\proves{\G}{}{\dbox{\ptest{\psi \and \rho};\{\pinline[L]{C}\}}{\Post}}$
and by \irref{bchoiceI} have
$\proves{\G}{}{\dbox{\{\ptest{\phi \land \rho};\{\pinline[L]{B}\}\} \cup \{\ptest{\psi \land \rho};\{\pinline[L]{C}\}\}}{\Post}}$
as desired.

\mycase \irref{drcase} holds vacuously because a proof term of \irref{drcase} is never system-test: it always introduces a diamond formula to the context.
The fact that \irref{drcase} proofs are non-system-test should not pose any practical limitations to completeness:
case analysis in normal forms is predominantly used to inspect the state.
If a proof of $\ddiamond{\prepeat{\alpha}}{\phi}$ depends on the state, it does so through some termination metric $\met,$ which can be expressed with a disjunctive case \irref{dchoiceE}.

\mycase \irref{bchoiceI}
In this case $\pinline[\{\{\alpha\cup\beta\};L\}]{(M,N)}  ~=~ \{\pinline[\{\alpha;L\}]{M}\} \cup \{\pinline[\{\beta;L\}]{N}\}$.

Assume (D) $\proves{\G}{(M,N)}{\dbox{\alpha\cup\beta}{\phi}}$ so
(D1) $\proves{\G}{M}{\dbox{\alpha}{\phi}}$ and
(D2) $\proves{\G}{N}{\dbox{\beta}{\phi}}$.

By the IHs have
(B1) $\proves{\G}{}{\dbox{\pinline[\{\alpha;L\}]{M}}{\phi}}$
(B2) $\proves{\G}{}{\dbox{\pinline[\{\beta;L\}]{N}}{\phi}}$
so by \irref{bchoiceI} have
$\proves{\G}{}{\dbox{\{\pinline[\{\alpha;L\}]{M}\} \cup \{\pinline[\{\beta;L\}]{M}\}}{\phi}}$.

\mycase \irref{dtestI}
In this case $\pinline[\{\pdual{\ptest{\psi}};L\}]{(M,N)} ~=~ \pinline[L]{N}$.
Assume (D) $\proves{\G}{(M,N)}{\ddiamond{\ptest{\psi}}{\dbox{L}{\phi}}}$ so
(D1) $\proves{\G}{M}{\psi}$ and
(D2) $\proves{\G}{N}{\dbox{L}{\phi}}$.
By the IH on (D2) have $\proves{\G}{N}{\dbox{\pinline[L]{N}}{\phi}}$ as desired.

\mycase \irref{btestI}
In this case $\pinline[\{\ptest{\psi};L\}]{(\elam{\pvy}{\psi}{M})}~=~\{\ptest{\psi}; \{\pinline[L]{M}\}\}$.
Assume (D) $\proves{\G}{}{\dbox{\ptest{\psi}}{\dbox{L}{\phi}}}$ so that
(D1) $\proves{\G,\psi}{}{\dbox{L}{\phi}}$.
Since  $\ptest{\psi}$ and $\G$ are system-test,  $\G,\psi$ is too and the IH applies giving
$\G,\psi \vdash \dbox{\pinline[L]{M}}{\phi}$.
Then by \irref{btestI} and \irref{seqI} have
$\G \vdash \dbox{\ptest{\psi}}{\dbox{\pinline[L]{M}}{\phi}}$
and $\G \vdash \dbox{\ptest{\psi};\{\pinline[L]{M}\}}{\phi}$ by \irref{seqI} as desired.

\mycase \irref{dchoiceIL}
In this case $\pinline[\{\pdual{\{\alpha\cup\beta\}};L\}]{\einjL{M}}    ~=~ \pinline[\{\pdual{\alpha};L\}]{M}$.
Assume (D) $\proves{\G}{}{\ddiamond{\alpha \cup \beta}{\dbox{L}{\phi}}},$
with premiss
(D1) $\proves{\G}{}{\ddiamond{\alpha}{\dbox{L}{\phi}}}
= \proves{\G}{}{\dbox{\pdual{\alpha};L}{\phi}}$.
By the IH on (D1) have
$\proves{\G}{}{\dbox{\pinline[\{\pdual{\alpha};L\}]{M}}{{\phi}}}$ as desired.

\mycase \irref{dstop}
In this case $\pinline[\{\pdual{{\prepeat{\alpha}}};L\}]{\einjL{M}} ~=~ \pinline[L]{M}$.
Assume (D) $\proves{\G}{}{\ddiamond{\prepeat{\alpha}}{\dbox{L}{\phi}}}$
with premiss
(D1) $\proves{\G}{}{\dbox{L}{\phi}}$.
By the IH on (D1) have
$\proves{\G}{}{\dbox{\pinline[L]{M}}{\phi}}$ as desired.

\mycase \irref{dchoiceIR}
In this case $\pinline[\{\pdual{\{\alpha\cup\beta\}};L\}]{\einjR{M}}    ~=~ \pinline[\{\pdual{\beta};L\}]{M}$
Assume (D) $\proves{\G}{}{\ddiamond{\alpha \cup \beta}{\dbox{L}{\phi}}},$
with premiss
(D1) $\proves{\G}{}{\ddiamond{\beta}{\dbox{L}{\phi}}}
= \proves{\G}{}{\dbox{\pdual{\beta};L}{\phi}}$.
By the IH on (D1) have
$\proves{\G}{}{\dbox{\pinline[\{\pdual{\beta};L\}]{M}}{{\phi}}}$ as desired.

\mycase \irref{dgo}
In this case $\pinline[\{\pdual{{\prepeat{\alpha}}};L\}]{\einjR{M}}    ~=~ \pinline[\{\pdual{{\alpha;\prepeat{\alpha}}};L\}]{M}$
Assume (D) $\proves{\G}{}{\ddiamond{\prepeat{\alpha}}{\dbox{L}{\phi}}},$
with premiss
(D1) $\proves{\G}{}{\ddiamond{\alpha}{\ddiamond{\prepeat{\alpha}}{\dbox{L}{\phi}}}}
= \proves{\G}{}{\dbox{\pdual{\{\alpha;\prepeat{\alpha}\}};L}{\phi}}$.
By the IH on (D1) have
$\proves{\G}{}{\dbox{\pinline[\{\pdual{{\alpha;\prepeat{\alpha}}};L\}]{M}}{{\phi}}}$ as desired.

\mycase \irref{brandomI}
In this case
$\pinline[\{\prandom{x};L\}]{(\elam{x}{\reals}{M})} ~=~ \prandom{x}; \{\pinline[L]{M}\}$.

Assume (D) $\proves{\G}{}{\dbox{\prandom{x}}{\dbox{L}{\phi}}}$
with premiss (D1) $\proves{\eren{\G}{x}{y}}{}{\dbox{L}{\phi}}$
Since $\eren{\G}{x}{y}$ is system-test, the IH applies and
$\proves{\eren{\G}{x}{y}}{}{\dbox{\pinline[L]{M}}{\phi}}$
so by \irref{brandomI} and \irref{seqI} have
$\proves{\G}{}{\dbox{\prandom{x};\{\pinline[L]{M}\}}{\phi}}$ as desired.

\mycase \irref{drandomI}
In this case
$\pinline[\{\pdual{\prandom{x}};L\}]{\etcons{f}{M}}  ~=~ \humod{x}{f}; \{\pinline[L]{M}\}$.

Assume (D) $\proves{\G}{}{\ddiamond{\prandom{x}}{\dbox{L}{\phi}}}$ with premiss
(D1) $\proves{\eren{\G}{x}{y},x=\eren{f}{x}{y}}{}{\dbox{L}{\phi}}$.
Note $\eren{\G}{x}{y},x=\eren{f}{x}{y}$ is system-test.
By the IH on (D1) have
(B1) $\proves{\eren{\G}{x}{y},x=\eren{f}{x}{y}}{}{\dbox{\pinline[L]{M}}{\phi}}$ so by \irref{asgnI} and \irref{seqI} have
$\proves{\G}{}{\dbox{\humod{x}{f};\{\pinline[L]{M}\}}{\phi}}$ as desired.

\mycase \irref{seqI}
In this case
$\pinline[\{\{\alpha;\beta\};L\}]{\eSeq{M}} ~=~ \pinline[\{\alpha;\{\beta;L\}\}]{M}$.

Assume (D) $\proves{\G}{}{\dbox{\alpha;\beta}{\dbox{L}{\phi}}}$ with premiss
(D1) $\proves{\G}{}{\dbox{\{\alpha;\beta\};L}{\phi}}
= \proves{\G}{}{\dbox{\alpha;\{\beta;L\}}{\phi}}$ so by IH
$\proves{\G}{}{\dbox{\pinline[\{\alpha;\{\beta;L\}\}]{M}}{\phi}}$ as desired.

\mycase \irref{asgnI}
In this case
$\pinline[\{\humod{x}{f};L\}]{\eAsgneq{y}{x}{\pvx}{M}} ~=~ \humod{x}{f}; \{\pinline[L]{M}\}$.

Assume (D) $\proves{\G}{}{\dbox{\humod{x}{f}}{\dbox{L}{\phi}}}$ with premiss
(D1) $\proves{\eren{\G}{x}{y},x=\eren{f}{x}{y}}{}{\dbox{L}{\phi}}$.
Note context $\eren{\G}{x}{y},x=\eren{f}{x}{y}$ is system-test.
By the IH on (D1) have
(B1) $\proves{\eren{\G}{x}{y},x=\eren{f}{x}{y}}{}{\dbox{\pinline[L]{M}}{\phi}}$ so by \irref{asgnI} and \irref{seqI} have
$\proves{\G}{}{\dbox{\humod{x}{f};\{\pinline[L]{M}\}}{\phi}}$ as desired.



\mycase \irref{bloopI}
In this case $\pinline[\{\prepeat{\alpha};L\}]{\erep{M}{N}{\pvx:J}{O}} ~=~ \prepeat{\{\pinline[\alpha]{N}\}}; \{\pinline[L]{O}\}$.
Assume (D) $\proves{\G}{}{\dbox{\prepeat{\alpha};L}{\phi}}$ with premisses
(D1) $\proves{\G}{}{J}$
(D2) $\proves{\pvx:J}{}{\dbox{\alpha}{J}}$
(D3) $\proves{\pvx:J}{}{\dbox{L}{\phi}}$.
The system-test condition says that $J$ does not contain game or diamond modalities.

By IH on (D2) have
(B2) $\proves{\pvx:J}{}{\dbox{\pinline[\alpha]{N}}{J}}$
and by IH on (D3) have
(B3) $\proves{\pvx:J}{}{\dbox{\pinline[L]{O}}{\phi}}$
so by \irref{bloopI} on (D1) (B2) (B3) have
$\proves{\G}{}{\dbox{\prepeat{\{\pinline[\alpha]{N}\}}}{\dbox{\pinline[L]{O}}{\phi}}}$
so by \irref{seqI} have
$\proves{\G}{}{\dbox{\prepeat{\{\pinline[\alpha]{N}\}};\{\pinline[L]{O}}{\phi}\}}$.

\mycase \irref{dloopI}
In this case $\pinline[\{\pdual{{\prepeat{\alpha}}};L\}]{\efor{M}{N}{O}} = \prepeat{\{\ptest{\met \geq \metz}; \{\pinline[\pdual{\alpha}]{N}\}\}};\ptest{\met = \metz};\{\pinline[L]{O}\}$.
Assume (D) $\proves{\G}{}{\ddiamond{\prepeat{\alpha}}{\dbox{L}{\phi}}}$ with premisses
(D1) $\proves{\G}{M}{\conv}$
(D2) $\proves{\pvx:\conv, \pvy:\met_0 = \met \metgr \metz}{N}{\ddiamond{\alpha}}{(\conv \land \met_0 \metgr \met)}$
(D3) $\proves{\pvx:\conv, \pvy:\met = \metz}{O}{\dbox{L}{\phi}}$.
The system-test condition says that $\conv$ does not contain game or diamond modalities.

By the IH on (D2) have
(B2) $\proves{\pvx:\conv, \pvy:\met_0 = \met \metgr \metz}{}{\dbox{\pinline[\alpha]{N}}{(\conv \land \met_0 \metgr \met)}}$.
By the IH on (D3) have (B3) $\proves{\pvx:\conv, \pvy:\met = \metz}{}{\dbox{\pinline[L]{O}}{\phi}}$

In this case it suffices to show $\proves{\G}{}{\dbox{\prepeat{\{\ptest{\met \metgr \metz}; \{\pinline[\alpha]{}\}\}}}{\dbox{\ptest{\met = 0}}{\dbox{\pinline[L]{O}}{\phi}}}},$ which we show by \irref{bloopI} with loop invariant $\conv$ by showing premisses
(P1) $\proves{\G}{}{\conv}$
(P2) $\pvx:\conv \vdash \dbox{\ptest{\met \metgr \metz}; \{\pinline[\alpha]{N}\}}{\conv}$
(P3) $\pvx:\conv \vdash \dbox{\ptest{\met = \metz}}{\dbox{\pinline[L]{O}}{\phi}}$.

(P1) holds immediately by (D1).
To show (P2) it suffices by \irref{seqI} and \irref{btestI} to show
$\pvx:\conv, \pvy:\met \metgr \metz \vdash \dbox{\pinline[\alpha]{N}}{\conv},$ then since $\met_0$ is fresh it suffices by discrete ghost \irref{ghost} to show $\conv, \met_0 = \met \metgr \metz \vdash \dbox{\pinline[\alpha]{N}}{\conv},$ which follows from (B2) by \irref{mon} and projection.
To show (P3) it suffices by \irref{btestI} to show
$\pvx:\conv, \pvy:\met = \metz \vdash \dbox{\pinline[L]{O}}{\phi},$ which is (B3).

This completes case \irref{dloopI}.

\mycase \irref{di}
In a normal-form proof, \irref{di} only occurs on the left premiss of \irref{dc}, and the \irref{dc} proof does not apply the IH on theleft premiss, thus no case for \irref{di} is needed.

\mycase \irref{dc}
In this case have
$\pinline[\{\pevolvein{\D{x}=f}{\ivr};L\}]{\edc{M:R}{N})} = \pinline[\{\pevolvein{\D{x}=f}{\ivr \land \rho};L\}]{N}$.

Assume (D) $\proves{\G}{\edc{M:R}{N}}{\dbox{\pevolvein{\D{x}=f}{\ivr}}{\dbox{L}{\phi}}}$ with premisses
   (D1)  $\proves{\G}{M}{\dbox{\pevolvein{\D{x}=f}{\ivr}}{R}}$
and (D2) $(\proves{\G}{N}{\dbox{\pevolvein{\D{x}=f}{\ivr \land R}}{\dbox{L}{\phi}}})
 \lequiv  (\proves{\G}{}{\dbox{\pevolvein{\D{x}=f}{\ivr \land R};L}{\phi}})$.
By IH on (D2) have
 (B2) $\proves{\G}{}{\dbox{\pinline[\{\pevolvein{\D{x}=f}{\ivr \land R};L\}]{N}}{\phi}}$ as desired.

\mycase \irref{dw}
In this case have
$\pinline[\{\pevolvein{\D{x}=f}{\ivr};L\}]{\edw{M}} = \prandom{x}; \humod{\D{x}}{f}; \ptest{\ivr}; \{\pinline[L]{M}\}$.
Assume   (D) $\proves{\G}{}{\dbox{\pevolvein{\D{x}=f}{\ivr}}{\phi}}$
with premiss (D1) $\proves{\G}{}{\lforall{x}{\lforall{\D{x}}{(\ivr \limply \phi)}}}$.
By specializing $\D{x} = f$ (where $\D{x} \notin \freevars{f}$ by syntactic condition) under quantifier $\lforall{x}{}$ have
(1) $\proves{\G}{}{\lforall{x}{\tsub{(\ivr \limply \phi)}{\D{x}}{f}}}$.
By \rref{cor:app-eq-rewrite} on (1) have
(2) $\proves{\G,\D{x}=f}{}{\lforall{x}{(\ivr \limply \phi)}}$.

Want to show $\proves{\G}{}{\prandom{x}; \humod{\D{x}}{f}; \ptest{\ivr}; \{\pinline[L]{M}\}},$ and by \irref{seqI}, \irref{brandomI}, \irref{asgnI}, and \irref{btestI} suffices to show
$\proves{\eren{\G}{x}{y}, \D{x}=f, \ivr}{}{\pinline[L]{M}}$
which is immediate from (2) by \irref{brandomI}.

\mycase \irref{dg}
In this case
\begin{align*}
  & \pinline[\{\pevolvein{\D{x}=f}{\ivr};\pdual{\{\prandom{y};\prandom{{\D{y}}}\}};L\}]{\edg{y_0}{a}{b}{M}}\\
= & \humod{y}{f_0}; \{\pinline[\{\pevolvein{\D{x}=f,\D{y}=a(x)y + b(x)}{\ivr};L\}]{M}\}
\end{align*}

Assume (D) $\proves{\G}{}{\dbox{\pevolvein{\D{x}=f}{\ivr};L}{\phi}}$
with premiss
(D1) $\proves{\G}{}{\lexists{y}{\dbox{\pevolvein{\D{x}=f,\D{y}=a(x)y + b(x)}{\ivr};L}{\phi}}}$.
By the existential property, there exists $f$ (called $y_0$ in the DG proof term) such that
$\proves{\eren{\G}{y}{z},y=\eren{f}{y}{z}}{}{\dbox{\pevolvein{\D{x}=f,\D{y}=a(x)y + b(x)}{\ivr}}{\phi}}$.
By the IH, have
(B1)
$\proves{\eren{\G}{y}{z},y=\eren{f}{y}{z}}{}{\dbox{\pinline[\{\pevolvein{\D{x}=f,\D{y}=a(x)y + b(x)}{\ivr};L\}]{M}}{\phi}}$
and by \irref{asgnI} and \irref{seqI} have
\[\proves{\G}{}{\dbox{\humod{y}{f};\{\pinline[\{\pevolvein{\D{x}=f,\D{y}=a(x)y + b(x)}{\ivr};L\}]{M}\}}{\phi}}\] as desired.

\mycase \irref{dsolve}
In this case
\begin{align*}
  & \pinline[\{\humod{t}{0};\pdual{\pevolvein{\D{t}=1,\D{x}=f}{\ivr}};L\}]{\eas{d}{sol}{dom}{M}} \\
= & \{\humod{t}{d}; \humod{x}{sln}; \humod{\D{x}}{f}\}; \{\pinline[L]{M}\}
\end{align*}

Assume (D) $\proves{\G}{}{\ddiamond{\pevolvein{\D{x}=f}{\ivr}}{\dbox{L}{\phi}}}$
with premiss
 $\proves{\G}{}{\lexists[{\reals_{\geq0}}]{t}{((\lforall[{[0,t]}]{s}{\ddiamond{\humod{t}{s};\humod{x}{sln}}{\psi}})\land
\ddiamond{\humod{x}{sln};\humod{\D{x}}{f}}{\dbox{L}{\phi}})}}$.
By the Existential Property~\cite[Lem.\ 15]{ijcar20a}, the existential is witnessed by some value $d:\reals$ of  $t$.
The right conjunct of the witness is
(D1) $\proves{\G}{}{\tsub{(\ddiamond{\humod{x}{sln};\humod{\D{x}}{f}}{\dbox{L}{\phi}})}{t}{d}}$
which by \irref{seqE} and \irref{asgnE} twice (and because $\D{x} \notin \freevars{f,\eren{sln}{x}{y},\eren{\G}{x}{y}}$) gives
(1A) $\proves{\eren{\G}{x}{y},x=\eren{sln}{x}{y}, \D{x} = f}{}{\tsub{\dbox{L}{\phi}}{t}{d}}$
and by \rref{cor:app-eq-rewrite} gives
(1) $\proves{\eren{\G}{x}{y},t=d,x=\eren{sln}{x}{y}, \D{x} = f}{}{\dbox{L}{\phi}}$
so that by IH and freshness on $t$ have
(B1) $\proves{\eren{\G}{x}{y},t=d,x=\eren{sln}{x}{y}, \D{x} = f}{}{\dbox{\pinline[L]{M}}{\phi}}$
so by \irref{asgnI} and \irref{seqI} have
$\proves{\G}{}{\dbox{\humod{t}{d}; \humod{x}{sln}; \humod{\D{x}}{f}; \{\pinline[L]{M}\}}{\phi}}
\lequiv \proves{\G}{}{\dbox{\{\humod{t}{d}; \humod{x}{sln}; \humod{\D{x}}{f}\}; \{\pinline[L]{M}\}}{\phi}}$
as desired.

\mycase \irref{bsolve}
In this case 
\begin{align*}
   & \pinline[\{\humod{t}{0};\pevolvein{\D{t}=1,\D{x}=f}{\ivr};L\}]{\eds{sol}{\elamu{d\,dom}{M}}} \\
=  & \humod{t}{0};\pevolvein{\D{t}=1,\D{x}=f}{\ivr};\{\pinline[L]{M}\}
\end{align*}
Assume (D) $\proves{\G}{}{\humod{t}{0};\dbox{\pevolvein{\D{t}=1,\D{x}=f}{\ivr}}{\dbox{L}{\phi}}}$ with $t$ fresh in $\G$.
with premiss
 $\proves{\G,t=0}{}{\lforall[{\reals_{\geq0}}]{r}{((\lforall[{[0,r]}]{s}{\dbox{\humod{r}{s};\humod{(t,x)}{sln}}{\psi}})\limply\dbox{\humod{(t,x)}{sln};\humod{\D{x}}{f}}{\dbox{L}{\phi}})}}$,
which gives by inversion (and assuming $r$ fresh)
(D1)
$ \eren{\G}{x}{y}
, r \geq 0
, (\lforall[{[0,r]}]{s}{\dbox{\humod{r}{s};\humod{(t,x)}{sln}}{\psi}})
, x=\eren{sln}{x}{y}
\vdash
\dbox{L}{\phi}
$.
Note that the unique solution $t(r)$ for $\D{t} = 1$ is trivially $t(r) = t(0) + r$ by integration, which simplifies to $t(r) = r$ because $t=0$ initially.
Thus $sln$ always assigns $t=r,$ so the variables $t$ and $r$ can be identified with one another, writing $xsln$ for the $x$ component of $sln$:
(D2)
$ \eren{\G}{x}{y}
, t \geq 0
, (\lforall[{[0,t]}]{s}{\dbox{\humod{t}{s};\humod{x}{xsln}}{\psi}})
, x=\eren{xsln}{x}{y}
\vdash
\dbox{L}{\phi}
$.
By the IH on (D2) have
(B1)
$ \eren{\G}{x}{y}
, t \geq 0
, (\lforall[{[0,t]}]{s}{\dbox{\humod{t}{s};\humod{x}{xsln}}{\psi}})
, x=\eren{xsln}{x}{y}
\vdash
\dbox{\pinline[L]{M}}{\phi}
$ and by \irref{btestI} and \irref{asgnI} and \irref{seqI}
gives
$\G, t=0 \vdash \dbox{\pevolvein{\D{t}=1,\D{x}=f}{\ivr}}{\dbox{\pinline[L]{M}}{\phi}}$
which by \irref{asgnI}, side condition, and \irref{seqI} gives
$\G \vdash \dbox{\humod{t}{0};\pevolvein{\D{t}=1,\D{x}=f}{\ivr}}{\dbox{\pinline[L]{M}}{\phi}}$
as desired.
\end{proof}

\begin{theorem}[Inlining refinement]
If $\G \vdash M : \dbox{\alpha}{\phi}$ for system-test $\G, M,$ and hybrid game $\alpha$ then $\G \vdash \dleq{\pinline[\alpha]{M}}{\alpha}$.
\label{thm:app-inlining}
\end{theorem}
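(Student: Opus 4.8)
The plan is to induct on the normal-form natural deduction proof term $M$, in lockstep with the induction already carried out for \rref{thm:app-transfer}, and to generalize the statement to the list form in which inlining is actually defined: for a proof $M$ of $\dbox{\alpha;L}{\phi}$ I would establish $\G \vdash \dleq{\pinline[\{\alpha;L\}]{M}}{\alpha;L}$, recovering the theorem as stated by taking $L = \eskip$ and using \irref{seqidr}. The two earlier results serve as lemmas: \rref{thm:app-systemhood} guarantees that every inlined component is a system, so that the normal-logic axioms \irref{K} and \irref{boxand} are available, and \rref{thm:app-transfer} supplies the box-modality premisses demanded by the contextual sequencing rule. The governing principle is that each clause of the inlining definition was engineered to mirror exactly one refinement rule, so that most cases reduce to citing that rule together with \irref{refTrans} to stitch the refined head onto the inductively refined continuation.

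First I would dispatch the two non-introduction cases. The hypothesis clause $\pinline[L]{\pvx} = L$ is immediate from \irref{refRefl}, and the normal case-analysis clause yields a guarded Demonic choice handled by \irref{seqdistr} together with monotonicity of choice (\irref{drefChoiceL1}, \irref{drefChoiceL2}, \irref{drefChoiceR}). The Angelic introduction forms, which appear under a duality and commit Angel to a single move, are each matched by their refinement rule: nondeterministic assignment by \irref{arefRand}, the eliminated Angelic test by \irref{arefTest} (using that the test was proved), the taken branch of an Angelic choice by \irref{arefChoiceR1} or \irref{arefChoiceR2} lifted through the tail by \irref{refSeqG} with a reflexive continuation, and reassociation by \irref{seqassoc}. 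The Demonic introduction forms leave Demon's powers intact and are threaded through $L$ via \irref{refSeq} (its head being a system) in the assignment and test cases and via \irref{seqdistr} in the choice case. The ODE clauses cite the dedicated differential rules directly: \irref{refDC} for cuts, \irref{refDW} for weakening, \irref{refDG} for ghosts, and \irref{refSolve} for Angelic solutions, with the Demonic-solve case keeping the ODE verbatim and refining only $L$.

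The main obstacle will be the two repetition cases, the only place where refinement must reason about an unbounded, strategy-driven number of rounds. For the Demonic repetition the inlining retains a loop, $\prepeat{\{\pinline[\alpha]{N}\}};\pinline[L]{O}$, and I would apply \irref{refUnloop}: its premiss $\dbox{\prepeat{\{\pinline[\alpha]{N}\}}}{(\dleq{\pinline[\alpha]{N}}{\alpha})}$ holds because the invariant $J$ is preserved by the inlined body (by \rref{thm:app-transfer}) and the per-round refinement holds under $J$ by the inductive hypothesis, after which \irref{refTrans} with the refinement of the tail closes the case; note \irref{refUnloop} is applicable precisely because \rref{thm:app-systemhood} makes the body a system. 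The Angelic convergence repetition is the genuinely hard case, and it is exactly the self-contained fixed-point argument already performed within \rref{thm:app-proof-calculus-sound}: the inlined loop $\prepeat{\{\ptest{\met \metgr \metz};\pinline[\pdual{\alpha}]{N}\}};\ptest{\met = \metz};\pinline[L]{O}$ must be shown to refine $\pdual{\prepeat{\alpha}};L$, which unfolds to deriving $\dbox{\pdual{\prepeat{\alpha}};L}{P}$ from the inlined box for an arbitrary $P$. This is discharged by \irref{bloopI} with the compound invariant $J \land \dbox{\prepeat{\hat{\lsysa}}}{\dbox{\beta}{P}}$, the strict decrease of the variant $\met$ driving the argument; here both \rref{thm:app-transfer} (to show $\met$ strictly decreases along the inlined body) and \rref{thm:app-systemhood} (so that \irref{boxand} may split the invariant across the system modality) are indispensable.

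Throughout, I would maintain the system-test side condition as an invariant of the induction, verifying in each case, as in \rref{thm:app-transfer}, that renaming, assignment, and the freshly introduced system tests preserve system-testhood so that \irref{K} and \irref{boxand} stay available; the \irref{drcase} clause is vacuous since its proof term is never system-test. The only bookkeeping beyond the loops is the contextual rule \irref{refSeq}, whose second premiss is a box-system modality $\dbox{\lsysa_1}{(\dleq{\beta_1}{\beta_2})}$ rather than a bare refinement: I would discharge it by combining the inductive hypothesis on the continuation with \rref{thm:app-transfer}, exploiting that the inlined head is a system (\rref{thm:app-systemhood}) so its box modality is normal. With these pieces in place, every remaining case is a direct citation of the matching refinement rule, which is the sense in which the provability of this theorem validates the strength of the refinement calculus.
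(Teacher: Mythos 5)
Your proposal reproduces the paper's proof essentially case for case: the same list-form induction over normal proof terms with \rref{thm:app-systemhood} and \rref{thm:app-transfer} as lemmas, the same rule-by-rule citations (\irref{refSeq}/\irref{refSeqG}, \irref{seqassoc}, \irref{seqdistr}, the test/assignment/choice refinement rules, and the ODE rules), \irref{refUnloop} with invariant $J$ for Demonic repetition, and for Angelic repetition the same fixed-point argument with compound invariant $J \land \dbox{\prepeat{\hat{\lsysa}}}{\dbox{\beta}{P}}$ that the paper isolates as the rule \irref{loopInline} and discharges inside \rref{thm:app-proof-calculus-sound}. The only slip is nomenclature: that argument derives the Angelic goal $\ddiamond{\prepeat{\alpha}}{P}$ via the convergence rule \irref{dloopI} (which is what uses the strictly decreasing metric $\met$ you describe), not via \irref{bloopI}, which concludes only box formulas and has no variant.
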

\begin{proof}
By induction on the normal natural deduction proof $M$.
In the case of a sequential composition $\alpha;\beta$ or duality $\pdual{\alpha}$ we must further case-analyze on $\alpha$ and its proof.
However, to avoid duplicating cases, we note that every atomic game $\alpha$ is equivalent to $\alpha;\eskip$.
Thus we give cases \emph{only} for sequential compositions, from which the ``atomic'' cases are immediate corrollaries.
We write $L$ for the ``list'' of programs which might follow in each case.

\mycase \irref{drcase}
In this case have $\G \vdash A : (\phi \lor \psi)$ and
\begin{align*}
 \pinline[\alpha]{\ecase{A}{B}{C}} = \{\ptest{\phi}; \{\pinline[L]{B}\}\} \cup \{\ptest{\psi}; \{\pinline[L]{C}\}\}
\end{align*}
and want to show
\[\dleq{\{\ptest{\phi}; \{\pinline[L]{B}\}\} \cup \{\ptest{\psi}; \{\pinline[L]{C}\}\}}{L}\]
By the IHs on $B$ and $C$ have: (1) $\G,\phi \vdash \dleq{\pinline[L]{B}}{L}$ and (2) $\G,\psi \vdash \dleq{\pinline[L]{C}}{L}$.
Note \irref{refSeq} applies for systems $\ptest{\phi}$ and $\ptest{\psi}$.
By \irref{refSeq} and \irref{drefChoiceR} it suffices to show
\[\dleq{\{\ptest{\phi}; L\} \cup \{\ptest{\psi}; L\}}{L}\]
which we show:

\begin{align*}
        & \{\ptest{\phi}; L\} \cup \{\ptest{\psi}; L\} && \\
{\kwdleq} & \{\ptest{\phi} \cup \ptest{\psi}\} ; L && \text{ by } \irref{seqdistr}\\
{\kwdleq} & \ptest{(\phi \lor \psi)} ; L && \text{ def.\ of } \lor \\
{\kwdleq} & \eskip ; L   && \text{ by (A) and }\irref{drefTest} \\
{\kwdleq} & L  && \text{ by }  \irref{seqidl}
\end{align*}
This completes the case for \irref{drcase}.

The next two cases \irref{seqI} and \irref{asgnI} hold symmetrically for both Angel and Demon.

\mycase \irref{seqI}
\[\pinline[\{\{\alpha;\beta\};L\}]{M} = \pinline[\{\alpha;\{\beta;L\}\}]{M}\]
The IH is applicable with proof $M$ because $\dtrans{\{\alpha;\beta\};L} = \dtrans{\alpha;\{\beta;L\}}$.
By the IH $\dleq{\pinline[\{\alpha;\{\beta;L\}\}]{M}}{\alpha;\{\beta;L\}}$.
By \irref{seqassoc} $\{\alpha;\beta\};L \diso \alpha;\{\beta;L\}$ so by \irref{refTrans} have
$\dleq{\pinline[\{\alpha;\{\beta;L\}\}]{M}}{\{\alpha;\beta\};L}$ as desired.

\mycase \irref{asgnI}
\[ \pinline[\{\humod{x}{f};L\}]{\eAsgneq{y}{x}{\pvx}{M}}
=  \{\humod{x}{f}; \{\pinline[L]{M}\}\}
\]

The proof follows by \irref{refSeq} on system $\humod{x}{f}$, a rule with two premisses.
The first premiss of \irref{refSeq} $\G \vdash \dleq{\humod{x}{f}}{\humod{x}{f}}$ is immediate by \irref{refRefl}

We show the second premiss y $\G \vdash \dbox{\humod{x}{f}}{(\dleq{\pinline[L]{M}}{L})}$.
The IH applies because by inversion $(\tsub{\G}{x}{x_0},x=\tsub{f}{x}{x_0}) \vdash M : \dbox{L}{\phi},$ yielding
$(\tsub{\G}{x}{x_0},x=\tsub{f}{x}{x_0}) \vdash \dleq{\pinline[L]{M}}{L},$ then by \irref{asgnI} have
$\G \vdash \dbox{\humod{x}{f}}{\dleq{\pinline[L]{M}}{L}}$, which suffices.

We now give the Angel cases.

\mycase \irref{drandomI}
\begin{align*}
  \pinline[\{\pdual{\prandom{x}};L\}]{\etcons{f}{M}}
= \{\humod{x}{f}; \{\pinline[L]{M}\}\}
\end{align*}
By \irref{refSeq} on system $\humod{x}{f},$ suffices to show $\G \vdash \dleq{\humod{x}{f}}{\pdual{\prandom{x}}}$
(which holds immediately by \irref{arefRand}, \irref{refTrans}, and \irref{dualAssign})
and $\G \vdash \dbox{\humod{x}{f}}{(\dleq{\pinline[L]{M}}{L})}$ which follows by \irref{drandomI} from the IH on $L$, i.e., from
$\tsub{\G}{x}{x0},x=\tsub{f}{x}{x0} \vdash \dleq{\pinline[L]{M}}{M}$.

\mycase \irref{dtestI}
\begin{align*}
&\pinline[\{\pdual{\ptest{\psi}};L\}]{(M,N)} = \pinline[L]{N}
\end{align*}
By \irref{refTrans} and \irref{seqidl} it suffices to show
$\G \vdash \dleq{\eskip;\{\pinline[L]{N}\}}{\pdual{\ptest{\psi}};L}$.
Then by \irref{refSeq} on system $\eskip$ it suffices to show
(L) $\G \vdash \dleq{\eskip}{\pdual{\ptest{\psi}}}$
and (R) $\G \vdash \dbox{\pdual{\ptest{\psi}}}{(\dleq{\pinline[L]{N}}{L})}$.

To show (L) it suffices by \irref{dualSkip} to show
$\G \vdash \dleq{\pdual{\eskip}}{\pdual{\ptest{\psi}}},$ which holds by \irref{arefTest} because $\G \vdash (\btt \limply \psi)$ follows propositionally from $\G \vdash M : \psi$.
To show (R), it suffices to apply the IH on $L$ giving $\G \vdash \dleq{\pinline[L]{N}}{N},$ then by \irref{dtestI} and $M:\psi$ have $\G \vdash \dbox{\pdual{\ptest{\psi}}}{(\dleq{\pinline[L]{N}}{N})}$.

\mycase \irref{dchoiceIL}
\[\pinline[\{\pdual{\{\alpha \cup \beta\}};L\}]{\einjL{M}} = \pinline[\{\pdual{\alpha};L\}]{M}\]
By the IH, $\G \vdash \dleq{\pinline[\{\pdual{\alpha};L\}]{M}}{\pdual{\alpha};L},$ so by \irref{refTrans} it suffices to show
$\G \vdash \dleq{\pdual{\alpha};L}{\pdual{\{\alpha\cup\beta\}};L}$.
This follows by \irref{refSeqG}  because (L) $\G \vdash \dleq{\pdual{\alpha}}{\pdual{\{\alpha\cup\beta\}}}$ and
(R) $\cdot \vdash \dleq{L}{L}$.
Premiss (L) holds by \irref{arefChoiceR1}.
Premiss (R) holds by \irref{refRefl} $\dleq{L}{L}$.

\mycase \irref{dchoiceIR}
\[\pinline[\{\pdual{\{\alpha \cup \beta\}};L\}]{\einjR{M}} = \pinline[\{\pdual{\beta};L\}]{M}\]
By the IH, $\G \vdash \dleq{\pinline[\{\pdual{\beta};L\}]{M}}{\pdual{\beta};L},$ so by \irref{refTrans} it suffices to show
$\G \vdash \dleq{\pdual{\beta};L}{\pdual{\{\alpha\cup\beta\}};L}$.
This follows by \irref{refSeqG} because (L) $\G \vdash \dleq{\pdual{\beta}}{\pdual{\{\alpha\cup\beta\}}}$ and
(R) $\cdot \vdash \dleq{L}{L}$.
Premiss (L) holds by \irref{arefChoiceR2}.
Premiss (R) holds by \irref{refRefl} $\dleq{L}{L}$.

\mycase \irref{dstop}
\[\pinline[\{\pdual{{\{\prepeat{\alpha}\}}};L\}]{\einjL{M}} = \pinline[L]{M}\]
Symmetric with case \irref{dchoiceIL}.
By the IH, $\G \vdash \dleq{\pinline[L]{M}}{L}$  so by \irref{refTrans} it suffices to show
$\G \vdash \dleq{L}{\pdual{{\{\prepeat{\alpha}\}}};L}$.
This follows by \irref{refSeq} on system $\eskip$ because (L) $\G \vdash \dleq{\eskip}{\pdual{{\prepeat{\alpha}}}}$ and
(R) $\G \vdash \dbox{\pdual{{\prepeat{\alpha}}}}{(\dleq{L}{L})}$.
Premiss (L) holds by \irref{unrollLref} and \irref{arefChoiceR1}.
Premiss (R) holds by monotonicity \irref{mon} on $M : \ddiamond{\pdual{{\prepeat{\alpha}}};L}{\phi}$ which by \irref{seqI} equivalently proves $\ddiamond{\pdual{{\prepeat{\alpha}}}}{\ddiamond{L}{\phi}}$ since  have $\pvx:\ddiamond{L}{\phi} \vdash \dleq{L}{L}$ by \irref{refRefl} $\dleq{L}{L}$ and by weakening.

\mycase \irref{dgo}
\[\pinline[\{\pdual{{\prepeat{\alpha}}};L\}]{\einjR{M}} = \pinline[\{\pdual{\alpha};\pdual{{\{\prepeat{\alpha}\}}};L\}]{M}\]
Symmetric with case \irref{dchoiceIR}.
By the IH, $\G \vdash \dleq{\pinline[\{\pdual{\alpha};\pdual{{\prepeat{\alpha}}};L\}]{M}}{\pdual{\alpha};\pdual{{\prepeat{\alpha}}};L}$  so by \irref{refTrans} it suffices to show
$\G \vdash \dleq{\pdual{\alpha};\pdual{{\prepeat{\alpha}}};L}{L}$,
more simply $\G \vdash \dleq{\{\pdual{\alpha};\pdual{{\prepeat{\alpha}}}\};L}{L}$ by \irref{seqassoc}.

This follows by \irref{refSeqG} because (L) $\G \vdash \dleq{\pdual{\alpha};\pdual{{\prepeat{\alpha}}}}{\pdual{{\prepeat{\alpha}}}}$ and
(R) $\cdot\dleq{L}{L}$.
Premiss (L) holds by \irref{unrollLref} and \irref{arefChoiceR2}.
Premiss (R) holds by \irref{refRefl} $\dleq{L}{L}$.

\mycase \irref{dloopI}
\begin{align*}
&\pinline[\{\pdual{{\prepeat{\alpha}}};L\}]{\efor{M}{N}{O}}   = \prepeat{\{\ptest{\met \metgr 0}; \{\pinline[\{\pdual{\alpha}\}]{N}\}\}};\ptest{\met = \metz}; \{\pinline[L]{O}\}
\end{align*}
In this case we use the following abbreviations:
\begin{align*}
  \hat{\alpha} &\equiv \ptest{\met \metgr 0}; \{\pinline[\pdual{\alpha}]{N}\}\\
  \beta        &\equiv \ptest{\met = \metz}\\
  \gamma       &\equiv \pdual{{\prepeat{\alpha}}}
\end{align*}

We will rely on the following refinement rule.
Its soundness proof is presented as a case of  the main soundness proof, but is technically part of the present simultaneous induction.
We present this rule only in the appendix because it is much more special-case than the other refinement rules.

\[\cinferenceRule[loopInline|R{$\langle{*}\rangle$}]{}
{\linferenceRule[formula]
  {\G \vdash J
  & J, \met_0 = \met \metgr \metz \vdash \ddiamond{\alpha}{(J \land \met_0 \metgr \met)}}
  {\proves{\G}{}{\dleq{\prepeat{\hat{\alpha}};\beta}{\gamma}}}
}{}\]
Where $J$ and $\met$ are per the proof (A) $\ddiamond{\prepeat{\alpha}}{\phi}$.
From (A) we have
(A1) $\G \vdash J$ and
(A2) $J, \met_0 = \met \metgr \metz \vdash \ddiamond{\alpha}{(J \land \met_0 \metgr \met)}$ and
(A3)  $J, \met = \metz \vdash \dbox{L}{\phi}$.

We repack (B) $\ddiamond{\prepeat{\alpha}}{(J \land \met = \metz)}$ by \irref{dloopI} on (A1) and (A2) proving the postcondition by \irref{hyp}.

The proof starts with \irref{refSeq} on system $\prepeat{\hat{\alpha}};\beta$, with premisses
(L) $\G \vdash \dleq{\prepeat{\hat{\alpha}};\beta}{\pdual{{\prepeat{\alpha}}}}$ and
(R) $\G \vdash \dbox{\prepeat{\hat{\alpha}};\beta}{\dleq{\pinline[L]{O}}{L}}$.

Premiss (L) is  by \irref{loopInline} on (A1) and (A2).
We show (R).
By \rref{thm:transfer} on (B) then
$\G \vdash \dbox{\pinline[\pdual{{\prepeat{\alpha}}}]{B}}{(J \land \met = \metz)}$
and by inspection $\pinline[\pdual{{\prepeat{\alpha}}}]{B} = \pinline[\pdual{{\prepeat{\alpha}}}]{A}$ so
(C) $\G \vdash \dbox{\prepeat{\hat{\alpha}};\beta}{(J \land \met = \metz)}$.
By IH on (A3) have $J, \met = \metz \vdash \dleq{\pinline[L]{O}}{L}$.
Then by \irref{mon} on (C) and (A3) have $\G \vdash \dbox{\prepeat{\hat{\alpha}};\beta}{\dleq{\pinline[L]{O}}{L}},$ satisfying the premiss and completing the case.

\mycase \irref{dsolve}
In this case the refinement is
\begin{align*}
&\pinline[\{\humod{t}{0};\pdual{\pevolvein{\D{t}=1,\D{x}=f}{\ivr}};L\}]{\eas{d}{sol}{dom}{M}} =\\
&\quad\{\humod{t}{d}; \humod{x}{sln}; \humod{\D{x}}{f};\humod{\D{t}}{1}\}; \{\pinline[L]{M}\}
\end{align*}
By premisses of \irref{dsolve} have
  (D1) $\G \vdash d \geq 0$.
  (D2) $\G \vdash ((\lforall[{[0,d]}]{r}{\ddiamond{\humod{t}{r};\humod{x}{sln}}{\psi}}))$
and (D3) $\ddiamond{\humod{t}{d};\humod{x}{sln};\humod{\D{x}}{f}}$.
By \irref{assignCancel} suffices to show
$\G \vdash \dleq{\humod{t}{0}; \humod{t}{d};\humod{x}{sln}; \humod{\D{x}}{f}\}; \{\pinline[L]{M}\}}
                {\humod{t}{0};\pdual{\pevolvein{\D{t}=1,\D{x}=f}{\ivr}};L}$
The proof continues with applying \irref{refSeq} twice on systems $\humod{t}{0}$ and $\humod{t}{d};\humod{x}{sln}; \humod{\D{x}}{f}$.
The first premiss is $\dleq{\humod{t}{0}}{\humod{t}{0}}$ which proves trivially by \irref{refRefl}.
The second premiss is $\G \vdash \dbox{\humod{t}{0}}{\dleq{\humod{t}{d};\humod{x}{sln}; \humod{\D{x}}{f};\humod{\D{t}}{1}}{\pdual{\pevolvein{\D{t}=1,\D{x}=f}{\ivr}}}}$ and by \irref{asgnI} and  $t \notin \freevars{d}$ suffices to show
(2)  $\G,t=0 \vdash \dleq{\humod{t}{d};\humod{x}{sln}; \humod{\D{x}}{f}}{\pdual{\pevolvein{\D{t}=1,\D{x}=f}{\ivr}}}$.
By cutting in (D1) suffices to show
  $\G,t=0,d\geq 0 \vdash \dleq{\humod{t}{d};\humod{x}{sln}; \humod{\D{x}}{f};\humod{\D{t}}{1}}
 {\pdual{\pevolvein{\D{t}=1,\D{x}=f}{\ivr}}}$.
By  \irref{refSolve} suffices to show
${\proves{\G}{}{\dbox{\prandom{t};\ptest{0 \leq t \leq d};\humod{x}{sln}}{\ivr}}}$
which follows from (D2) because
by expanding defined syntax, have
$\G \vdash (\dbox{\prandom{r};\ptest{0 \leq r \leq d};\humod{t}{r};\humod{x}{sln}}{\psi})$
and by renaming have
$\G \vdash (\dbox{\prandom{t};\ptest{0 \leq t \leq d};\humod{t}{t};\humod{x}{sln}}{\psi})$
which cancels (\irref{nopAssign}) to
$\G \vdash (\dbox{\prandom{t};\ptest{0 \leq t \leq d};\humod{x}{sln}}{\psi})$
since $sln$ is the solution  by side condition.
This completes the second premiss.

The third premiss is
$\G \vdash \dbox{\humod{t}{0}}{\dbox{\pdual{\pevolvein{\D{t}=1,\D{x}=f}{\ivr}}}{(\dleq{\pinline[L]{M}}{L})}}$.
The IH gives (IH) $\tsub{\G}{t}{t0}\tsub{\,}{x}{x0}, t = d, x = sln, \D{x}=f \vdash \dleq{\pinline[L]{M}}{L}$.
The third premiss follows immediately by \irref{bsolve}, reusing d, sol, and dom.

We now give the Demon cases.

\mycase \irref{bchoiceI}
\[\pinline[\{\{\alpha\cup\beta\};L\}]{(M,N)}
= \pinline[\{\alpha;L\}]{M} \cup \pinline[\{\beta;L\}]{N}\]
By \irref{seqdistr} and \irref{refTrans} it suffices to show
\[\dleq{\pinline[\{\alpha;L\}]{M} \cup \pinline[\{\beta;L\}]{N}}{\{\{\alpha;L\} \cup \{\beta;L\}\}}\]
By \irref{drefChoiceR} it suffices to show
    (L) $\dleq{\pinline[\{\alpha;L\}]{M} \cup \pinline[\{\beta;L\}]{N}}{\alpha;L}$
and (R) $\dleq{\pinline[\{\alpha;L\}]{M} \cup \pinline[\{\beta;L\}]{N}}{\beta;L}$.

The IH on $\alpha$ gives $\G \vdash \dleq{\pinline[\{\alpha;L\}]{M}}{\alpha;L},$ then \irref{drefChoiceL1} gives $\dleq{\pinline[\{\alpha;L\}]{M} \cup \pinline[\{\beta;L\}]{N}}{\pinline[\{\alpha;L\}]{M}},$ proving (L) by \irref{refTrans}.

The IH on $\beta$ gives $\G \vdash \dleq{\pinline[\{\beta;L\}]{N}}{\alpha;L},$ then \irref{drefChoiceL2} gives $\dleq{\pinline[\{\alpha;L\}]{M} \cup \pinline[\{\beta;L\}]{N}}{\pinline[\{\beta;L\}]{N}},$ proving (R) by \irref{refTrans}.

\mycase \irref{btestI}
\[\pinline[\{\ptest{\psi};L\}]{(\elam{\pvx}{\psi}{M})} = \ptest{\psi};\{\pinline[L]{M}\}\]
By \irref{refSeq} on system $\ptest{\psi}$ it suffices to show $\G \vdash \dleq{\ptest{\psi}}{\ptest{\psi}}$ (which holds immediately by \irref{refRefl})
and $\G \vdash \dbox{\ptest{\psi}}{(\dleq{\pinline[L]{M}}{L})}$ which follows by \irref{btestI} from the IH on $L$, i.e., from
$\G,\pvx:\psi \vdash \dleq{\pinline[L]{M}}{M}$, since $x_0$ was arbitrary.

\mycase \irref{brandomI}
\begin{align*}
&\pinline[\{\prandom{x};L\}]{(\elam{x}{\reals}{M})}
= \prandom{x}; \{\pinline[L]{M}\}
\end{align*}
By \irref{refSeq} on system $\prandom{x}$ it suffices to show $\G \vdash \dleq{\prandom{x}}{\prandom{x}}$ (which holds immediately by \irref{refRefl}) and
$\G \vdash \dbox{\prandom{x}}{(\dleq{\pinline[L]{M}}{L})}$ which follows by \irref{brandomI} from the IH on $L$, i.e., from
$\tsub{\G}{x}{x0} \vdash \dleq{\pinline[L]{M}}{M}$.

\mycase \irref{broll}
\[\pinline[\{\prepeat{\alpha};L\}]{(M,N)}
= \pinline[L]{M} \cup \pinline[\{\alpha;\prepeat{\alpha};L\}]{N}\]

By \irref{unrollLref}, \irref{seqdistr}, and \irref{refTrans}  it suffices to show
\[\G \vdash \dleq{\pinline[L]{M} \cup \pinline[\{\alpha;\prepeat{\alpha};L\}]{N}}{\ptest{\btt} \cup \{\alpha;\prepeat{\alpha}\};L}\]

By \irref{drefChoiceR} it suffices to show
    (L) $\dleq{\pinline[L]{M} \cup \pinline[\{\alpha;\prepeat{\alpha};L\}]{N}}{L}$
and (R) $\dleq{\{\pinline[L]{M} \cup \pinline[\{\alpha;\prepeat{\alpha};L\}]{N}\}}{\{\alpha;\prepeat{\alpha};L\}}$.

By inversion have $\G \vdash \dbox{L}{\phi}$ and $\G \vdash \dbox{\alpha;\prepeat{\alpha};L}{\phi},$ so by the IH have
(0) $\G \vdash \dleq{\pinline[L]{M}}{L}$ and
(1) $\G \vdash \dleq{\pinline[\{\alpha;\prepeat{\alpha};L\}]{N}}{\pinline[\{\alpha;\prepeat{\alpha};L\}]{N}}$

From (0) then \irref{drefChoiceL1} gives $\dleq{\pinline[L]{M} \cup \pinline[\{\alpha;\prepeat{\alpha};L\}]{N}}{\pinline[L]{M}},$ proving (L) by \irref{refTrans}.

From (1) then \irref{drefChoiceL2} gives $\dleq{\pinline[L]{M} \cup \pinline[\{\alpha;\prepeat{\alpha};L\}]{N}}{\pinline[\{\alpha;\prepeat{\alpha};L\}]{N}},$ proving (R) by \irref{refTrans}.

\mycase \irref{bloopI}
\begin{align*}
\pinline[\{\pdual{{\prepeat{\alpha}}};L\}]{\erep{A}{B}{\pvx:J}{C}}  =  \prepeat{\{\pinline[\alpha]{B}\}}; \{\pinline[L]{C}\}
\end{align*}

By \irref{refSeq} on system $\prepeat{\{\pinline[\alpha]{B}\}}$, suffices to show
(L) $\G \vdash \dleq{\prepeat{\{\pinline[\alpha]{B}\}}}{\prepeat{\alpha}}$
(R) $\G \vdash \dbox{\prepeat{\{\pinline[\alpha]{B}\}}}{(\dleq{\pinline[L]{C}}{L})}$.
To show (L), apply \irref{refUnloop} and show its premiss $\G \vdash \dbox{\pinline[\alpha]{B}}{(\dleq{\pinline[\alpha]{B}}{\alpha})}$.
Show this by \irref{bloopI} with invariant $J$.
The base case and inductive case are respectively by (A) and \rref{thm:transfer} on (B), giving $J \vdash \dbox{\pinline[\alpha]{B}}{J}$.
The post-case $J \vdash \dleq{\pinline[\alpha]{B}}{\alpha}$ is the IH on (B).

To show (R), apply the IH on (C) to get $J \vdash \dleq{\pinline[L]{C}}{L}$.
Then apply \irref{bloopI} with invariant $J$. The base case is (A) and the inductive step is by \rref{thm:transfer} on (B), giving $J \vdash \dbox{\pinline[\alpha]{B}}{J}$.
The post-case $J \vdash \dleq{\pinline[L]{C}}{L}$ is (C).

\mycase \irref{dc}
\begin{align*}
\pinline[\{\pevolvein{\D{x}=f}{\phi};L\}]{\edc{Show:\psi}{\Use}} = \{\pinline[\{\pevolvein{\D{x}=f}{\phi\land\psi};L\}]{\Use}\}
\end{align*}
By inversion on DC proof have $\G \vdash Show : \dbox{\pevolvein{\D{x}=f}{\phi}}{\psi}$ so by \irref{refDC}
(0) have $(\pevolvein{\D{x}=f}{\phi}) \liso (\pevolvein{\D{x}=f}{\phi \land \psi})$.
By IH have $\G \vdash \dleq{\pinline[\{\pevolvein{\D{x}=f}{\phi\land\psi};L\}]{\Use}}{\pevolvein{\D{x}=f}{\phi\land\psi};L}$ so by \irref{refTrans} it suffices to show $\G \vdash \dleq{\pevolvein{\D{x}=f}{\phi\land\psi};L}{\pevolvein{\D{x}=f}{\phi};L}$ and by \irref{refSeq}  on system $\pevolvein{\D{x}=f}{\phi\land\psi}$ it suffices to show
(L) $\G \vdash \dleq{\pevolvein{\D{x}=f}{\phi\land\psi}}{\pevolvein{\D{x}=f}{\phi}}$ and
(R) $\G \vdash \dbox{\pevolvein{\D{x}=f}{\phi\land\psi}}{(\dleq{L}{L})}$.
Premiss (L) follows from \irref{refDC} applied to fact (0).
Premiss (R) follows by \irref{refRefl} under an application of \irref{mon} on  (Use).

\mycase \irref{dw}
\begin{align*}
&\pinline[\{\pevolvein{\D{x}=f}{\ivr};L\}]{\edw{M}} = \{\prandom{x};\humod{\D{x}}{f}; \ptest{\ivr}; \{\pinline[L]{M}\}\}
\end{align*}
From DW proof have (0) $\ivr \vdash M : \dbox{L}{\phi}$.
By IH have (1) $\ivr \vdash \dleq{\pinline[L]{M}}{L}$.
It suffices by \irref{refTrans} to show
(L) $\G \vdash \dleq{\prandom{x};\humod{\D{x}}{f}; \ptest{\ivr}}{\pevolvein{\D{x}=f}{\ivr}}$ and
(R) $\G \vdash \dbox{\prandom{x};\humod{\D{x}}{f}; \ptest{\ivr}}{(\dleq{\pinline[L]{M}}{L})}$.
Premiss (L) is by \irref{refDW}.
Premiss (R) follows from (1) by \irref{mon} and because $\G \vdash \dbox{\prandom{x};\humod{\D{x}}{f}; \ptest{\ivr}}{\ivr}$ by \irref{seqI}, \irref{drandomI}, \irref{asgnI}, \irref{dtestI}, and \irref{hyp}.

\mycase \irref{dg}
\begin{align*}
&\pinline[\{\pevolvein{\D{x}=f}{\ivr};\pdual{\prandom{y};\prandom{\D{y}}};L\}]{\edg{f_0}{a}{b}{\Post}} = \\
& \humod{y}{f_0};\{\pinline[\{\pevolvein{\D{x}=f,\D{y}=a(x)y+b(x)}{\ivr};L\}]{\Post}\}
\end{align*}
We prove the case by transitivity \irref{refTrans}
\begin{align}
         & \humod{y}{f_0};\{\pinline[\{\pevolvein{\D{x}=f,\D{y}=a(x)y+b(x)}{\ivr};L\}]{\Post}\} \\
\label{eq:ih} \kwdleq  & \humod{y}{f_0};\{\pevolvein{\D{x}=f,\D{y}=a(x)y+b(x)}{\ivr};L\} \\
\label{eq:assoc1} \kwdleq  & \{\humod{y}{f_0};\pevolvein{\D{x}=f,\D{y}=a(x)y+b(x)}{\ivr}\};L \\
\label{eq:dg}\kwdleq  & \{\pevolvein{\D{x}=f}{\ivr};\pdual{\{\prandom{y};\prandom{\D{y}}\}}\};L \\
\label{eq:assoc2}\kwdleq  & \pevolvein{\D{x}=f}{\ivr};\pdual{\{\prandom{y};\prandom{\D{y}}\}};L
\end{align}
Step \rref{eq:ih} follows from \irref{refSeq} on system $\humod{y}{f_0}$ and the IH on $\Post$.
By inversion  $\G,y=f_0 \vdash \Post : \dbox{\pevolvein{\D{x}=f,\D{y}=a(x)y+b(x)}{\ivr};L}{\phi}$ with side condition that $y$ is fresh (including $y \notin \freevars{f_0}$, so by IH have
$\G,y=f_0 \vdash \dleq{\pinline[\{\pevolvein{\D{x}=f,\D{y}=a(x)y+b(x)}{\ivr};L\}]{\Post}}{\pevolvein{\D{x}=f,\D{y}=a(x)y+b(x)}{\ivr};L}$.
The first premiss of \irref{refSeq} is $\G \vdash \dleq{\humod{y}{f_0}}{\humod{y}{f_0}}$ which holds by \irref{refRefl}.
The second premiss is 
\[\G \vdash \dbox{\pinline[\{\pevolvein{\D{x}=f,\D{y}=a(x)y+b(x)}{\ivr};L\}]{\Post}}{\pevolvein{\D{x}=f,\D{y}=a(x)y+b(x)}{\ivr};L}\] which follows from the IH by \irref{asgnI} and the freshness condition on $y$.
Steps \rref{eq:assoc1} and \rref{eq:assoc2} hold by \irref{seqassoc}.
Step \rref{eq:dg} holds by \irref{refTrans}.
The first premiss $\G \vdash 
\dleq{\{\humod{y}{f_0};\{\pevolvein{\D{x}=f,\D{y}=a(x)y+b(x)}{\ivr}\}\}}
    {\{\pevolvein{\D{x}=f}{\ivr};\pdual{\prandom{y};\prandom{\D{y}}}\}}$ holds by
\irref{refDG}.
The second premiss 
\[\G \vdash \dbox{\humod{y}{f_0};\pevolvein{\D{x}=f,\D{y}=a(x)y+b(x)}{\ivr}}{(\dleq{L}{L})}\] holds by \irref{seqI}, \irref{asgnI}, \irref{dw}, and \irref{refRefl}.


\mycase \irref{bsolve}
In this case the refinement is
\begin{align*}
&\pinline[\{\humod{t}{0};\pevolvein{\D{t}=1,\D{x}=f}{\ivr};L\}]{\eds{sol}{\elamu{d\,dom}{\Post}}} =\\
&\quad \humod{t}{0};\pevolvein{\D{t}=1,\D{x}=f}{\ivr};\{\pinline[L]{\Post}\}
\end{align*}
The proof starts with applying \irref{refSeq} twice on systems $\humod{t}{0}$ and $\pevolvein{\D{t}=1,\D{x}=f}{\ivr}$.
The first premiss proves trivially $\dleq{\humod{t}{0}}{\humod{t}{0}}$ by \irref{refRefl}.
Second premiss is $\G \vdash \dbox{\humod{t}{0}}{(\dleq{\pevolvein{\D{t}=1,\D{x}=f}{\ivr}}{\pevolvein{\D{t}=1,\D{x}=f}{\ivr}})}$
which also proves trivially by \irref{refRefl}.
The third premiss is
\[\G \vdash \dbox{\humod{t}{0}}{\dbox{\pevolvein{\D{t}=1,\D{x}=f}{\ivr}}{\dleq{\pinline[L]{M}}{L}}}\]
The IH gives (IH) $\tsub{\G}{t}{t0}\tsub{\,}{x}{x0}, t \geq 0, \lforall[{[0,t]}]{s}{\dbox{\humod{t}{s}}{\ivr}}, x = sln, \D{x}=f
\vdash \dleq{\pinline[L]{M}}{L}$.
The third premiss follows immediately by \irref{bsolve}, reusing sol.
\end{proof}



\end{document}